\Crefname{equation}{Eq.}{Eqs.}
\renewcommand{\cite}[1]{\citep{#1}}
\newcommand{\hideinrestate}[1]{\ifthenelse{\boolean{isrestating}}{}{#1}}
\newtheorem{thm}{_} 
\newtheorem{corollary}[thm]{Corollary}
\newtheorem{claim}{Claim}
\theoremstyle{definition}
\newtheorem{definition}{Definition}
\newenvironment{talign*}
 {\csname align*\endcsname}
 {\endalign}
\newcommand{\poly}{\operatorname{poly}}
\newcommand{\init}{\mathrm{init}}
\newcommand{\rmax}{r_{\max}}
\newcommand{\rmin}{r_{\min}}
\newcommand{\ain}{\mathtt{in}} 
\newcommand{\aout}{\mathtt{out}} 
\newcommand{\vv}{\mathbf{v}}
\newcommand{\xcalV}{\widehat{\calV}}
\newcommand{\xPhi}{\widehat{\Phi}}
\newcommand{\tPhi}{\xPhi^{\mathrm{TB}}}
\newcommand{\conv}{\operatorname{conv}}
\newcommand{\neigh}{\operatorname{neigh}}
\newcommand{\bellman}{\operatorname{Bellman}}
\newcommand{\Gsub}{\calG_\textnormal{subs}}
\newcommand{\Gtrun}{\calG_\textnormal{trun}}
\renewcommand{\emptyset}{\varnothing}
\newcommand{\va}{\mathbf{a}}
\newcommand{\bb}{\mathbf{b}}
\newcommand{\vg}{\mathbf{g}}
\newcommand{\rr}{\mathbf{r}}
\newcommand{\uu}{\mathbf{u}}
\newcommand{\ww}{\mathbf{w}}
\newcommand{\xx}{\mathbf{x}}
\newcommand{\yy}{\mathbf{y}}
\newcommand{\zz}{\mathbf{z}}
\newcommand{\bfA}{\mathbf{A}}
\newcommand{\calB}{\mathcal{B}}
\newcommand{\calC}{\mathcal{C}}
\newcommand{\calF}{\mathcal{F}}
\newcommand{\calG}{\mathcal{G}}
\newcommand{\calV}{\mathcal{V}}
\DeclareMathOperator*{\argmax}{arg\,max}
\DeclareMathOperator{\E}{\mathbb{E}}
\newcommand{\given}{{\,|\,}}
\title{{\bf Value-Set Iteration}: Computing Optimal Correlated Equilibria in Infinite-Horizon Multi-Player Stochastic Games}
\author[1]{Jiarui Gan}
\author[2]{Rupak Majumdar}
\affil[1]{University of Oxford}
\affil[2]{Max Planck Institute for Software Systems (MPI-SWS)}
\date{}
\begin{document}

\maketitle

\begin{abstract}
We study the problem of computing optimal correlated equilibria (CEs) in infinite-horizon multi-player stochastic games, where correlation signals are provided over time.
In this setting, optimal CEs require history-dependent policies; this poses new representational and algorithmic challenges as the number of possible histories grows exponentially with the number of time steps.
We focus on computing $(\epsilon, \delta)$-optimal CEs---solutions that achieve a value within $\epsilon$ of an optimal CE, while allowing the agents' incentive constraints to be violated by at most $\delta$.
Our main result is an algorithm that computes an $(\epsilon,\delta)$-optimal CE in time polynomial in $1/(\epsilon\delta(1 - \gamma))^{n+1}$, where $\gamma$ is the discount factor, and $n$ is the number of agents.
For (a slightly more general variant of) turn-based games, we further reduce the complexity to a polynomial in $n$. 
We also establish that the bi-criterion approximation is necessary by proving matching inapproximability bounds.

Our technical core is a novel approach based on \emph{inducible value sets}, which leverages a compact representation of history-dependent CEs through the values they induce to overcome the representational challenge. 
We develop the \emph{value-set iteration} algorithm---which operates by iteratively updating estimates of inducible value sets---and characterize CEs as the greatest fixed point of the update map. Our algorithm provides a groundwork for computing optimal CEs in general multi-player stochastic settings.
\end{abstract}

\section{Introduction}

\begin{quote}
{\em
A finite game is played for the purpose of winning, an infinite game for the purpose of continuing the play.}

\hfill--- James P. Carse, Finite and Infinite Games 
\end{quote}

\noindent
Stochastic games (or Markov games), introduced by \citet{shapley1953stochastic}, are dynamic games that evolve in time under probabilistic state transitions. In each time step, players' actions determine both the immediate payoffs and the next state of the game. This framework generalizes Markov decision processes (MDPs) to the multi-player setting, capturing the interplay of strategic behavior and uncertainty.

In this paper, we study the problem of computing {\em optimal correlated equilibria} (CEs) in $n$-player, general-sum, stochastic games.
Without loss of generality, we view the game as one played between a coordinator---referred to as the {\em principal}---and a set of {\em agents}.
At each time step, the principal selects a joint action and recommends each agent to play the corresponding action through a private communication channel.
The agents receive the recommendations and, simultaneously and independently, each decides an action to play, possibly one different from the recommendation.
Jointly, the agents' actions yield a reward for every player (including the principal), and the actions result in the environment transitioning to a new state. 
When the agents are all incentivized to play the recommended actions throughout the game, the distributions from which the principal draws the joint actions form a correlated equilibrium (CE) (among the agents). 
Our objective is to compute an \emph{optimal} CE, one that maximizes the principal's value, or cumulative reward, among all CEs.

More precisely, we consider the {\em extensive-form} CE (EFCE) \cite{von2008extensive} in this paper, where recommendation signals are provided step by step. 
Each signal recommends only the action for the current time step. This CE concept is different from the {\em normal-form} CE, where recommendation signals are provided all at once at the beginning of the game, each indicating an entire sequence of actions for a player to perform, from the beginning to the end of the game. 
The EFCE is a more suitable solution concept for scenarios where decision-making is interleaved with information exchange, which is a more common practice.
For simplicity, we will refer to EFCEs as CEs throughout. 


\colorlet{mygray}{gray!90!blue}
\definecolor{mywhite}{RGB}{255,255,255}
\colorlet{myblue}{blue!90!cyan}

\tikzset{
->, 
node distance=1.7cm, 
every state/.style={very thick, fill=mywhite, minimum size=7mm},
every edge/.append style={thick},
initial text= ,
}

\tikzstyle{labelOnEdge}=[fill=white,minimum height=3mm,minimum width=4mm,inner sep=1mm,text width=3mm]
\tikzstyle{myTermState}=[state, thick, accepting, minimum size=4mm,text width=0mm]
\tikzstyle{principal}=[fill=mydarkgray, text=white, draw=black, font=\normalsize] 


\begin{figure}
\centering
\subfigure[~]{
\renewcommand{\arraystretch}{1.5}
\begin{tikzpicture}[baseline={(current bounding box.center)}, node distance=1.6cm, scale=1, transform shape]

\node[state, initial, initial where=above, inner sep=0pt, text width=6mm, align=center] (s) {$s$};
\node[myTermState, below left of=s, xshift=2mm] (t) {};

\draw 
(s) edge[] node[left, anchor=east, yshift=3mm]{} (t)
(s) edge[out=-60, in=-25, looseness=11] node[right]{$(C,C)$} (s)
;
\end{tikzpicture}
\qquad
\small
\begin{tabular}{ccc}
                         & $C$                      & $D$                      \\ \cline{2-3} 
\multicolumn{1}{l|}{$C$} & \multicolumn{1}{c|}{$~1,\,1~$} & \multicolumn{1}{c|}{$-1,\, 2$} \\ \cline{2-3} 
\multicolumn{1}{l|}{$D$} & \multicolumn{1}{c|}{$2,\,-1$} & \multicolumn{1}{c|}{$~0,\,0~$} \\ \cline{2-3} 
\end{tabular}
}

\subfigure[~]{
\begin{tikzpicture}[baseline={(current bounding box.center)}, node distance=1.6cm, scale=1, transform shape]
\tikzstyle{every node}=[font=\small] 

\node[state, initial, initial where=above] (s1) {$s_1$};
\node[myTermState, below left of=s, xshift=3mm] (t1) {};
\node[state, right of=s1, xshift=10mm] (s2) {$s_2$};
\node[myTermState, below left of=s2, xshift=3mm] (t2) {};
\node[state, right of=s2, xshift=18mm] (sk) {$s_t$};
\node[myTermState, right of=sk, , xshift=8mm] (tt) {};
\node[myTermState, below left of=sk, xshift=3mm] (tk) {};

\draw 
(s1) edge[] node[above]{$(C,C)$} (s2)
(s2) edge[] node[labelOnEdge, text width=8mm, align=center] {\Large $\dots$} (sk)
(s1) edge[] (t1)
(s2) edge[] (t2)
(sk) edge[] (tk)
(sk) edge[] node[above]{$(C,C)$} (tt)
;
\end{tikzpicture}
\vspace{3mm}
}
\vspace{4mm}
\caption{(a) A game that continues if both players cooperate ($C$) and terminates if some of them defects ($D$). The immediate rewards are given in the matrix on the right.
(b) A finite-horizon approximation of the game in (a) involving $t$ time steps. 
If both players adopt the discount factor $\gamma = 2/3$, then playing $(C,C)$ throughout can be sustained as an equilibrium in (a). 
In contrast, any finite-horizon approximation of this game cannot sustain $(C,C)$ as a correlated equilibrium.
This can be seen by induction: In the last time step, $D$ strictly dominates $C$, so $(D,D)$ is the only possible equilibrium; in turn, given that $(D,D)$ will be played in step $t$, the same conclusion can be drawn for step $t-1$. As a result, only $(D,D)$ can be sustained as an equilibrium in (b), no matter how large $t$ is. 
}
\label{fig:intro}
\end{figure}
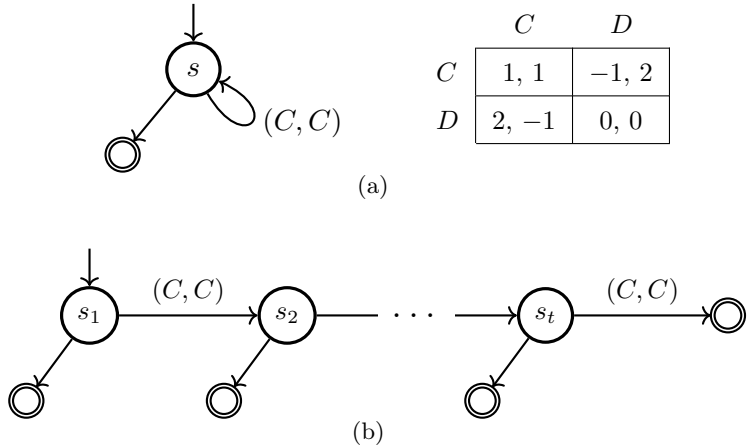

Another fundamental aspect of our work is the consideration of \emph{infinite-horizon} games, which naturally subsume their finite-horizon counterparts. 
The benefit of considering an infinite-horizon model over a finite one may seem marginal in many problem settings. 
Under reward discounting in particular, the discrepancy between the two diminishes at an exponential rate as the time horizon increases. 
This perspective has made finite-horizon models a convenient and widely adopted choice across various fields. (In reinforcement learning, for example, episodic settings---where each episode consists of a repeated finite-horizon game---are fairly common.)
While this has worked well in single-agent or fully cooperative settings, finite-horizon models may fail to serve as faithful approximations to their infinite-horizon counterparts in multi-player settings. 
When incentives are misaligned and players may deviate at any time step, even a minor shift in the distant future may trigger a ``chain reaction'' that propagates all the way backward to the present and eventually alters the outcome substantially.
\Cref{fig:intro} provides a concrete example to illustrate such issues, where the equilibrium payoffs in the original infinite-horizon game and its finite-horizon approximation differ by a constant gap, no matter how many time steps the latter spans.

As manifested through the example, the finite-horizon approximation fails due to a lack of long-term vision beyond the last time step.
We can remedy this by incorporating into the finite-horizon model an estimate of the continuation values---values that could have been yielded had the process continued. 
Indeed, accurately estimating these values is key to the computational approach we propose in this paper. 
Underestimating the values can lead to an overly narrow view of the equilibrium space and, consequently, suboptimality of the equilibria computed. 
On the other hand, overestimation risks creating ``bubbles'' that cannot be sustained if the game continues.

At a high level, this approach is similar to many other value-based methods, which operate by evaluating values of the states. 
The difference is that, while classical value iteration estimates a single value for each state, we need to maintain a {\em set} of values for each state as we no longer focus on stationary policies.
As we will demonstrate, stationary policies---which assign a fixed distribution of (joint) actions to each state---is no longer without loss of optimality in our setting (unlike in MDPs or two-player zero-sum stochastic games); worse still, optimal stationary policies are computationally intractable.
Surprisingly though, both issues can be addressed by employing history-dependent policies. 
The main challenge is to find an effective representation of history-dependent policies to enable the computation as the number of possible histories grows exponentially with length of the time horizon.
The {\em value-set iteration} algorithm, which we develop as the main result of this paper, tackles this challenge.

\subsection{Our Results}

Key to our approach is the shift from stationary to history-dependent policies/CEs.\footnote{There is yet another type of {\em non}-stationary policies---often called {\em Markovian policies}---which, while still mapping the current state to an action distribution, allow this mapping to vary over time (i.e., they remember but only the time step). Similarly to stationary policies, Markovian policies are suboptimal and computationally intractable in our setting (see \Cref{thm:inapproximability}).\label{fn:markovian}}  
History-dependent policies extend stationary ones by considering the entire history of past play (including the current state), mapping each possible history to an action distribution.
While this extension makes history-dependent policies more powerful, it comes with a representational challenge: the number of possible histories grows exponentially with the horizon length, making explicit representations intractable even in the finite-horizon setting.
Given this barrier, our approach builds on a compact representation of history-dependent policies, which encodes each policy as the values they induce.
Additionally, we maintain for each state their {\em inducible value set}, which contains all the inducible values at that state.
With these inducible value sets, we can {\em unroll} any compactly represented history-dependent policy---that is, compute the distribution it assigns to any given history. 
Effectively, this allows us to execute a policy while computing the distributions required on the fly, even though it is infeasible to pre-compute and explicitly write down the entire policy.
Following this approach, the problem reduces to computing the inducible value sets. We introduce a {\em value-set iteration} algorithm to accomplish this task.

\paragraph{Value-Set Iteration}
Much like the {\em value iteration} method for solving MDPs, our value-set iteration algorithm iteratively refines an estimate of each inducible value set. 
However, the underlying dynamics are fundamentally different. 
Standard value iteration converges due to a {\em contraction} property of the Bellman operator, which ensures a unique fixed point via the Banach fixed-point theorem. 
Our update operator, however, lacks this property. To establish the convergence, we rely on a {\em monotonicity} property, which makes the inducible value sets a fixed point following Tarski's fixed-point theorem. 
Tarski's theorem asserts that the set of fixed points of such a monotonic operator forms a complete lattice, meaning in particular that it has a greatest fixed point. 
We show that the inducible value set is exactly the greatest fixed point and is always nonempty; moreover, it can be attained at the limit of value-set iteration, as the number of iterations approaches infinity.

\paragraph{Approximating Optimal CEs and Matching Inapproximability Results}
In practice, we terminate value-set iteration after a finite number of iterations, whereby the algorithm computes an {\em approximately fixed} point. 
We show that any vector in this approximately fixed point can be induced, up to an $\epsilon$ error, by a $\delta$-CE---one that may violate the agents' incentive constraints but by no more than a small $\delta$. 
As a result, the best achievable value within the approximately fixed point gives an $(\epsilon,\delta)$-optimal CE: its value is at most $\epsilon$ worse than the best {\em exact} CE, which satisfies all incentive constraints exactly without any violation. 
This approximation approach is the same as {\em resource augmentation} in approximation algorithm design, where we allow the algorithm to use $\delta$ more resources while still measuring its performance against the best possible solution with the original (unaugmented) resources.
We argue that such approximate solutions are the best one can hope for, given two inapproximability results we establish.
\begin{itemize}
\item 
First, representing (exact) optimal policies may require irrational numbers, and the gap between optimal policies and the best policy involving only rational numbers can be arbitrarily large.

\item
Second, without resource augmentation, even to find a constant-factor approximation to the optimal solution is NP-hard.
\end{itemize}

For games with $n$ agents, our approximation algorithm computes an $(\epsilon,\delta)$-optimal CE in time polynomial in the size of the game instance and $(\epsilon \delta (1-\gamma))^{-(n+1)}$, for any desired accuracy parameters $\epsilon$ and $\delta$, and discount factor $\gamma$.
Hence, for constant $\epsilon$, $\delta$, and $\gamma$, the time complexity is polynomial in the input size if the game is given in the matrix form---whereby rewards and transition probabilities are enumerated explicitly for every joint action, so the size of the representation is already exponential in $n$.
Indeed, due to the exponential growth of the joint action space, succinct representation is typically preferred in multi-player games, where rewards and transition probabilities are specified through efficiently computable functions.
We thus investigate whether the exponential dependence on $n$ in our approach can be overcome in some classes of succinctly represented games.

\paragraph{Faster Algorithm for Turn-Based Games}
We show that in a slightly more general class of {\em turn-based games} that allows a constant number of players to act simultaneously, the time complexity of value-set iteration can be reduced to a polynomial in $n$.
We present algorithms based on the concept of {\em $\lambda$-memory meta-game}, where each meta-state tracks not only the current state of the original game but also states in the previous $\lambda$ steps for a suitably chosen constant $\lambda$. 
This transformation allows us to effectively approximate the inducible value set while reducing the dimension of the value space from $n$ to a constant.

\medskip

We remark that while our primary goal is computing an {\em optimal} CE, our algorithms readily solve the problem of computing one (not necessarily optimal) CE, which is of broad interest in the literature, too.
It remains open though whether it is possible to bound the time complexity of computing one CE within a polynomial in $n$, which has been shown possible in the one-shot setting for a wider range of succinctly represented games.
We provide a discussion about this open question in  \Cref{sec:conclusion}.

\subsection{Related Work}

Stochastic games have been studied extensively for decades, given their broad impact in fields such as control theory, reinforcement learning, and economics. 
There is a plethora of work exploring various subclasses of stochastic games under different problem settings and solution concepts.

\paragraph{Two-Player Zero-Sum Games}

In \citet{shapley1953stochastic}'s seminal paper, where he introduced stochastic games, he considered two-player zero-sum games. In these games, the set of CEs coincides with both the set of Nash equilibria and the set of minimax equilibria. Shapley proposed a value iteration procedure that iteratively updates the minimax value of the stage game. 
This procedure converges exponentially fast and produces a pair of stationary equilibrium strategies.
Nonetheless, exact equilibria may include irrational numbers, an issue we also examine in this paper. 
Solving the game exactly requires algebraic number computations and exponential-time algorithms have been developed \cite{hansen2011exact,oliu2021new}.
As a practical approach, we focus on near-optimal solutions instead. In more specific forms of two-player zero-sum games---those that are also turn-based in particular---an equilibrium of deterministic stationary strategies always exists and involves only rational numbers. For such games, \citet{hansen2013strategy} provided a strongly polynomial-time algorithm.

\paragraph{General-Sum Games}

Many of the appealing properties of two-player zero-sum games do not extend to multi-player general-sum settings. 
In the latter, stationary CEs no longer guarantee optimality. For example, ``tit-for-tat'' strategies require remembering players' actions in the past and cannot be implemented by using stationary strategies; such tactics are often necessary for achieving optimality in general-sum games.
Despite this limitation, stationary strategies remain a popular choice in the literature due to their simplicity. Computing a stationary Nash equilibrium is PPAD-complete, as it is even in one-shot games \cite{daskalakis2009complexity} and even with two players \cite{chen2009settling}. 
While CEs are tractable in one-shot games, \citet{daskalakis2023complexity} showed that, in stochastic games, computing a stationary CE is PPAD-hard, even under the weaker notion of \emph{coarse} CE. Moreover, computing optimal stationary (or constant-memory) CEs is NP-hard \cite{letchford2012computing}.
In sequential persuasion games (a slightly richer form of games involving private observations of the coordinator), \citet{gan2022bayesian} showed that an optimal CE-like solution, when restricted to a stationary one, is inapproximable. In this paper, we prove a similar inapproximability result under an even more relaxed approximation notion.

\paragraph{Finite-Horizon Games}

In finite-horizon games, once history-dependent strategies are allowed, computing \emph{one} CE is straightforward via backward induction: at each time step, select a CE of the stage game, based on the immediate rewards as well as the continuation values of the CEs selected for the subsequent time steps. 
Hence, a more interesting problem is to compute an \emph{optimal} CE. 
There has been recent work on computing an optimal CE of two-player general-sum \emph{turn-based} games \cite{zhang2023efficiently}, where an exact optimal solution has been shown tractable by way of querying the Pareto frontiers of the value set.
Similar CE-like solution concepts have also been studied in finite-horizon persuasion games \citet{gan2023sequentialv2,bernasconi2024persuading}, where near-optimal history-dependent solutions are shown to be tractable.

Extensive-form games (EFGs) are another common finite-horizon model, often featuring a game tree of limited depth. Compared to our stochastic game model, EFGs are easier in that the number of histories is bounded by the size of the game tree, so it does not grow exponentially with the size of the problem instance. But they can be more complicated as private information in EFGs may be invisible to a player across multiple steps. (In our model the entire interaction history becomes common knowledge at the end of each time step.)
Indeed, with such higher degrees of information asymmetry (which is not our focus), even in two-player EFGs computing an optimal CE (whether extensive-form or normal-form) has been shown to be NP-hard, although it becomes tractable if the game does not contain any chance nodes \cite{von2008extensive}. Meanwhile, finding \emph{one} EFCE remains tractable \cite{huang2008computing}, via methods similar to those for solving succinctly represented one-shot games \cite{papadimitriou2008computing}.
There have also been other recent works examining the computation of EFCE in various more specific types of EFGs \cite{farina2020polynomial,zhang2022optimal,zhang2022polynomial}.

\paragraph{Value-Set Iteration}
\citet{murray2007finding} introduced an algorithm for computing CEs, using a fixed point characterization similarly to ours.
However, they did not consider the approximation framework and relaxation necessary for finite-time convergence.
Later extensions by \citet{dermed2009solving} and \citet{macdermed2011quick} incorporated approximations to ensure finite-time convergence, 
but they lack formal analysis to substantiate the key claims, leaving the soundness and optimality of their methods uncertain.
\citet{kitti2016subgame} reintroduced the same iteration method as that by \citet{murray2007finding}, but did not address any algorithmic problems. 
Our approach resonates with these previous ideas while it provides a complete algorithmic analysis, including proofs of convergence and correctness, as well as a systematic study of suitable approximation criteria backed by matching inapproximability results. 
In the finite-horizon settings, value-set iteration simplifies to a dynamic programming approach that builds value sets backward from the last time step. This idea, along with related approaches that track value sets or Pareto frontiers, has appeared in several studies of finite-horizon models \cite{letchford2010computing,bovsansky2017computation,gan2023sequentialv2,zhang2023efficiently}.
However, new ideas were needed to extend to an infinite horizon.

\paragraph{Tarski Fixed Point}

Similarly to other value iteration methods, our value-set iteration approach converges to a fixed point of the update operator. It is therefore naturally associated with fixed-point theorems.
In our case, Tarski's fixed-point theorem \cite{tarski1955lattice} applies due to the monotonicity of the update operator. 
Recent work has investigated the computational aspects of this theorem and developed algorithms for computing Tarski fixed points \cite{etessami2019tarski,chen2022improved,dang2024computations}. 
These general algorithms can, in principle, be applied directly to computing equilibria in stochastic games.
In two-player zero-sum games, this yields exponential-time algorithms (in the number $m$ of states) and gives the best-known time complexity upper bound when the discount factor is also a variable in the problem input (so it can be exponentially close to $1$, rather than being a constant as in our case). 
A more recent improvement uses both contraction (which holds in the two-player zero-sum setting) and monotonicity to reduce the exponent \cite{batziou2024monotone}, though still leaving an exponential dependence in $m$.

As for the general-sum setting we study, several challenges arise. 
First, while the operator for two-player zero-sum games maps a vector encoding the players' values in $\mathbb{R}^n$ to another such vector (where $n$ is the number of players), in our case, the operator maps a {\em set} of vectors in $\mathbb{R}^n$ to another set.
To represent these sets, a common approach is to use convex polytopes, but this leads to a representation with dimensionality exponential in $n$.
Hence, direct applications of general algorithms for computing Tarski fixed-points result in {\em doubly exponential} time complexity in $n$ (and exponential in $m$).
In practice, this is less desirable than an algorithm with exponential dependence on $n$ under a constant discount factor.
Second, we need not just an arbitrary fixed point but a greatest one since our goal is to compute an optimal CE. This issue does not arise in two-player zero-sum games---where a unique fixed point is ensured due to contraction---but it becomes crucial in our multi-player general-sum setting and further complicates the problem.

\section{Preliminaries}

We consider a fairly standard model of stochastic games.   
Let there be $n+1$ players, including a principal (player $0$) and a set $N$ of $n$ agents (players $1,\dots,n$). The game is given by a tuple $\calG = \langle S, s_\init, A, p, \rr, \gamma \rangle$ consisting of:
a finite state space $S$, 
an initial state $s_\init \in S$, 
a finite action space $A$ for each player,
a discount factor $\gamma \in [0,1)$, 
a transition function $p$, and a set of reward functions $\rr = (r_0, r_1, \dots, r_n)$.
The transition function $p : S \times \bfA \to \Delta(S)$ defines the probability $p(s' \given s, \va)$ of the state transitioning from $s$ to $s'$ after a joint action $\va \in \bfA \coloneqq A^{n+1}$ is performed.
Each reward function $r_i: S \times A \to \mathbb{R}$ defines the reward $r_i(s,\va)$ for player~$i$ when a joint action $\va$ is performed at state $s$.

The game is played over an {\em infinite} horizon (which generalizes finite-horizon models). 
All players are far-sighted and aim to maximize the discounted sums of their individual rewards (under the $\gamma$ discount factor).
The principal plays the role of the coordinator in the game.
W.l.o.g., we allow the principal to perform actions too, just as the agents---in scenarios where this is not possible, it suffices to make the reward and transition functions invariant w.r.t. the principal's action.

The interaction within each time step $t$ is as follows:
\begin{itemize}
\item 
The principal selects a joint action $\va = (a_0, \dots, a_n) \in \bfA$ and recommends each agent $i$ to perform $a_i$.
Each recommendation $a_i$ is sent through a private channel to agent $i$, unobservable to other agents.

\item 
Then, simultaneously, the principal performs $a_0$, and every agent plays an action they prefer. 

\item 
The players observe the next state resulting from their joint action, as well as the actions performed by each other. 
\end{itemize}
The game has perfect recall: every player remembers the entire history.

\subsection{Correlation Policy and Correlated Equilibrium}

The principal commits to a correlation policy, which specifies how the joint actions are selected throughout the game. 
In the most generic form, a policy is history-dependent: it is a function $\pi: \Sigma \times S \to \Delta(\bfA)$,
where $\Sigma = \{\varnothing\} \cup \bigcup_{t=1}^{+\infty} (S \times \bfA^2)^t$ contains all possible sequences of the interaction history, with $\varnothing$ representing a special empty sequence.
Each sequence $\sigma = (s^1,\va^1, \bb^1; s^2,\va^2, \bb^2;\dots;s^\ell, \va^\ell, \bb^\ell) \in \Sigma$ records, for each time step $t \in \{1,\dots,\ell\}$, the state $s^t$, the joint action $\va^t$ recommended, and the joint action $\bb^t$ actually performed by the players.\footnote{We use semicolons in the notation to separate elements that belong to different time steps.} 
We denote by $|\sigma|= \ell$ the number of time steps involved in $\sigma$. 

Since $\pi(\sigma; s)$ is a distribution over $\bfA$, we denote by $\pi( \va \given \sigma; s)$ the probability of each $\va \in \bfA$ in this distribution.
A {\em stationary} policy $\pi$ is a special type of history-dependent policy where $\pi(\sigma;s) = \pi(\sigma';s)$ holds for all $\sigma,\sigma' \in \Sigma$ and $s \in S$.

\paragraph{Correlated Equilibrium}

The agents are not obliged to play actions recommended by the principal---they can only be incentivized to do so if the recommended actions are optimal with respect to their own objectives. 
Indeed, as a rule of thumb, it is without loss of optimality to consider {\em incentive compatible} (IC) policies, which always incentivize the agents to adhere to the recommended actions. 
The action distributions in an IC policy form precisely a CE.

More formally, consider a {\em deviation plan} $\rho: \Sigma \times S \times A \to A$ of an agent. 
When $\rho$ is adopted, the agent, upon being recommended an action $a$, plays the action $\rho(\sigma;s,a)$, based on the history $\sigma$ and the current state $s$.
For a joint action $\va \in \bfA$, we denote by $\va \oplus_i b$ the new joint action resulting from a unilateral deviation of agent $i$ to an action $b \in A$. 

The V- and Q-values induced by $\pi$ and a deviation plan $\rho$ of agent~$i$ (assuming all other players play according to $\pi$) are defined in order:
\begin{align}
\label{eq:V}
V^{\pi,\rho} \left(\sigma; s \right) 
&\coloneqq
\E_{\va \sim \pi(\cdot \given \sigma, s)} Q^{\pi,\rho} (\sigma; s, \va)
\\
\label{eq:Q}
Q^{\pi,\rho} (\sigma; s, \va) 
&\coloneqq
\rr \left(s,\, \va' \right) + 
\E_{s'\sim p\left(\cdot \given s,\, \va' \right)} \gamma  V^{\pi,\rho} \left(\sigma; s, \va'; s' \right), 
\end{align}
where $\va' = \va \oplus \rho(\sigma; s, a_i)$.
Note that both functions output vectors, and we denote by $Q_i^{\pi,\rho}$ and $V_i^{\pi,\rho}$ the $i$-th components of the vectors, which correspond to player $i$.
Following these definitions, the V-value captures the players' expected cumulative rewards from time step $|\sigma|+1$ onward, conditioned on the sequence $(\sigma; s)$, i.e.,
\[
V^{\pi,\rho} (\sigma; s ) = \E \Big( \sum_{t=|\sigma|+1}^\infty \gamma^{t - |\sigma| - 1}  \cdot \rr(s^t, \va^t) \,\Big|\, \sigma; s \Big),
\]
where the expectation is taken over the distribution of sequences induced by $\pi$ and $\rho$.

For simplicity, we omit the deviation plan and write 
$Q^{\pi} = Q^{\pi,\perp}$ and 
$V^{\pi} = V^{\pi,\perp}$,
for the special plan $\perp$ such that $\perp(\sigma,s,a) = a$ for all $\sigma,s,a$, which effectively means no deviation. 
With these notions, we define a CE as follows. 

\begin{definition}[$\delta$-CE]
\label{def:delta-IC}
A policy $\pi$ forms a {\em $\delta$-CE} (or {\em CE} for $\delta=0$), if and only if, for every agent $i \in \{1,\dots,n\}$ and every sequence $(\sigma; s, a) \in \Sigma \times S \times A$, the following condition holds for all possible deviation plans $\rho: \Sigma \times S \times A \to A$:
\begin{align}
\label{eq:IC-Q}
\sum_{\va \,:\, a_i = a}\, 
\pi( \va \given \sigma, s) \cdot
Q_i^{\pi}(\sigma;s,\va) 
\;\ge\;
\sum_{\va \,:\, a_i = a}\, 
\pi( \va \given \sigma, s) \cdot
Q_i^{\pi,\rho} (\sigma;s,\va) - \delta.
\end{align} 
\end{definition}

In other words, \Cref{eq:IC-Q} describes an IC condition based on the Q-values: whenever agent~$i$ is recommended to play $a$, they have no incentive (up to a $\delta$ tolerance) to deviate, given their belief about actions played by the other players conditioned on $a$.
The quantities on the two sides of \Cref{eq:IC-Q} are equal to agent~$i$'s expected values conditioned on being recommended $a$ (as \Cref{eq:V} conditioned on $a_i = a$), scaled by the marginal probability $\sum_{\va \,:\, a_i = a} \pi( \va \given \sigma, s)$. 
The value induced by a CE is defined as follows.

\begin{definition}[$\delta$-inducibility]
\label{def:delta-inducibility}
A value $\vv \in \mathbb{R}^{n+1}$ is said to be {\em induced by} a $\delta$-CE $\pi$ at state $s$ if $\vv = V^\pi(s)$.%
\footnote{For simplicity, we will often ombit $\varnothing$ and write $V^\pi(s) = V^\pi(\varnothing; s)$.} 
It is {\em $\delta$-inducible} (or {\em inducible} for $\delta=0$) if it can be induced by some $\delta$-CE.
\end{definition}

\paragraph{Optimal CE}
Our goal, from the principal's perspective, is to find a $\delta$-CE (ideally, $\delta=0$) that induces the maximum possible value for the principal (i.e., player~$0$) at the initial state $s_\init$; that is,
\[
\pi \in \argmax_{\pi' \in \calC_\delta} V_0^{\pi'}(s_\init).
\]
where $\calC_\delta$ denotes the set of all $\delta$-CEs.

Since $\pi$ defines a distribution for every sequence $(\sigma;s)\in \Sigma \times S$, it cannot be efficient if we aim to obtain an explicit representation of $\pi$, which lists every $\pi(\sigma;s)$. 
Hence, we consider the problem of computing $\pi(\sigma;s)$ for any given $(\sigma;s)$, and we measure the time complexity in terms of the size of the game as well as the length of the input sequence $\sigma$.
An algorithm that solves this problem allows one to execute $\pi$ while computing the distributions required for the execution on the fly. 

As we will demonstrate, optimal policies are inapproximable if we restrict probabilities in the policy to {\em rational} numbers. We discuss this issue in the next section and revise our objective accordingly to focus on rational representations. 

Similarly to the definition of a $\delta$-CE, we consider additive approximation errors in this paper, while the results can be extended to multiplicative approximation ratios too when all rewards are non-negative.
For additive errors to be meaningful, we assume that all rewards are normalized to be within the interval $[\rmin, \rmax] = [-\frac{1-\gamma}{2},\frac{1-\gamma}{2}]$ (otherwise, errors can be scaled up arbitrarily). 
This normalization is without loss of generality given that we consider a constant $\gamma$;
meanwhile, it ensures that the principal's cumulative reward is bounded in the constant interval $[-\frac{1}{2}, \frac{1}{2}]$ of length $1$.

\section{Inapproximability and Resource-Augmented Approximation}
\label{sc:inapprox-rational}

Unlike in finite-horizon games, when the horizon is infinite, irrational probabilities may arise in solutions to stochastic games (as noted by \citet{shapley1953stochastic}).
If there are other feasible solutions in the neighborhood of an irrational solution, we could potentially trade off a small $\epsilon$ in the principal's value for a near-optimal solution involving only rational numbers.
Unfortunately, as we demonstrate in \Cref{prp:irrational}, there are instances that do not admit any such near-optimal {\em rational} solutions. The gap between irrational and rational solutions can be arbitrarily large, making optimal $\delta$-CEs {\em completely inapproximable} by using rational numbers.

\begin{restatable}{theorem}{thmirrational}
\label{prp:irrational}
For any $\delta \in [0, 1)$, there exists a two-agent game in which $\max_{\pi \in \calC_\delta} V_0^{\pi}(s_\init) = 1/2$ and $\max_{\pi \in \calC_\delta^{\mathbb{Q}}} V_0^{\pi}(s_\init) = -1/2$, where $\calC_\delta^{\mathbb{Q}}$ denotes the set of $\delta$-CEs in which the probabilities are restricted to be in $\mathbb{Q}$.
\end{restatable}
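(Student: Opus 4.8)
The plan is to construct a small two-agent stochastic game whose unique (or value-optimal) $\delta$-CE forces a stationary distribution at the initial state satisfying a quadratic fixed-point equation with irrational roots, so that no rational distribution can come anywhere close to achieving the optimal value. The natural building block is a single "sink" state $s$ with a self-loop under one designated joint action and an escape transition to two absorbing terminal states, one giving the principal reward corresponding to value $+1/2$ and one giving $-1/2$, together with a $2\times 2$ stage game for the agents that makes the incentive constraints at $s$ bind in a way whose consistency with the continuation value $V_0^\pi(s)$ is quadratic in the loop probability. Concretely, I would let the probability $x$ of staying in $s$ enter both (a) the discounted continuation value through a term like $\gamma x \cdot V_0^\pi(s)$ and (b) the agents' IC constraints through the conditional beliefs induced by the recommendation distribution, so that feasibility of a $\delta$-CE pins down $V_0^\pi(s)$ as a solution of an equation of the form $v = $ (affine in $v$) $/$ (affine in $v$), i.e. a quadratic with discriminant not a perfect square in $\mathbb{Q}$.

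The key steps, in order, are: (1) write down the game --- states, the self-loop action, the escape transitions, the stage-game reward matrices for the two agents, and the principal's rewards at the terminal states --- choosing the numeric parameters (within the normalized reward range $[\rmin,\rmax]$) as functions of $\gamma$ and $\delta$; (2) using \Cref{eq:V}--\Cref{eq:Q} and \Cref{def:delta-IC}, show that at state $s$ any $\delta$-CE must use essentially a fixed stationary recommendation distribution, because history-dependence cannot help at a state with a single non-terminating action and the IC constraints leave a single degree of freedom; (3) substitute into the Bellman relation for $V_0^\pi(s)$ to obtain the quadratic, and verify its relevant root is irrational and equals $1/2$, while the best rational-feasible policy is forced to the escape-to-$-1/2$ action (or any rational perturbation violates IC by more than $\delta$), giving $\max_{\pi\in\calC_\delta^{\mathbb{Q}}}V_0^\pi(s_\init) = -1/2$; (4) handle the $\delta$ slack by making the quadratic's coefficients depend on $\delta$ so the irrational root persists for every $\delta\in[0,1)$, and check the two claimed extremal values. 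A clean way to route the value through $s_\init$ is to let $s_\init$ deterministically transition into $s$, so $V_0^\pi(s_\init) = \gamma\cdot V_0^\pi(s)$ up to rescaling, or simply identify $s_\init$ with $s$.

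The main obstacle I anticipate is step (2) combined with the $\delta$-robustness in step (4): I need the irrationality to be \emph{unavoidable}, not merely \emph{possible}. That means the gadget must force the recommendation distribution at $s$ to lie exactly on the boundary of the IC polytope (so the binding constraint produces an equation, not an inequality with rational interior points), and this boundary equation must remain quadratic-with-irrational-root even after relaxing each incentive constraint by $\delta$ --- which requires the $\delta$-slack to enter only the \emph{affine part} of the constraints, shifting the line but not making the feasible set full-dimensional in a way that admits rational optima of value $>-1/2$. I expect to achieve this by a discontinuity: arranging that any distribution not satisfying the exact (or $\delta$-relaxed) binding equation makes the self-loop action strictly infeasible, collapsing the agent to the escape action whose only $\delta$-CE-compatible continuation has principal value $-1/2$. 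Verifying this dichotomy --- irrational loop-probability achieving $1/2$ versus everything rational collapsing to $-1/2$ --- is the crux, and it amounts to a careful but elementary case analysis of the IC constraints of the $2\times2$ stage game parameterized by $\gamma$ and $\delta$.
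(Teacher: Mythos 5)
Your two raw ingredients---a discounted self-loop generating a rational-function fixed-point equation with an irrational root, and a dichotomy under which every rational policy collapses to value $-1/2$---are indeed the right ones (the paper's irrationality also comes from a self-loop gadget), but the plan has a genuine gap at exactly the point you flag as the crux: nothing in your single-state construction makes the irrational quantity \emph{unavoidable}. An IC condition in a ($\delta$-)CE is an inequality, so a single binding constraint leaves a half-space of feasible continuation values, which contains rational points; to pin an exact irrational point you need two opposing constraints intersecting only there, and your $2\times 2$ single-state gadget does not explain where the second, opposing constraint comes from. Moreover, your step (2)---that any $\delta$-CE must use an essentially stationary distribution at $s$ and that history-dependence cannot help---is unsupported and generally false: history-dependent policies realize a convex set of continuation values at $s$ (cf.\ \Cref{lmm:Phi-properties}), so unless the construction rules it out, rational alternatives achieving values strictly above $-1/2$ would exist and the claimed gap of exactly $1$ would fail. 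There is also an internal inconsistency in step (3): the optimum must equal $1/2$, which is rational, so the irrational object cannot be $V_0^\pi(s)$ itself; it has to be the agents' continuation values and the recommendation probabilities.

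The paper resolves precisely these issues with a three-part construction (\Cref{fig:irrational}): (i) two separate punishment subgames (at $s'_1$, $s'_2$), each effectively a zero-sum game with a self-loop between the principal and one agent, whose maximin values are irrational ($\pm(\sqrt{11/3}-1)$ at $\gamma=1/2$)---this is where your quadratic lives, as the agents' \emph{outside options}, not as the principal's value; (ii) a principal-controlled state $s_3$ at which the two agents' continuation values are negatively correlated functions of a single probability $\pi(a \given s_3)$, so agent 1's IC constraint forces this probability weakly above, and agent 2's weakly below, the irrational threshold, leaving the irrational point as the unique simultaneous solution---and no rational policy works, since rational punishments are strictly weaker than the irrational maximin; and (iii) a sequential gating $s_1 \to s_2 \to s_3$ with rewards arranged so that whenever both agents cannot be incentivized, the only CE outcome gives the principal exactly $-1/2$. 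To complete your argument you would need to supply analogues of (i)--(iii)---in particular the two-sided pinning in (ii) and the exact-collapse argument in (iii)---as well as the reward-shifting device the paper uses to absorb the $\delta$ slack uniformly for all $\delta \in [0,1)$; as written, these are exactly the missing steps.
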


To prove \Cref{prp:irrational}, we construct the game depicted in \Cref{fig:irrational}.
In this game, two agents each control one of the states, $s_1$ or $s_2$.
The rewards are structured such that the principal gets value $1/2$ only when both agents play $\ain$, whereby the game proceeds to the subgame rooted at $s_3$; and the principal gets $-1/2$, otherwise.
In order to attract both agents to play $\ain$, the principal needs to tune their values to an appropriate point where both agents' values exceed what can be attained for playing $\aout$ (which leads to the subgames rooted at $s'_1$ and $s'_2$).
The rewards in the subgames at $s'_1$ and $s'_2$ are further designed in a way such that the lowest inducible values for both agents are irrational numbers. 
Moreover, only when the agents' values in the subgame at $s_3$ match precisely these irrational numbers, can we simultaneously induce both agents to play $\ain$ since the agents' values in the subgame at $s_3$ are set to be negatively correlated.
Consequently, policies involving only rational probabilities cannot induce both agents to play $\ain$.

\tikzstyle{every state}=[very thick, fill=mywhite, minimum size=8mm] 

\begin{figure}[t]
\centering
\tikzstyle{subgamenode}=[draw=black!50,fill=black!5, text=black]
\tikzstyle{principal}=[draw=myblue,fill=myblue!5, text=myblue]

\begin{tikzpicture}[baseline={(current bounding box.center)}, node distance=22mm, scale=0.85, transform shape]
\tikzstyle{every node}=[text width=6mm, inner sep=1pt, align=center, font=\small] 

\node[state, initial, initial where=above] (s1) {$s_1$};
\node[state, subgamenode, below left of=s1, xshift=-3mm] (sout1) {$s'_1$};
\node[state, below right of=s1, xshift=3mm] (s2) {$s_2$};
\node[state, subgamenode, below left of=s2, xshift=-3mm] (sout2) {$s'_2$};
\node[state, below right of=s2, xshift=3mm, principal] (sin) {$s_3$};
\node[state, subgamenode, below of=sout1] (t1) {$s''_1$};
\node[state, subgamenode, below of=sout2] (t2) {$s''_2$};
\node[state, below left of=sin, xshift=6mm, principal] (t3) {$s'_3$};
\node[state, below right of=sin, xshift=-6mm, principal] (t4) {$s''_3$};
\draw 
(s1) edge[] node[labelOnEdge,text width=7mm]{$\aout$} (sout1)
(s1) edge[] node[labelOnEdge,text width=7mm]{$\ain$} (s2)
(s2) edge[] node[labelOnEdge,text width=7mm]{$\aout$} (sout2)
(s2) edge[] node[labelOnEdge,text width=7mm]{$\ain$} (sin)
(sout1) edge[subgamenode] node[labelOnEdge,text width=12mm]{$ab,\,ba$} (t1)
(sout1) edge[subgamenode,fill=none,in=180,out=-140,looseness=10] node[left, text width=11mm]{$aa,\,bb~$} (sout1)
(sout2) edge[subgamenode] node[labelOnEdge,text width=12mm]{$ab,\,ba$} (t2)
(sout2) edge[subgamenode,fill=none,in=0,out=-40,looseness=10] node[right, text width=11mm]{$~aa,\,bb$} (sout2)
(sin) edge[principal] node[labelOnEdge,text width=4mm]{$a$} (t3)
(sin) edge[principal] node[labelOnEdge,text width=4mm]{$b$} (t4)
(t1) edge[in=-105,out=-75,looseness=11,subgamenode,fill=none] (t1)
(t2) edge[in=-105,out=-75,looseness=11,subgamenode,fill=none] (t2)
(t3) edge[in=-105,out=-75,looseness=11,principal,fill=none] (t3)
(t4) edge[in=-105,out=-75,looseness=11,principal,fill=none] (t4)
;

\end{tikzpicture}
\vspace{3mm}
\caption{The game instance for proving \Cref{prp:irrational}.}
\label{fig:irrational}
\end{figure}
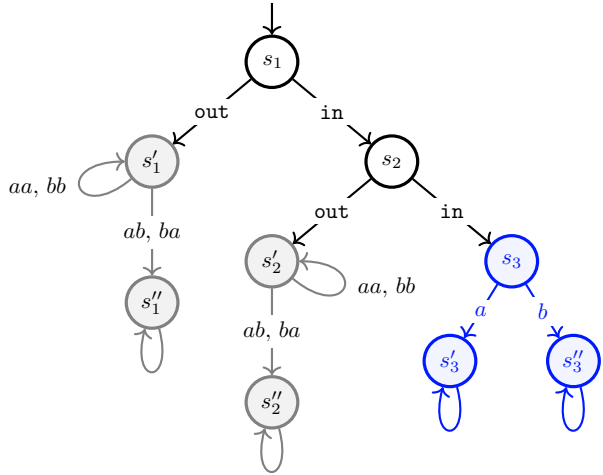

\paragraph{$(\epsilon,\delta)$-Optimality}

The gap $1$ between the two V-values in \Cref{prp:irrational} indicates that no algorithm based on rational representation of policies is guaranteed to return a $\delta$-CE whose value is smaller than that of a best possible $\delta$-CE by at most $\epsilon$, for any $\epsilon \in [0,1)$.
Given this inapproximability, we relax our benchmark to the value yielded by an optimal {\em exact} CE, when we define the $\epsilon$-optimality of a $\delta$-CE.
This approach is the same as {\em resource augmentation} in approximation algorithm design,
whereby we allow the algorithm to use $\delta$ more resources but still measure its performance against the best possible solution with the original, unaugmented resources.
More formally, we define our revised objective, $(\epsilon,\delta)$-optimal CEs, as follows.

\begin{definition}[$(\epsilon,\delta)$-optimal CE]
\label{def:eps-delta-optimal-policy}
A policy $\pi$ is an {\em $(\epsilon,\delta)$-optimal CE} if $\pi \in \calC_{\delta}$ and $V_0^\pi(s_{\init}) \ge \max_{\pi' \in \calC_0} V_0^{\pi'}(s_{\init}) - \epsilon$. 
\end{definition}

In the next section, we will introduce an algorithm that computes an $(\epsilon,\delta)$-optimal CE in polynomial time for a constant number of agents.
As mentioned, key to the tractability is the use of history-dependent policies. In contrast, optimal stationary CEs, not only offer lower values but are also intractable even with the above relaxed approximation criterion, according to the following result.

\begin{restatable}{theorem}{thminapproximability}
\label{thm:inapproximability}
There exist constants $\epsilon, \delta > 0$ such that,  
unless P = NP, no algorithm is guaranteed to compute an $(\epsilon,\delta)$-optimal {\bf\em stationary} CE in polynomial time.
The result holds for any discount factor $\gamma \in (0,1)$, even when there are only two agents, each player has at most four actions, and the horizon length is three.%
\hideinrestate{\footnote{It should be evident from the proof that this inapproximability result also holds for Markovian policies (see \cref{fn:markovian}), since in the reduced game instance each state is only reachable at a unique time step.}}
\end{restatable}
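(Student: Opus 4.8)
The plan is to prove \Cref{thm:inapproximability} by a reduction from a suitable NP-hard problem, and the natural candidate is an NP-hard two-player game problem such as deciding whether a two-player normal-form game has a Nash equilibrium with a given support size, or more directly, an optimal-equilibrium problem like the one studied by \citet{letchford2012computing}, or even a direct reduction from \textsc{3SAT} / \textsc{Independent Set}. The key structural observation to exploit is the one emphasized right before the statement: a stationary (or Markovian) CE must assign the \emph{same} distribution of joint actions to a state every time that state is visited, so it cannot condition on what happened in earlier time steps. We want to design a short (horizon three) game instance in which the principal's achievable value, under this rigidity constraint, encodes the satisfiability of a formula; whereas if history-dependent policies were allowed the value would be large, which is how we know the gap is ``real'' and not an artifact.

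First I would set up the gadget: the game has three time steps, two agents, and at most four actions per player. The first step (or first two steps) is used to force the principal to ``commit'' to a single distribution over joint actions that will be reused, because the relevant state recurs; I would arrange the transition structure so that some state $s^\star$ is reachable along several distinct histories, and at $s^\star$ the stationary constraint forces one common recommendation distribution. The recommendation distribution at $s^\star$ will be interpreted as (a randomization over) a candidate truth assignment or a candidate solution to the combinatorial problem. Second, I would build the incentive (IC) constraints at $s^\star$ and its successors so that the agents are willing to follow the recommendation \emph{only if} the induced continuation values satisfy certain inequalities --- and these inequalities, together with the rigidity, are simultaneously satisfiable exactly when the formula is satisfiable. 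Crucially, because each state in the reduced instance is reachable at a unique time step, a Markovian policy is no more powerful than a stationary one here, which gives the footnoted strengthening for free. Third, I would engineer the principal's reward so that its value at $s_\init$ is, say, at least $1/2$ when a satisfying assignment exists and at most $1/2 - c$ for some absolute constant $c>0$ otherwise; then picking $\epsilon < c$ and $\delta$ small enough that the $\delta$-slack in the IC constraints cannot bridge the gap (this requires the constant-gap inequalities in the gadget to be robust to an additive $\delta$ perturbation, so I would build in a margin and then fix $\delta$ below that margin) yields the claimed constants $\epsilon,\delta>0$. Finally, I would check that the construction works for every discount factor $\gamma\in(0,1)$: since the horizon is effectively three, the relevant rewards are placed on the first few steps and terminal self-loops with zero reward absorb the tail, so discounting only rescales things by a known $\gamma$-dependent factor that I can absorb into the reward normalization $[\rmin,\rmax]=[-\tfrac{1-\gamma}{2},\tfrac{1-\gamma}{2}]$ without changing the qualitative gap.

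The main obstacle I anticipate is making the rigidity of stationary policies bite in only three time steps while simultaneously keeping the action sets tiny (four actions) and the instance genuinely hard. In a three-step game there is very little room to ``recur'' a state along multiple histories and also to hang nontrivial incentive constraints off the continuations, so the gadget has to be quite clever and economical --- essentially each of the two agents must encode part of the combinatorial structure through its one binding choice, and the negative-correlation / value-matching trick reminiscent of the construction behind \Cref{prp:irrational} (and of the inapproximability result of \citet{gan2022bayesian}) must be reused to turn a purely combinatorial constraint into a statement about which continuation value pairs are simultaneously inducible under a single stationary recommendation. A secondary technical point is ensuring the hardness is of the stated \emph{approximation} form: I must verify that the $\mathrm{YES}$ and $\mathrm{NO}$ instances are separated by a constant in the principal's \emph{optimal $\delta$-CE value} (not merely in exact feasibility), which typically requires amplifying the combinatorial gap --- e.g.\ reducing from a gap version of the combinatorial problem, or repeating/parallel-composing the gadget --- and then re-checking that the composition preserves the three-step horizon and small action sets, or alternatively arguing that a single gadget already has a constant value gap by a direct case analysis of the IC polytope.
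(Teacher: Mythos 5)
Your overall strategy coincides with the paper's: a reduction from a gap version of 3-SAT (the paper uses \textsc{Gap 3-SAT}, NP-hard via the PCP theorem), in which the rigidity of stationary policies---one fixed distribution per state, no matter which history led there---is what encodes a truth assignment, and the YES/NO instances are separated by a constant value gap that must survive the $\delta$-slack. You also correctly identify the ``each state is reachable at a unique time step'' point behind the Markovian footnote, and the need to reduce from a gap problem rather than plain SAT.

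The difficulty is that the proof \emph{is} the gadget, and that is precisely the part you leave open (``the gadget has to be quite clever and economical''). The paper's construction is in fact much simpler than what you anticipate: $s_\init$ moves uniformly at random to one of the $k$ clause states; at clause state $C_j$, agent~1 (resp.\ agent~2) may play $a_\ell$ when the $\ell$-th literal of $C_j$ is positive (resp.\ negative), which sends the game to the corresponding variable state $x_i$, while $(\aout,\aout)$ terminates with reward $\rmax$ for both agents and $0$ for the principal; each variable state $x_i$ is controlled \emph{solely by the principal}, whose fixed choice between $b_1$ and $b_2$ (giving $\rmax$ to one agent, $\rmin$ to the other, and $\rmax$ to the principal) encodes the truth value of $x_i$. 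Stationarity bites because the same $x_i$ is shared by all clauses containing that variable, and $\delta=\gamma\rmax/2$ is chosen so that only one of the two agents can be ($\delta$-)incentivized at each $x_i$, mirroring that $x_i$ receives a single truth value; no analogue of the irrational/negative-correlation machinery of \Cref{prp:irrational} is needed, and there are no nontrivial IC constraints at the variable states at all. Two further points your sketch glosses over: (i) because of resource augmentation, the YES side must exhibit a high-value \emph{exact} stationary CE (the benchmark in $(\epsilon,\delta)$-optimality is $\calC_0$), while the NO side must bound the value of \emph{all} stationary $\delta$-CEs---your formulation of the separation in terms of the ``optimal $\delta$-CE value'' on both sides misses this asymmetry; and (ii) the constant gap falls out directly from the gap parameter $\beta$ of \textsc{Gap 3-SAT} combined with the uniform mixing over clauses at the first step (value $\gamma^2$ on YES versus at most $\gamma^2\beta$ on NO), so no separate amplification or parallel composition is required, which is how the horizon stays at three and the action sets stay of size at most four.
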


\section{Polynomial-Time Algorithm for Constant Number of Agents}

\label{sc:value-set-iteration}

We now present an algorithm for computing an $(\epsilon,\delta)$-optimal CE, which runs in polynomial time for a constant number of agents.
At a high level, the algorithm generalizes value iteration for solving MDPs, which iteratively updates estimates of state values. However, unlike MDPs---which allow for optimal {\em stationary} policies---our multi-player setting requires non-stationary policies for achieving optimality. 
It is therefore insufficient to associate only a single value with each state as the game may proceed differently from the same state when the history differs.
Instead, we evaluate, for each state, the entire set of values that can be induced when the game starts at this state. We call these sets the {\em inducible value sets}.

\begin{definition}[Inducible value set]
\label{def:inducible-value-set}
The {\em inducible value set} $\calV^\star(s) \subseteq \mathbb{R}^{n+1}$ of each state $s \in S$ consists of {\em all} values that are inducible at $s$.
\end{definition}

Key to our approach is a characterization of $\calV^\star$ as a fixed point of a map between functions of the form $\calV: S \rightrightarrows \mathbb{R}^{n+1}$. We refer to such functions as {\em value-set functions}.

\subsection{Fixed-Point Characterization of $\calV^\star$}

Let us first characterize each point $\vv \in \calV^\star(s)$ using the following constraints.

\begin{itemize}
\item 
{\bf Bellman constraint.}
By definition, $\vv \in \calV^\star(s)$ means that it can be induced by some CE $\pi$, such that $\vv = V^\pi(s)$.
Consider the first time step. We can expand $\vv$ according to \Cref{eq:Q,eq:V} as:
\begin{align}
\label{eq:const-bellman}
\vv = \bellman(s, \bar{\pi}, \ww) 
\coloneqq
\E_{\va \sim \bar{\pi}} \Big( \rr(s, \va) + \gamma \E_{s' \sim p(\cdot \given s, \va)} \ww(\va, \va; s') \Big), 
\end{align}
where $\bar{\pi} = \pi(s) \in \Delta(\bfA)$ captures the joint action distribution defined by $\pi$ in the first time step; and
each $\ww(\va, \va; s') = V^{\pi}(s, \va, \va; s') \in \mathbb{R}^{n+1}$ captures the onward value induced by $\pi$ at the next time step.
Hence, $(\bar{\pi}, \ww)$ can be viewed as a compact representation of $\pi$.
The two $\va$'s in $\ww(\va, \va; s')$---representing actions recommended and actions played, respectively---are the same because $\pi$, as a CE, is incentive compatible.
In terms of $\bar{\pi}$ and $\ww$, this means that the following IC constraint must be satisfied, following \Cref{eq:IC-Q}.

\item 
{\bf IC constraint.}
For every agent $i \in \{1,\dots,n\}$, every recommendation $a \in A$ to this agent, and every possible immediate derivation $b \in A$:
\begin{align}
& \label{eq:const-ic}
\sum_{\va \in \bfA:\, a_i = a} \bar{\pi}( \va) \cdot \Big( r_i\left(s, \va \right) + \gamma \E_{s' \sim p(\cdot \given s, \va )}\, w_i \left(\va, \va; s' \right) \Big) \ge \nonumber\hspace{-30mm}\\
&\qquad \sum_{\va \in \bfA:\, a_i = a} \bar{\pi}( \va) \cdot \Big( r_i\left(s, \va \oplus_i b \right) + \gamma \E_{s' \sim p(\cdot \given s,\, \va \oplus_i b )}\, w_i \left(\va, \va \oplus_i b; s' \right) \Big).
\end{align}
It suffices to consider such immediate deviations instead of deviation plans defined over the entire time horizon (as in \Cref{def:delta-IC}) because $\pi$ is IC in every subgame: the best a deviating agent can achieve in the subsequent time steps does not exceed what they obtain by following $\pi$, which is encoded in $\ww$.
Indeed, this also means that the onward values must fall inside the respective inducible value sets, so we have the following constraint.

\item 
{\bf Onward value constraint.}
For every $(\va, \bb; s') \in \bfA^2 \times S$:
\begin{align}
\label{eq:const-onward}
\ww(\va, \bb; s') \in \calV^\star(s').
\end{align}
\end{itemize}

Effectively, \Cref{eq:const-ic,eq:const-onward} together define the set of $(\bar{\pi}, \ww)$ tuples corresponding to CEs, while \Cref{eq:const-bellman} defines the value induced by $(\bar{\pi}, \ww)$.
More formally, the following proposition confirms the correctness of the above characterization.

\begin{restatable}{proposition}{prpVstar}
\label{prp:V-star}
$\vv \in \calV^\star(s)$ if and only if $\vv = \bellman(s, \bar{\pi}, \ww)$ for some $\bar{\pi} \in \Delta(\bfA)$ and $\ww: \bfA^2 \times S \to \mathbb{R}^{n+1}$ 
satisfying \Cref{eq:const-ic,eq:const-onward}.
\end{restatable}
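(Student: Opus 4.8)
The plan is to prove both directions of the equivalence by relating a CE $\pi$ at $s$ to the pair $(\bar\pi,\ww)$ formed by its first-step action distribution and its onward V-values, and conversely by assembling a CE from such a pair together with CEs witnessing membership of the onward values in the inducible value sets.

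For the ``only if'' direction, suppose $\vv \in \calV^\star(s)$, so $\vv = V^\pi(s)$ for some CE $\pi$. Set $\bar\pi = \pi(s)$ and, for each $(\va,\bb;s')$, let $\ww(\va,\bb;s') = V^\pi(s,\va,\bb;s')$; in fact for the relevant (IC-compatible) entries $\bb = \va$ this is the onward value, and the off-diagonal entries are likewise well-defined onward values of $\pi$. Equation~\eqref{eq:const-bellman} is then just the one-step unrolling \eqref{eq:V}--\eqref{eq:Q} of $\pi$ specialized to $\rho = \perp$; equation~\eqref{eq:const-onward} holds because the subtree of $\pi$ rooted at $(s,\va,\bb;s')$ is itself a CE (perfect recall plus the fact that the $\delta=0$ IC constraints of \Cref{def:delta-IC} restricted to histories extending $(s,\va,\bb;s')$ are exactly the IC constraints of that subtree), so its onward value lies in $\calV^\star(s')$. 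For \eqref{eq:const-ic}: the IC constraint \eqref{eq:IC-Q} of \Cref{def:delta-IC} at the empty history, applied with deviation plans $\rho$ that deviate only in the first step (mapping recommendation $a$ to $b$ and acting as $\perp$ thereafter), gives precisely \eqref{eq:const-ic} once we substitute $Q_i^\pi(s,\va) = r_i(s,\va) + \gamma\,\E_{s'}w_i(\va,\va;s')$ and $Q_i^{\pi,\rho}(s,\va) = r_i(s,\va\oplus_i b) + \gamma\,\E_{s'}w_i(\va,\va\oplus_i b;s')$, the latter using that after the first-step deviation the agent reverts to $\pi$, whose continuation V-values are the $\ww$ entries.

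For the ``if'' direction, given $\bar\pi$ and $\ww$ satisfying \eqref{eq:const-ic}--\eqref{eq:const-onward}, use \eqref{eq:const-onward} to pick, for each $(\va,\bb;s')$, a CE $\pi_{\va,\bb,s'}$ at $s'$ with $V^{\pi_{\va,\bb,s'}}(s') = \ww(\va,\bb;s')$. Define $\pi$ by: $\pi(\varnothing;s) = \bar\pi$, and for any nonempty history starting with $(s,\va,\bb;s')$, follow $\pi_{\va,\bb,s'}$ on the remaining suffix; on histories not starting at $s$, define $\pi$ arbitrarily (it is irrelevant since we only care about play from $s_\init=s$ — though strictly one may just restrict attention to the subtree at $s$, or note the definition of ``induced at $s$'' only examines $V^\pi(s)$). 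Then $V^\pi(s) = \bellman(s,\bar\pi,\ww) = \vv$ by one-step expansion. It remains to check $\pi$ is a CE, i.e.\ \eqref{eq:IC-Q} holds at every history $(\sigma;s'',a)$. For nonempty $\sigma$ this reduces to the IC property of the appropriate $\pi_{\va,\bb,s'}$, which holds by assumption. For $\sigma = \varnothing$ the subtlety is that \Cref{def:delta-IC} quantifies over \emph{all} deviation plans $\rho$, not just first-step ones, whereas \eqref{eq:const-ic} only encodes first-step deviations. The resolution is a one-shot-deviation / exchangeability argument: since each $\pi_{\va,\bb,s'}$ is a CE, no deviation in steps $\ge 2$ can strictly help the agent conditioned on the first-step recommendation and play, so an arbitrary $\rho$ is dominated by the $\rho'$ that mimics $\rho$ in step $1$ and plays $\perp$ thereafter; applying \eqref{eq:const-ic} to $b = \rho(\varnothing;s,a)$ then gives \eqref{eq:IC-Q}. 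Making this domination precise — in particular that the agent cannot benefit from coordinating a step-1 deviation with later deviations, which uses that the continuation after any $(\va,\bb;s')$ is a CE regardless of how $(\va,\bb)$ arose — is the main obstacle; everything else is bookkeeping with \eqref{eq:V}--\eqref{eq:Q}. I would state this one-shot-deviation reduction as a small sublemma (it is the stochastic-game analogue of the standard principle) and cite perfect recall and the per-subgame IC encoded in $\ww$ as the reason it applies.
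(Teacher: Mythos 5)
Your proposal is correct and follows essentially the same route as the paper's proof: take $\bar\pi = \pi(s)$ and the onward V-values as $\ww$ for the ``only if'' direction, and concatenate $\bar\pi$ with CEs inducing the $\ww(\va,\bb;s')$ for the ``if'' direction. The one-shot-deviation reduction you flag as the main obstacle is exactly the point the paper treats implicitly (it notes that immediate deviations suffice because the continuation after any $(\va,\bb;s')$ is itself a CE), and your sketch of why arbitrary deviation plans are dominated is the right justification.
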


Hence, $\calV^\star(s)$ consists of all values that can be characterized via \Cref{eq:const-bellman,eq:const-ic,eq:const-onward}, while \Cref{eq:const-onward} involves $\calV^\star$ itself. 
We can then view $\calV^\star$ as a fixed point of this characterization.
Formally, let us define the following map $\Phi$ from one value-set function to another: for each $s \in S$,
\begin{align}
\label{eq:Phi}
\Phi(\calV)(s) \coloneqq 
\Big\{ 
\bellman(s, \bar{\pi}, \ww)
\;\Big|\;
(\bar{\pi}, \ww) \in \calF_s(\calV)
\Big\},
\end{align} 
where we denote by $\calF_s(\calV)$ the set of tuples $(\bar{\pi}, \ww)$ satisfying \Cref{eq:const-ic,eq:const-onward}, with $\calV$ in place of $\calV^\star$ in \Cref{eq:const-onward}, i.e.,
\[
\calF_s(\calV) \coloneqq
\Big\{
(\bar{\pi}, \ww) 
\,\Big|\,
\begin{array}{l}
\bar{\pi} \in \Delta(\bfA),\, 
(\bar{\pi}, \ww) \text{ satisfies \Cref{eq:const-ic}, and } \ww(\va,\bb;s') \in \calV(s') \text{ for all } \va, \bb, s'
\end{array}
\Big\}.
\]

It then follows immediately from \Cref{prp:V-star} that $\calV^\star = \Phi(\calV^\star)$.
The following lemma further shows that $\calV^\star$ is not only a fixed point of $\Phi$ but also a {\em greatest} one.

\begin{restatable}{lemma}{lmmPhifixedpoint}
\label{prp:Phi-fixed-point}
If $\calV = \Phi(\calV)$, then $\calV \subseteq \calV^\star$.%
\hideinrestate{\footnote{For convenience, for any value-set functions $\calV$ and $\calV'$, and any set $X \subseteq \mathbb{R}^{n+1}$, we write $\calV' \subseteq \calV$ if $\calV'(s) \subseteq \calV(s)$ for all $s \in S$; and $\calV \subseteq X$ if $\calV(s) \subseteq X$ for all $s \in S$.}}
\end{restatable}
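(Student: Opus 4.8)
The plan is to show that any fixed point $\calV$ of $\Phi$ is ``achievable'': for every state $s$ and every $\vv \in \calV(s)$, we build an (exact) CE $\pi$ with $V^\pi(s) = \vv$, which witnesses $\vv \in \calV^\star(s)$. The key idea is that the fixed-point equation $\calV = \Phi(\calV)$ lets us \emph{unroll} a policy step by step: since $\vv \in \Phi(\calV)(s)$, there is a tuple $(\barpi_s^{\vv}, \ww_s^{\vv}) \in \calF_s(\calV)$ with $\vv = \bellman(s, \barpi_s^{\vv}, \ww_s^{\vv})$; crucially the onward values $\ww_s^{\vv}(\va,\bb;s')$ lie in $\calV(s')$, so we can recursively invoke the same construction at $(s', \ww_s^{\vv}(\va,\bb;s'))$. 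Iterating this, I would define $\pi$ on every history $(\sigma; s)$ by tracking the ``target value'' that the history has been routed to: start at $(s_0 \coloneqq s, \vv_0 \coloneqq \vv)$, set $\pi(\varnothing; s) \coloneqq \barpi_{s}^{\vv}$, and after observing a recommended/played pair and next state, update the target to the corresponding coordinate of $\ww$ and continue. Formally this is an induction on $|\sigma|$ that simultaneously defines $\pi(\sigma;s)$ and maintains the invariant ``the target value assigned to $(\sigma;s)$ lies in $\calV(s)$.''

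Next I would verify the two things this $\pi$ must satisfy. First, incentive compatibility: at each history the one-shot IC constraint \Cref{eq:const-ic} holds by construction (it is part of membership in $\calF_s(\calV)$), and as noted in the text following \Cref{eq:const-ic}, immediate-deviation IC in every subgame implies full deviation-plan IC, because the onward values encoded in $\ww$ already upper-bound what a deviator could earn later --- more precisely, one argues by (discounted) backward reasoning that $V_i^{\pi,\rho}(\sigma;s) \le V_i^{\pi}(\sigma;s)$ for every deviation plan $\rho$, using that each onward target is itself the value of the CE constructed from that point. Second, the value accounting: I must check that the $V^\pi$ actually equals the target values I assigned, i.e.\ $V^\pi(\sigma;s)$ equals the target $\vv_\sigma$ routed to $(\sigma;s)$. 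This follows because $\vv_\sigma = \bellman(s, \barpi_\sigma, \ww_\sigma)$ with $\ww_\sigma(\va,\va;s')$ being precisely the target routed to the child history, so unrolling the Bellman recursion $k$ steps gives $\vv = \E[\sum_{t=1}^{k}\gamma^{t-1}\rr(s^t,\va^t)] + \gamma^k \E[\vv_{\sigma_k}]$; since rewards are bounded in $[\rmin,\rmax]$ (hence values bounded) and $\gamma < 1$, the tail $\gamma^k \E[\vv_{\sigma_k}] \to 0$, so $\vv = V^\pi(s)$ in the limit.

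The main obstacle is the convergence/limit argument in the last step: the construction produces $\pi$ as the limit of an infinite unrolling rather than a finite object, and I need to argue cleanly that (i) $V^\pi$ is well-defined (which is immediate from the discount factor and bounded rewards, giving an absolutely convergent series), and (ii) the telescoping identity above genuinely pins down $V^\pi(s) = \vv$ --- this requires that the targets $\vv_{\sigma_k}$ stay uniformly bounded, which holds because they all live in $\bigcup_s \calV(s)$ and $\calV = \Phi(\calV)$ forces $\calV(s)$ to consist of Bellman values, hence to lie in the bounded box $[\rmin/(1-\gamma), \rmax/(1-\gamma)]^{n+1} = [-\tfrac12,\tfrac12]^{n+1}$. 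A secondary subtlety is making the recursive ``choice'' of $(\barpi_s^{\vv},\ww_s^{\vv})$ coherent across all histories; I would handle this by fixing, once and for all, a selector that assigns to each pair $(s,\vv)$ with $\vv\in\calV(s)$ one such tuple, and then defining $\pi$ purely as a function of the running target, so that no consistency issue arises. Once these points are in place, $\vv\in\calV^\star(s)$ follows, and since $s,\vv$ were arbitrary, $\calV \subseteq \calV^\star$.
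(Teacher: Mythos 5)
Your construction is essentially the paper's own argument. The paper proves the more general approximate statement (\Cref{prp:eps-neigh-inducible}, via \Cref{lmm:eps-neigh-inducible-alg} and \Cref{alg:compute-pi}) and obtains \Cref{prp:Phi-fixed-point} as the special case $\epsilon=\delta=0$ (also citing Murray and Gordon's achievability lemma); your unrolling with a fixed selector, the telescoped Bellman identity with a $\gamma^k$ tail, and the one-shot-deviation argument for full IC are exactly the $\xi=0$ instance of that proof, where \Cref{alg:compute-pi} tracks the running target and the subsidized/truncated-game comparison plays the role of your truncation-and-limit step.

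The one step that does not hold as you state it is the boundedness claim: $\calV=\Phi(\calV)$ by itself does \emph{not} force $\calV(s)\subseteq[-\tfrac12,\tfrac12]^{n+1}$. Take $\calV(s)=\mathbb{R}^{n+1}$ for every $s$ and any target $\vv$: choose $\barpi$ to be a CE of the stage game at $s$ and set $\ww(\va,\bb;s')\equiv\cc$ with $\cc=\big(\vv-\E_{\va\sim\barpi}\rr(s,\va)\big)/\gamma$. Then the onward-value constraint is vacuous, the onward terms on the two sides of \Cref{eq:const-ic} cancel so the IC constraint reduces to the one-shot CE condition (which $\barpi$ satisfies), and $\bellman(s,\barpi,\ww)=\vv$; hence this unbounded $\calV$ is a fixed point of $\Phi$ whenever $\gamma>0$. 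The self-consistency reasoning you hint at (``values are Bellman values of values in $\calV$, hence bounded by $\rmax/(1-\gamma)$'') only gives $M^\star\le\rmax+\gamma M^\star$ for $M^\star=\sup_s\sup_{\vv\in\calV(s)}\|\vv\|_\infty$, which pins $M^\star\le 1/2$ only if $M^\star$ is already known to be finite, so it is circular. The repair is to take boundedness as a hypothesis rather than derive it: in the paper every value-set function under consideration lives inside $\calB$ (or $\neigh_\xi(\calB)$), and indeed the lemma as literally stated needs this implicit restriction, since the unbounded fixed point above is not contained in $\calV^\star\subseteq\calB$. With $\calV\subseteq\calB$ (or any uniform bound) assumed, your tail estimate $\gamma^k\,\E[\vv_{\sigma_k}]\to 0$ and the analogous vanishing tail in the one-shot-deviation step go through, and the rest of your argument is sound.
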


By definition, \Cref{prp:Phi-fixed-point} means that every $\vv \in \calV(s)$, where $\calV$ is an {\em arbitrary} fixed point of $\Phi$, is inducible at $s$.
Intuitively, this holds because $\vv$ can be expanded into a set of onward values given by $\ww$, while the fact that $\ww(\cdot,\cdot;s') \in \calV(s')$ ensures that these onward values can be expanded further in the same way.  
This expansion process can then continue indefinitely with $\calV$ being an invariant, and one can show that the expected cumulative reward yielded is equal to $\vv$.
We will later prove a more general version of this lemma (see \Cref{prp:eps-neigh-inducible}) necessary for our algorithm design.

Additionally, $\Phi$ preserves convexity, closure, and nonemptiness. These properties are essential for proving the convergence of value-set iteration to $\calV^\star$, which we discuss next.

\begin{restatable}{lemma}{lmmPhiproperties}
\label{lmm:Phi-properties}
$\Phi(\calV)(s)$ is convex, closed, and nonempty if $\calV(s')$ is convex, closed, and nonempty for all $s' \in S$. 
\end{restatable}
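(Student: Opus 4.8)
The plan is to verify the three properties --- nonemptiness, convexity, and closedness --- one at a time, fixing a state $s \in S$ and assuming $\calV(s')$ is convex, closed, and nonempty for every $s' \in S$.

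\textbf{Nonemptiness.} I would first observe that $\calF_s(\calV)$ is nonempty. Since each $\calV(s')$ is nonempty, pick any $\vv_{s'} \in \calV(s')$ and define $\ww(\va,\bb;s') \coloneqq \vv_{s'}$ for all $\va,\bb$; that is, take the onward value to be independent of the actions. Then the IC constraint \Cref{eq:const-ic} reduces, after cancelling the common $\gamma\,\vv_{s'}$ terms on both sides, to the one-shot incentive-compatibility condition $\sum_{\va:\,a_i=a}\bar\pi(\va)r_i(s,\va)\ge\sum_{\va:\,a_i=a}\bar\pi(\va)r_i(s,\va\oplus_i b)$ for the stage game with payoffs $r_i(s,\cdot)$. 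A correlated equilibrium of any finite one-shot game always exists (e.g.\ as a convex combination of any Nash equilibrium, whose existence follows from Nash's theorem), so there is some $\bar\pi$ making $(\bar\pi,\ww)\in\calF_s(\calV)$, hence $\Phi(\calV)(s)\ne\emptyset$.

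\textbf{Convexity.} Given $\vv^0,\vv^1\in\Phi(\calV)(s)$ with witnesses $(\bar\pi^0,\ww^0),(\bar\pi^1,\ww^1)\in\calF_s(\calV)$ and $\lambda\in[0,1]$, I want a witness for $\vv^\lambda \coloneqq \lambda\vv^1+(1-\lambda)\vv^0$. The subtlety is that $\bellman(s,\bar\pi,\ww)$ is \emph{bilinear} in $(\bar\pi,\ww)$, so a naive convex combination $(\bar\pi^\lambda,\ww^\lambda)$ does not produce $\vv^\lambda$. The fix is to ``mix in the recommendation'': enlarge the action space conceptually by remembering which of the two components generated the recommendation. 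Concretely, since the real action space is fixed, I would instead argue at the level of distributions over $\bfA^2\times S$. Let $\mu^k$ be the joint distribution over $(\va,s')$ induced by $\bar\pi^k$ and $p$, and note $\vv^k=\E_{\va\sim\bar\pi^k}\rr(s,\va)+\gamma\,\E_{(\va,s')\sim\mu^k}\ww^k(\va,\va;s')$. Set $\bar\pi^\lambda\coloneqq\lambda\bar\pi^1+(1-\lambda)\bar\pi^0$, and define $\ww^\lambda(\va,\bb;s')$ as the \emph{weighted average} of $\ww^0$ and $\ww^1$ with weights proportional to $\bar\pi^0(\va)$ and $\bar\pi^1(\va)$ respectively (using an arbitrary point of $\calV(s')$ where the denominator vanishes). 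Then $\ww^\lambda(\va,\bb;s')\in\calV(s')$ by convexity of $\calV(s')$; a direct computation shows $\bellman(s,\bar\pi^\lambda,\ww^\lambda)=\vv^\lambda$ because the $\bar\pi^\lambda(\va)$ weights cancel against the normalization in $\ww^\lambda$; and the IC constraint \Cref{eq:const-ic}, which is linear in the pair $\big(\bar\pi(\va),\,\bar\pi(\va)w_i(\va,\cdot;s')\big)$, is preserved by this same weighted-averaging (it amounts to adding the two instances of \Cref{eq:const-ic} for $(\bar\pi^0,\ww^0)$ and $(\bar\pi^1,\ww^1)$, scaled by $1-\lambda$ and $\lambda$). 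Hence $\vv^\lambda\in\Phi(\calV)(s)$.

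\textbf{Closedness.} I would show $\Phi(\calV)(s)$ is the image of a compact set under a continuous map, hence compact, hence closed. The ambient constraint set $\calF_s(\calV)$ lives in $\Delta(\bfA)\times\prod_{\va,\bb,s'}\calV(s')$. This is not compact as stated since $\calV(s')$ need only be closed, not bounded --- but only finitely many coordinates of $\ww$ actually enter $\bellman$ and \Cref{eq:const-ic}, namely $\ww(\va,\va;s')$ and $\ww(\va,\va\oplus_ib;s')$, and for the purpose of computing $\Phi(\calV)(s)$ we may restrict attention to those (set the irrelevant coordinates to an arbitrary point of $\calV(s')$). To get boundedness I would note that any $\vv\in\Phi(\calV)(s)$ arising from a genuine CE is a discounted reward sum, hence lies in $[\rmin/(1-\gamma),\rmax/(1-\gamma)]^{n+1}$; more carefully, using \Cref{prp:Phi-fixed-point}-style reasoning or just that $\calV(s')\subseteq\calV^\star(s')$ is bounded when $\calV$ is a fixed point --- however, since here $\calV$ is arbitrary, the cleanest route is: the relevant coordinates of $\ww$ can be confined to $\calV(s')\cap B$ for a large enough closed ball $B$ without changing $\Phi(\calV)(s)$, because the Bellman value depends continuously and boundedly on them while the IC constraints are closed conditions; intersecting a closed set with a closed ball gives a compact set. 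Then $\calF_s(\calV)$, cut down to relevant coordinates, is a closed subset (defined by the linear inequalities \Cref{eq:const-ic} and the membership conditions) of a compact set, hence compact; $\bellman(s,\cdot,\cdot)$ is continuous (indeed polynomial); so $\Phi(\calV)(s)$ is compact and therefore closed.

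\textbf{Main obstacle.} The delicate point is convexity: the bilinearity of $\bellman$ in $(\bar\pi,\ww)$ means convex combinations must be taken in the ``recommendation-weighted'' coordinates rather than naively, and one must check that this reweighting simultaneously (i) reproduces the target value, (ii) keeps $\ww$ inside the componentwise-convex sets $\calV(s')$, and (iii) preserves the IC inequalities. Getting all three to align is exactly where the argument has content; the closedness argument is routine once one is careful to reduce to the finitely many relevant $\ww$-coordinates and to a bounded region, and nonemptiness is immediate from existence of one-shot correlated equilibria.
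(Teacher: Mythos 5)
Your overall strategy mirrors the paper's proof (action-independent onward values plus existence of a one-shot CE for nonemptiness; recommendation-weighted mixing for convexity; a compactness/limit argument for closedness), but two steps are not correct as written. First, in the nonemptiness part, the claim that setting $\ww(\va,\bb;s')=\vv_{s'}$ lets you ``cancel the common $\gamma\,\vv_{s'}$ terms'' and reduce \Cref{eq:const-ic} to the IC condition of the bare stage game $r_i(s,\cdot)$ is false: the continuation expectation on the right-hand side is taken under $p(\cdot \given s,\va\oplus_i b)$, not $p(\cdot \given s,\va)$, and the vectors $\vv_{s'}$ differ across $s'$, so a deviation can strictly gain by steering the transition toward states with larger $v_{i,s'}$. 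Consequently a CE of the stage game need not yield $(\bar{\pi},\ww)\in\calF_s(\calV)$. The repair is exactly what the paper does: take $\bar{\pi}$ to be a CE of the auxiliary one-shot game with payoffs $u_i(\va)=r_i(s,\va)+\gamma\,\E_{s'\sim p(\cdot\given s,\va)} w_i(s')$, whose existence again follows from one-shot CE existence; with that change your construction goes through.

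Second, in the convexity part your mixture weights are misstated: to get $\bellman(s,\bar{\pi}^\lambda,\ww^\lambda)=\lambda\vv^1+(1-\lambda)\vv^0$ with $\bar{\pi}^\lambda=\lambda\bar{\pi}^1+(1-\lambda)\bar{\pi}^0$, the onward values must be averaged with weights proportional to $(1-\lambda)\bar{\pi}^0(\va)$ and $\lambda\bar{\pi}^1(\va)$ (normalized by $\bar{\pi}^\lambda(\va)$), not to $\bar{\pi}^0(\va)$ and $\bar{\pi}^1(\va)$; with your stated weights the cancellation you invoke fails except at $\lambda=1/2$ (which is the uniform mixture the paper uses). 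The underlying idea---that \Cref{eq:const-ic} is linear in the products $\bar{\pi}(\va)\,w_i(\va,\cdot;s')$ and is preserved by adding the two scaled instances, while convexity of each $\calV(s')$ keeps $\ww^\lambda$ feasible---is correct once the weights are fixed. Your closedness argument is essentially the paper's (the paper takes a Bolzano--Weierstrass subsequence of witnesses, you take the continuous image of a compact witness set); both implicitly use that the value sets are bounded, which holds in all of the paper's applications since $\calV\subseteq\neigh_\xi(\calB)$, though your specific patch of restricting $\ww$ to $\calV(s')\cap B$ ``without changing $\Phi(\calV)(s)$'' is not justified if $\calV(s')$ were genuinely unbounded.
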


\subsection{Value-Set Iteration}

The map $\Phi$ leads to a {\em value-set iteration} process, producing a sequence $\calV^0, \calV^1, \dots$ of value-set functions such that 
\[
\calV^{k+1} = \Phi(\calV^k)
\]
for $k \in \mathbb{N}$.
Ideally, we would like the sequence to behave similarly to the classical value iteration method for solving MDPs, where convergence to a {\em unique} fixed point is guaranteed from any initial point, following the contraction mapping theorem (Banach fixed-point theorem).
However, the IC constraints in our multi-player setting complicate the situation, making it substantially different from standard value iteration.%
\footnote{Without these constraints, our problem becomes a multi-objective MDP \cite{chatterjee2006markov}.}

As it turns out, the sequence $\calV^0, \calV^1, \dots$ need not be convergent for any $\calV^0$. Indeed, $\Phi$ does not satisfy the {\em contraction} property required by the contraction mapping theorem.
Even when the sequence converges, it may converge to a fixed point different from $\calV^\star$ (in which case the point must be a strict subset of $\calV^\star$ according to \Cref{prp:Phi-fixed-point}), hence failing to identify all inducible values, especially those optimal for the principal.\footnote{For example, in the instance in \Cref{fig:intro}, starting with $\calV^0(s) = \{(0,0)\}$ for all $s \in S$ will result in $\calV^k(s) = \{(0,0)\}$ for all $k \in \mathbb{N}$, while $(3,3)$ is also inducible via the equilibrium in which both players play $C$ throughout.}
To proceed, we turn to the following monotonicity property of $\Phi$ (\Cref{lmm:Phi-monotonicity}).

\begin{restatable}[Monotonicity]{lemma}{lmmPhimonotonicity}
\label{lmm:Phi-monotonicity}
If $\calV' \subseteq \calV$, then $\Phi(\calV') \subseteq \Phi(\calV)$.
\end{restatable}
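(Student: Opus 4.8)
The plan is to unwind the definition of $\Phi$ and observe that enlarging the argument value-set function only enlarges the feasible set $\calF_s(\calV)$ of tuples $(\barpi, \ww)$, which in turn can only enlarge the image set $\Phi(\calV)(s)$. Concretely, suppose $\calV' \subseteq \calV$, meaning $\calV'(s') \subseteq \calV(s')$ for every $s' \in S$. Fix a state $s$. It suffices to show $\calF_s(\calV') \subseteq \calF_s(\calV)$: once this containment is established, the conclusion $\Phi(\calV')(s) \subseteq \Phi(\calV)(s)$ follows immediately, since $\Phi(\calV)(s) = \{\bellman(s, \barpi, \ww) : (\barpi, \ww) \in \calF_s(\calV)\}$ is defined by taking the image of $\calF_s(\cdot)$ under the map $(\barpi,\ww) \mapsto \bellman(s,\barpi,\ww)$, and the image of a subset is a subset of the image.

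To see $\calF_s(\calV') \subseteq \calF_s(\calV)$, take any $(\barpi, \ww) \in \calF_s(\calV')$. By definition this tuple satisfies: (i) $\barpi \in \Delta(\bfA)$; (ii) the IC constraint \Cref{eq:const-ic} (which makes no reference to the value-set function at all, only to $\barpi$, $\ww$, the rewards, and the transitions); and (iii) the onward-value constraint $\ww(\va, \bb; s') \in \calV'(s')$ for all $\va, \bb, s'$. Conditions (i) and (ii) are unchanged when we replace $\calV'$ by $\calV$. For condition (iii), since $\calV'(s') \subseteq \calV(s')$ for every $s'$, we get $\ww(\va, \bb; s') \in \calV'(s') \subseteq \calV(s')$ for all $\va, \bb, s'$, so the onward-value constraint with respect to $\calV$ holds as well. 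Hence $(\barpi, \ww) \in \calF_s(\calV)$, completing the containment.

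There is essentially no obstacle here: the lemma is a direct consequence of the fact that the only place the value-set argument enters the definition of $\Phi$ is through the membership constraint \Cref{eq:const-onward}, which is monotone in $\calV$ by construction. The one point worth stating carefully is that this holds for every state $s$ simultaneously, so that $\Phi(\calV')(s) \subseteq \Phi(\calV)(s)$ for all $s \in S$, which is exactly the meaning of $\Phi(\calV') \subseteq \Phi(\calV)$ under the convention for comparing value-set functions. This monotonicity is then the hypothesis needed to invoke Tarski's fixed-point theorem and, together with \Cref{prp:Phi-fixed-point}, to identify $\calV^\star$ as the greatest fixed point.
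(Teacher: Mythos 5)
Your proof is correct and follows exactly the paper's (one-line) argument: the value-set function enters the definition of $\Phi$ only through the onward value constraint \Cref{eq:const-onward}, so enlarging $\calV$ enlarges $\calF_s(\calV)$ and hence its image under $\bellman(s,\cdot,\cdot)$. You have simply spelled out the same observation in full detail.
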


By Tarski's fixed point theorem \cite{tarski1955lattice}, a greatest fixed point exists for such monotonic maps.
In the case of $\Phi$, we further show that this greatest fixed point can be obtained in the limit of the value-set iteration sequence, as long as we initialize $\calV^0$ to the hypercube $\calB = \left[-1/2,\, 1/2 \right]^{n+1}$, which contains all possible inducible values.%
\footnote{Recall that all rewards are in the range $[\rmin, \rmax] = [-\frac{1-\gamma}{2},\frac{1-\gamma}{2}]$.}
Starting from there, $\Phi$ will produce a sequence of nested sets, each contained in their predecessor (\Cref{lmm:nested-sets}). 
Eventually, the sequence converges to $\calV^\star$ as we show in \Cref{thm:Phi-converge}, where we also prove the nonemptiness of $\calV^\star$.%
\footnote{Note that a sequence of nested sets, by itself, does not guarantee convergence to a nonempty set. E.g., the nonempty open interval $(0,1/k)$ does not converge to any nonempty set as $k\to \infty$.
Furthermore, even when it converges to a nonempty set, the set is not necessarily a fixed point of the map.
In the proof of \Cref{thm:Phi-converge}, we demonstrate that, thanks to the closure-preserving property of $\Phi$ (\Cref{lmm:Phi-properties}), convergence to a nonempty fixed point is guaranteed.
}

\begin{restatable}{lemma}{lmmnestedsets}
\label{lmm:nested-sets}
If $\calV^0 = \calB$, then 
$\calV^k \supseteq \calV^{k+1} \supseteq \calV^\star$ 
for all $k \in \mathbb{N}$.
\end{restatable}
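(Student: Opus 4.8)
The plan is to prove both inclusions simultaneously by induction on $k$, using three facts that are available from the earlier development: (i) $\Phi$ is monotone (\Cref{lmm:Phi-monotonicity}); (ii) $\calV^\star$ is a fixed point, $\calV^\star = \Phi(\calV^\star)$ (immediate from \Cref{prp:V-star}); and (iii) the hypercube $\calB = [-1/2,1/2]^{n+1}$ both bounds $\calV^\star$ from above and is mapped into itself by $\Phi$, i.e.\ $\calV^\star \subseteq \calB$ and $\Phi(\calB) \subseteq \calB$.

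First I would establish (iii), which is the only part requiring a (routine) computation. For $\calV^\star \subseteq \calB$: any $\vv \in \calV^\star(s)$ equals $V^\pi(s)$ for some CE $\pi$, and each coordinate $V_i^\pi(s)$ is an expectation of a discounted stream $\sum_{t \ge 1}\gamma^{t-1} r_i(s^t,\va^t)$ with every $r_i \in [\rmin,\rmax] = [-\tfrac{1-\gamma}{2},\tfrac{1-\gamma}{2}]$; summing the geometric series gives $V_i^\pi(s) \in [-\tfrac12,\tfrac12]$, so $\vv \in \calB$. For $\Phi(\calB) \subseteq \calB$: for any $s$ and $(\bar\pi,\ww) \in \calF_s(\calB)$, the $i$-th coordinate $\bellman(s,\bar\pi,\ww)_i = \E_{\va \sim \bar\pi}\big[r_i(s,\va) + \gamma\,\E_{s'} w_i(\va,\va;s')\big]$ is a convex combination of quantities $r_i(s,\va) + \gamma\, w_i(\va,\va;s')$, each lying in $[\rmin - \tfrac{\gamma}{2},\, \rmax + \tfrac{\gamma}{2}] = [-\tfrac12,\tfrac12]$ since $\ww(\cdot,\cdot;s') \in \calB$; hence $\bellman(s,\bar\pi,\ww) \in \calB$.

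Then I would run the induction on the joint statement ``$\calV^k \supseteq \calV^{k+1}$ and $\calV^{k+1} \supseteq \calV^\star$''. Base case $k=0$: by (iii), $\calV^1 = \Phi(\calB) \subseteq \calB = \calV^0$; and since $\calV^\star \subseteq \calB = \calV^0$, monotonicity gives $\calV^\star = \Phi(\calV^\star) \subseteq \Phi(\calV^0) = \calV^1$. Inductive step: assuming $\calV^k \supseteq \calV^{k+1} \supseteq \calV^\star$, applying $\Phi$ and monotonicity to $\calV^k \supseteq \calV^{k+1}$ yields $\calV^{k+1} = \Phi(\calV^k) \supseteq \Phi(\calV^{k+1}) = \calV^{k+2}$, and applying them to $\calV^{k+1} \supseteq \calV^\star$ yields $\calV^{k+2} = \Phi(\calV^{k+1}) \supseteq \Phi(\calV^\star) = \calV^\star$. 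This closes the induction and proves the lemma.

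I do not expect a genuine obstacle here: the argument is essentially the standard ``iterate a monotone map downward from a prefixed point'' pattern. The only place that needs care is fact (iii) — specifically, checking that a single Bellman update keeps every coordinate inside $[-\tfrac12,\tfrac12]$, which is exactly where the reward normalization $[\rmin,\rmax] = [-\tfrac{1-\gamma}{2},\tfrac{1-\gamma}{2}]$ is used, together with the geometric-series bound showing that inducible values themselves never leave that interval.
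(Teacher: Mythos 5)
Your proposal is correct and follows essentially the same route as the paper: establish $\Phi(\calB)\subseteq\calB$ via the reward normalization, note $\calV^\star\subseteq\calB$ and $\calV^\star=\Phi(\calV^\star)$, and then propagate $\calV^\star\subseteq\calV^{k+1}\subseteq\calV^k$ by repeatedly applying the monotone map $\Phi$ (the paper's ``repeatedly applying $\Phi$'' is exactly the induction you spell out). The only difference is that you make explicit the geometric-series bound showing $\calV^\star\subseteq\calB$, which the paper treats as automatic.
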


\begin{restatable}{theorem}{thmPhiconverge}
\label{thm:Phi-converge}
If $\calV^0 = \calB$, then $\calV^k \to \calV^\star$ as $k \to \infty$, i.e.,  
for any $s \in S$, $\vv \in \calV^\star(s)$, and $\vv' \notin \calV^\star(s)$, there exists $k_0 \in \mathbb{N}$ such that $\vv \in \calV^k(s)$ and $\vv' \notin \calV^k(s)$ for all $k \ge k_0$.
Moreover, $\calV^\star = \bigcap_{k=1}^\infty \calV^k \neq \emptyset$.
\end{restatable}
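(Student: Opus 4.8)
The plan is to exploit that value-set iteration from $\calV^0 = \calB$ produces, by \Cref{lmm:nested-sets}, a nested decreasing sequence $\calV^0 \supseteq \calV^1 \supseteq \cdots$ with every $\calV^k \supseteq \calV^\star$. I would set $\calV^\infty(s) := \bigcap_{k=1}^\infty \calV^k(s)$; then $\calV^\infty \supseteq \calV^\star$ is immediate, and the whole theorem reduces to showing (i) $\calV^\infty(s) \neq \emptyset$ for all $s$, and (ii) $\calV^\infty = \calV^\star$. The stated ``convergence'' claim is then a free corollary of nestedness: if $\vv\in\calV^\star(s)$ it lies in every $\calV^k(s)$ since $\calV^\star \subseteq \calV^k$, and if $\vv'\notin\calV^\star(s) = \bigcap_k\calV^k(s)$ then $\vv'\notin\calV^{k_0}(s)$ for some $k_0$, hence $\vv'\notin\calV^k(s)$ for all $k\ge k_0$ because $\calV^{k_0}\supseteq\calV^k$.

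For (i), I would first record that, by \Cref{lmm:Phi-properties} and induction from the compact base $\calV^0 = \calB$, each $\calV^k(s)$ is closed, convex, and nonempty, and it is contained in $\calB$ (each coordinate of $\bellman(s,\bar{\pi},\ww)$ lies in $[\rmin,\rmax] + \gamma[-\tfrac12,\tfrac12] = [-\tfrac12,\tfrac12]$ whenever $\ww$ takes values in $\calB$). Thus each $\calV^k(s)$ is a nonempty compact subset of $\mathbb{R}^{n+1}$, and a nested decreasing family of nonempty compact sets has nonempty intersection, giving $\calV^\infty(s)\neq\emptyset$.

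For (ii), since $\calV^\infty \supseteq \calV^\star$ always holds, \Cref{prp:Phi-fixed-point} tells us it suffices to prove $\calV^\infty$ is a fixed point of $\Phi$. The inclusion $\Phi(\calV^\infty)\subseteq\calV^\infty$ is easy: from $\calV^\infty\subseteq\calV^k$ and monotonicity (\Cref{lmm:Phi-monotonicity}), $\Phi(\calV^\infty)\subseteq\Phi(\calV^k)=\calV^{k+1}$ for every $k$, so $\Phi(\calV^\infty)\subseteq\bigcap_k\calV^{k+1}=\calV^\infty$. The reverse inclusion is the technical core. Fix $s$ and $\vv\in\calV^\infty(s)$; for each $k\ge1$, $\vv\in\calV^k(s)=\Phi(\calV^{k-1})(s)$, so by \Cref{eq:Phi} there is a witness $(\bar{\pi}^k,\ww^k)\in\calF_s(\calV^{k-1})$ with $\vv=\bellman(s,\bar{\pi}^k,\ww^k)$. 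These witnesses live in the compact set $\{(\bar{\pi},\ww) : \bar{\pi}\in\Delta(\bfA),\ \ww(\cdot,\cdot;\cdot)\in\calB\}$, so I would pass to a convergent subsequence $(\bar{\pi}^{k_j},\ww^{k_j})\to(\bar{\pi},\ww)$ and argue the limit is a valid witness against $\calV^\infty$: continuity of $\bellman$ in $(\bar{\pi},\ww)$ (a finite weighted sum) gives $\bellman(s,\bar{\pi},\ww)=\vv$; the IC inequalities \Cref{eq:const-ic} are closed conditions in $(\bar{\pi},\ww)$, hence inherited by the limit; and for the onward-value constraint, fixing any $m$ and any $(\va,\bb;s')$, for all large $j$ we have $k_j-1\ge m$ and so $\ww^{k_j}(\va,\bb;s')\in\calV^{k_j-1}(s')\subseteq\calV^m(s')$ by nestedness, whence $\ww(\va,\bb;s')\in\calV^m(s')$ by closedness of $\calV^m(s')$; letting $m\to\infty$ gives $\ww(\va,\bb;s')\in\bigcap_m\calV^m(s')=\calV^\infty(s')$. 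Hence $(\bar{\pi},\ww)\in\calF_s(\calV^\infty)$ and $\vv\in\Phi(\calV^\infty)(s)$, so $\calV^\infty\subseteq\Phi(\calV^\infty)$. Combining the two inclusions, $\calV^\infty=\Phi(\calV^\infty)$, so $\calV^\infty\subseteq\calV^\star$ by \Cref{prp:Phi-fixed-point}, and therefore $\calV^\infty=\calV^\star$, which is nonempty by (i).

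I expect the main obstacle to be exactly the reverse inclusion in (ii): ensuring that the limiting witness satisfies the onward-value constraint with respect to $\calV^\infty$ itself and not merely some finite-stage $\calV^k$. This is the step that would fail without the closure-preservation guaranteed by \Cref{lmm:Phi-properties} (as the footnote's open-interval example illustrates), and it is what forces us to extract a subsequential limit of witnesses rather than argue pointwise; the reward normalization to $[-\tfrac12,\tfrac12]$ is what supplies the compactness that makes this limit exist.
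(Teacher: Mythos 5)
Your proposal is correct and follows essentially the same route as the paper's proof: reduce via the nestedness lemma to showing $\bigcap_k \calV^k$ is a nonempty fixed point of $\Phi$, extract a Bolzano--Weierstrass subsequential limit of the per-iteration witnesses $(\bar{\pi}^k,\ww^k)$, use closedness (from the closure-preservation lemma) and continuity of the Bellman and IC expressions to show the limit witness lies in $\calF_s(\calV^\infty)$, and conclude with the greatest-fixed-point lemma. Your treatment is, if anything, slightly more explicit than the paper's on the inclusion $\Phi(\calV^\infty)\subseteq\calV^\infty$ via monotonicity and on the compactness of the $\calV^k(s)$ used for nonemptiness.
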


Since convergence to $\calV^\star$ may require infinitely many iterations, a practical approach is to terminate the process when $\calV^k$ is sufficiently close to $\calV^\star$.
This is why a $\delta$ slack in the incentive constraints is necessary for our computational approach.
Moreover, for the approach to be tractable, we will maintain an approximation of $\calV^k$ because the space required for representing the exact $\calV^k$ can grow exponentially as $k$ increases.
We introduce an approximate version of $\Phi$ next.

\subsection{Approximately Fixed Point}
Hereafter, we let $\xi = (1-\gamma) \cdot \min \{\epsilon,\, \delta/2\}$
and assume w.l.o.g. that $1/2$ is a multiple of $\xi$ to sidestep trivial rounding issues.
We define the following approximate map $\xPhi$.
For all $s \in S$,
\[
\xPhi(\calV)(s) \coloneqq
\conv\Big(
\neigh_\xi \big(\Phi(\calV)(s) \big)
\;\cap\; G_\xi
\Big).
\]
Here, $\conv(\cdot)$ denotes the convex hull; 
$G_\xi = \{\xi \cdot \xx \mid \xx \in \mathbb{Z}^n \}$ is the set of grid points whose coordinates are multiples of $\xi$;
and
$\neigh_\xi(X) \coloneqq \bigcup_{\xx \in X} \neigh_\xi(\xx)$ is the $\xi$-neighborhood of a set $X$, where we abuse notation and let $\neigh_\xi(\xx) = \left\{ \xx' \in \mathbb{R}^{n+1} \,\middle|\, \|\xx' - \xx \|_\infty \le \xi \right\}$ for every $\xx \in \mathbb{R}^{n+1}$.

Namely, $\xPhi(\calV)(s)$ is the convex hull of those grid points in the $\xi$-neighborhood of  $\Phi(\calV)(s)$. 
The following remarks can help to understand this construction.%
\begin{itemize}
\item 
Using the grid points allows us to later reduce the problem of computing $\xPhi(\calV)(s)$ to checking if any given point is inside $\neigh_\xi \big(\Phi(\calV)(s) \big)$. 

\item Expanding $\Phi(\calV)(s)$ to its neighborhood ensures that $\xPhi(\calV)(s)$ contains every point in $\Phi(\calV)(s)$.
This prevents value-set iteration from collapsing to sets strictly smaller than those in $\calV^\star$. 
\end{itemize}

The following lemma shows that $\xPhi$ effectively approximates $\Phi$: it is bounded between $\Phi$ and its $\xi$-neighborhood.

\begin{restatable}{lemma}{lmmxPhiPhi}
\label{lmm:xPhi-Phi}
$\Phi(\calV) \subseteq \xPhi(\calV) \subseteq \neigh_\xi(\Phi(\calV))$ if $\calV(s)$ is convex for every $s \in S$.%
\hideinrestate{\footnote{For any $\calV: S \rightrightarrows \mathbb{R}^{n+1}$, by $\neigh_\xi(\calV)$ we mean the value-set function consisting of the $\xi$-neighborhood of each $\calV(s)$, i.e., $\neigh_\xi(\calV) =  \left( \neigh_\xi (\calV(s)) \right)_{s \in S}$.}}
\end{restatable}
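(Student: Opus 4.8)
The plan is to establish the two inclusions separately, using only elementary properties of convex hulls, $\xi$-neighborhoods, and the grid $G_\xi$, together with the convexity of each $\calV(s)$ (which via \Cref{lmm:Phi-properties} passes to convexity of $\Phi(\calV)(s)$). Fix a state $s\in S$ and write $P = \Phi(\calV)(s)$, a convex set by \Cref{lmm:Phi-properties}.

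\emph{Lower inclusion, $P \subseteq \xPhi(\calV)(s)$.} Take any $\vv \in P$. I want to exhibit $\vv$ as a convex combination of grid points lying in $\neigh_\xi(P)$. The natural move is to look at the (at most $2^{n+1}$) corners of the axis-aligned grid cell of side $\xi$ containing $\vv$: each such corner $\gg \in G_\xi$ satisfies $\|\gg - \vv\|_\infty \le \xi$, hence $\gg \in \neigh_\xi(\vv) \subseteq \neigh_\xi(P)$, so $\gg \in \neigh_\xi(P)\cap G_\xi$. Moreover $\vv$ is a convex combination of these corners (the standard multilinear/barycentric weights for a point inside a box), so $\vv \in \conv(\neigh_\xi(P)\cap G_\xi) = \xPhi(\calV)(s)$. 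This does not even use convexity of $P$; it only uses that $\vv$ sits in some grid cell whose corners are all within $\ell_\infty$-distance $\xi$. (One should note the normalization assumption that $1/2$ is a multiple of $\xi$ guarantees $G_\xi$ is aligned so that the relevant corners are genuine grid points, but since $G_\xi$ is all of $\xi\mathbb{Z}^n$ translated to the origin this is automatic; I mention it only to preempt a rounding objection.)

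\emph{Upper inclusion, $\xPhi(\calV)(s) \subseteq \neigh_\xi(P)$.} By construction $\xPhi(\calV)(s) = \conv(\neigh_\xi(P)\cap G_\xi) \subseteq \conv(\neigh_\xi(P))$, so it suffices to show $\conv(\neigh_\xi(P)) \subseteq \neigh_\xi(P)$, i.e.\ that $\neigh_\xi(P)$ is itself convex. Here is where convexity of $P$ is used: $\neigh_\xi(P) = P + B_\infty(\xi)$ is a Minkowski sum of two convex sets ($P$ is convex, and the $\ell_\infty$-ball $B_\infty(\xi)=\neigh_\xi(\mathbf 0)$ is convex), hence convex. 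Therefore its convex hull is itself, and we get $\xPhi(\calV)(s) \subseteq \neigh_\xi(P) = \neigh_\xi(\Phi(\calV))(s)$, as claimed.

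\emph{Main obstacle.} The argument is essentially bookkeeping; the only place requiring a little care is the lower inclusion, specifically writing $\vv$ explicitly as a convex combination of the corners of its grid cell and verifying all corners land in $\neigh_\xi(P)\cap G_\xi$ — a boundary point $\vv$ lying exactly on a grid hyperplane simply gives a degenerate cell, which is harmless. The upper inclusion hinges entirely on the observation that a Minkowski sum of convex sets is convex, so the real content of the lemma is just recognizing $\neigh_\xi(P) = P + B_\infty(\xi)$. I would present the lower inclusion first (it is the one that matters for preventing collapse of value-set iteration) and then dispatch the upper inclusion in two lines via the Minkowski-sum remark.
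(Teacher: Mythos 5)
Your proposal is correct and follows essentially the same route as the paper's proof: the lower inclusion via the corners of the grid cell containing $\vv$ (all of which lie in $\neigh_\xi(\Phi(\calV)(s)) \cap G_\xi$), and the upper inclusion by noting $\xPhi(\calV)(s)$ is the convex hull of a subset of $\neigh_\xi(\Phi(\calV)(s))$, which is convex because $\Phi(\calV)(s)$ is convex by \Cref{lmm:Phi-properties}. Your Minkowski-sum and barycentric-weight remarks merely make explicit two steps the paper leaves implicit.
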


More crucially, by noting that $\xPhi(\calV)(s)$ is convex by construction, we have the following corollary: any fixed point of the approximate map $\xPhi$ is {\em approximately fixed} under the original map $\Phi$.

\begin{corollary}
\label{crl:approx-fixed}
If $\calV = \xPhi(\calV)$, then $\Phi(\calV) \subseteq \calV \subseteq \neigh_\xi(\Phi(\calV))$.
\end{corollary}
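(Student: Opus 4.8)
The plan is to derive \Cref{crl:approx-fixed} directly from \Cref{lmm:xPhi-Phi} by substituting the fixed-point hypothesis $\calV = \xPhi(\calV)$. For this substitution to be legitimate, I first need to confirm that \Cref{lmm:xPhi-Phi} applies, which requires $\calV(s)$ to be convex for every $s \in S$. This is immediate: since $\calV = \xPhi(\calV)$, each $\calV(s) = \xPhi(\calV)(s)$ is by construction a convex hull, hence convex. So the hypothesis of \Cref{lmm:xPhi-Phi} is satisfied.

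With convexity in hand, \Cref{lmm:xPhi-Phi} gives $\Phi(\calV) \subseteq \xPhi(\calV) \subseteq \neigh_\xi(\Phi(\calV))$. Now I simply replace $\xPhi(\calV)$ by $\calV$ using the fixed-point equation, obtaining $\Phi(\calV) \subseteq \calV \subseteq \neigh_\xi(\Phi(\calV))$, which is exactly the claim. Concretely, the two inclusions I want are: (i) $\Phi(\calV) \subseteq \calV$, which follows from $\Phi(\calV) \subseteq \xPhi(\calV) = \calV$; and (ii) $\calV \subseteq \neigh_\xi(\Phi(\calV))$, which follows from $\calV = \xPhi(\calV) \subseteq \neigh_\xi(\Phi(\calV))$.

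There is essentially no obstacle here — this is a two-line corollary. The only subtlety worth being careful about is making sure the convexity precondition is genuinely established before invoking \Cref{lmm:xPhi-Phi}, since that lemma is stated conditionally; but as noted, convexity is automatic from the form of $\xPhi$. I would write the proof in a couple of sentences, spelling out the convexity observation and then chaining the inclusions.

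\begin{proof}
Since $\calV = \xPhi(\calV)$, for every $s \in S$ the set $\calV(s) = \xPhi(\calV)(s)$ is a convex hull and hence convex. Thus \Cref{lmm:xPhi-Phi} applies and gives $\Phi(\calV) \subseteq \xPhi(\calV) \subseteq \neigh_\xi(\Phi(\calV))$. Substituting $\xPhi(\calV) = \calV$ yields $\Phi(\calV) \subseteq \calV \subseteq \neigh_\xi(\Phi(\calV))$, as claimed.
\end{proof}
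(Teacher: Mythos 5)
Your proof is correct and follows exactly the reasoning the paper uses (noting that $\xPhi(\calV)(s)$ is convex by construction, so the fixed-point hypothesis makes \Cref{lmm:xPhi-Phi} applicable, and then substituting $\xPhi(\calV) = \calV$). Nothing is missing.
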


\setlength{\algomargin}{1em} 
\begin{algorithm}[t]
\caption{Value-set iteration \label{alg:value-set-iteration}}
\setstretch{1.2}

$\calV(s) \leftarrow \neigh_\xi(\calB)$ for all $s \in S$%
\tcp*{$\calB = [-1/2, \, 1/2]^{n+1}$}

\lWhile{$\calV \neq \xPhi(\calV)$}
{$\calV \leftarrow \xPhi(\calV)$}
\Return $\calV$;

\end{algorithm}

We then aim to obtain a fixed point of $\xPhi$ and we do so by running value-set iteration, using $\xPhi$ as the update map as described in \Cref{alg:value-set-iteration}.
Indeed, $\xPhi$ is also monotonic and satisfies the properties in \Cref{lmm:Phi-properties}. 
Moreover, it always maps $\neigh_\xi(\calB)$ to its subset (given that $\xi$ divides $1/2$), so, similarly to the case with $\Phi$, initializing the value sets to $\neigh_\xi(\calB)$ results in a sequence of nested sets converging to a fixed point of $\xPhi$.

\Cref{prp:eps-neigh-inducible} further shows that every fixed point of $\xPhi$ can be induced approximately.
Here, we extend our previous inducibility notion to $(\epsilon,\delta)$-inducibility, which allows an $\epsilon$ offset in the induced value.  
The lemma generalizes \Cref{prp:Phi-fixed-point}, with the latter being the case where $\xi = 0$.

\begin{definition}[$(\epsilon,\delta)$-inducibility]
A value vector $\vv \in \mathbb{R}^{n+1}$ is {\em $(\epsilon,\delta)$-inducible} at state $s$ if some $\tilde{\vv} \in \neigh_\epsilon(\vv)$ is $\delta$-inducible at state $s$.
\end{definition}

\begin{restatable}{lemma}{thmepsneighinducible}
\label{prp:eps-neigh-inducible}
If $\calV = \xPhi(\calV)$, then every $\vv \in \calV(s_\init)$ is $(\epsilon,\delta)$-inducible at $s_\init$.%
\hideinrestate{\footnote{This also holds for every $s \in S$ by noting that the choice of $s_\init$ is arbitrary.}}
\end{restatable}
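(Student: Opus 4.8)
The plan is to mimic the "indefinite expansion" argument sketched after Lemma \ref{prp:Phi-fixed-point}, but now carrying an $\epsilon$-error and a $\delta$-slack through the recursion. Fix a value-set function $\calV$ with $\calV = \xPhi(\calV)$. By Corollary \ref{crl:approx-fixed} we have $\Phi(\calV) \subseteq \calV \subseteq \neigh_\xi(\Phi(\calV))$, and by Lemma \ref{lmm:Phi-properties} (which $\xPhi$ inherits) each $\calV(s)$ is convex, closed, and nonempty. Take any $\vv \in \calV(s_\init)$. Since $\calV(s_\init) \subseteq \neigh_\xi(\Phi(\calV)(s_\init))$, there is a point $\vv^{(1)} \in \Phi(\calV)(s_\init)$ with $\|\vv - \vv^{(1)}\|_\infty \le \xi$, and by definition of $\Phi$ there is a tuple $(\bar\pi^{1}, \ww^{1}) \in \calF_{s_\init}(\calV)$ with $\vv^{(1)} = \bellman(s_\init, \bar\pi^{1}, \ww^{1})$. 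Crucially, the onward values $\ww^{1}(\va,\bb;s')$ lie in $\calV(s')$, so they are themselves "one-step expandable" in exactly the same way. Iterating, I would build a (randomized, history-dependent) policy $\pi$ stepwise: at a history of length $t$ ending in state $s$, having been handed a target onward value $\uu \in \calV(s)$ by the previous step, pick a representative $\uu' \in \Phi(\calV)(s)$ with $\|\uu - \uu'\|_\infty \le \xi$, pick $(\bar\pi, \ww) \in \calF_s(\calV)$ realizing $\uu'$, play $\bar\pi$ at this history, and pass $\ww(\va,\va;s')$ as the target onward value into the next step. (Here one uses that the IC constraint \eqref{eq:const-ic} involves the diagonal entries $\ww(\va,\va;s')$; off-diagonal entries are only needed to certify IC and never become "realized" onward values along the support of $\pi$.)

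Next I would establish two facts about this $\pi$. First, the $\delta$-CE property: at every history, \eqref{eq:const-ic} holds for the chosen $(\bar\pi,\ww)$ exactly (with no slack), but the onward values used to define $\pi$ downstream are $\ww(\va,\va;s')$ shifted by at most $\xi$ in $\ell_\infty$ from the $V^\pi$-values actually realized (because each downstream step introduces a fresh $\xi$-perturbation). The standard telescoping/geometric-series bound gives $\|V^{\pi}(\sigma;s) - (\text{target passed in})\|_\infty \le \xi/(1-\gamma)$, uniformly over all histories; feeding this discrepancy into \eqref{eq:const-ic} (both the adherence side and the deviation side of the inequality shift by at most $\gamma \cdot \xi/(1-\gamma)$ each) shows the realized policy satisfies the IC condition of Definition \ref{def:delta-IC} with slack at most $2\gamma\xi/(1-\gamma) \le 2\xi/(1-\gamma) \le \delta$, using $\xi \le (1-\gamma)\delta/2$. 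It also suffices, as the text notes, to check immediate deviations: since $\pi$ is IC in every subgame with the stated slack, a one-step deviation argument (compare $Q^{\pi}$ and $Q^{\pi,\rho}$ along any deviation plan $\rho$, bounding the tail by the subgame IC inequalities) upgrades immediate-deviation IC to full deviation-plan IC with the same $\delta$. Second, the value guarantee: unrolling \eqref{eq:const-bellman} along $\pi$, the value $V^{\pi}(s_\init)$ equals the discounted sum $\sum_{t\ge 0}\gamma^t(\text{per-step Bellman residuals})$, and at each step the residual between "value we aimed for" and "value realized by playing $\bar\pi$ then recursing" is at most $\gamma \cdot \xi$ in $\ell_\infty$ (the gap between the passed-in target in $\calV(s')$ and its $\Phi(\calV)(s')$-representative). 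Summing the geometric series, $\|\vv - V^{\pi}(s_\init)\|_\infty \le \xi/(1-\gamma) \le \epsilon$ since $\xi \le (1-\gamma)\epsilon$. Hence $\tilde\vv := V^{\pi}(s_\init)$ is $\delta$-inducible and lies in $\neigh_\epsilon(\vv)$, which is exactly $(\epsilon,\delta)$-inducibility of $\vv$.

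The main obstacle — and the step deserving the most care — is making the stepwise construction of $\pi$ rigorous as a genuine function $\pi : \Sigma \times S \to \Delta(\bfA)$ together with a well-defined "target value" label on each history, and then proving that the infinite unrolling actually converges to the claimed $V^\pi$. This is where one must be careful that (i) the choice of representative $\uu' \in \Phi(\calV)(s)$ and of realizing tuple $(\bar\pi,\ww) \in \calF_s(\calV)$ can be made measurably/consistently (a selection argument, or simply fixing one choice per (history, target) pair since $\Sigma$ is countable), and (ii) the $\gamma$-discounted tail of the constructed value process vanishes, so that $V^{\pi}$ is well-defined and satisfies the Bellman recursion with the residuals bookkept above — this is where the uniform reward bound $r_i \in [\rmin,\rmax]$ and $\gamma < 1$ enter. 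One convenient way to organize (ii): define, for each finite truncation depth $T$, the policy that follows the construction for $T$ steps and then plays arbitrarily, bound $\|\vv - V^{\pi,T}(s_\init)\|_\infty$ by $\xi \sum_{t=0}^{T-1}\gamma^{t} + \gamma^{T}\cdot 1$ using $|r_i|\le(1-\gamma)/2$ so cumulative reward is in $[-1/2,1/2]$, then let $T \to \infty$; a standard argument (continuity of discounted value in the policy, or direct coupling) shows $V^{\pi,T} \to V^{\pi}$, giving the bound $\xi/(1-\gamma)$ in the limit. The IC slack bound is then obtained by the same truncation-and-limit device applied to \eqref{eq:const-ic}. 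Everything else is the routine geometric-series arithmetic that $\xi = (1-\gamma)\min\{\epsilon,\delta/2\}$ was chosen to make come out cleanly.
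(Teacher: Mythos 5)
Your overall strategy is the paper's: the paper proves this lemma by constructing the policy stepwise exactly as you describe (this is \Cref{alg:compute-pi}), carrying a target value per history, replacing it by a representative in $\Phi(\calV)$ within $\xi$, and doing the geometric-series bookkeeping (formalized there via a subsidized game and a truncated game) to get value error $\xi/(1-\gamma)\le\epsilon$ and IC slack $2\cdot\xi/(1-\gamma)\le\delta$. So the architecture of your argument is sound and matches \Cref{lmm:eps-neigh-inducible-alg}.

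However, there is one genuine gap: your treatment of histories that contain deviations. You pass only the diagonal entries $\ww(\va,\va;s')$ as onward targets and assert that the off-diagonal entries ``never become realized onward values along the support of $\pi$.'' That is true on the equilibrium path, but the $\delta$-CE property is precisely a statement about off-path play. The deviation side of \eqref{eq:const-ic} uses $w_i(\va,\va\oplus_i b;s')$ as the deviator's continuation value; this certifies (approximate) incentive compatibility only if, after the players observe a deviation ($\bb\neq\va$), the continuation policy actually targets $\ww(\va,\bb;s')$ (up to the accumulated $\xi/(1-\gamma)$ drift). If instead your construction continues toward the diagonal value $\ww(\va,\va;s')$ after an observed deviation, the deviator's realized continuation value is $r_i(s,\va\oplus_i b)+\gamma\E_{s'\sim p(\cdot\mid s,\va\oplus_i b)} w_i(\va,\va;s')$, a quantity that \eqref{eq:const-ic} does not bound at all --- typically the off-diagonal entry is a punishment value strictly below the diagonal one, so the deviation could become profitable by much more than $\delta$, and your claim that ``the deviation side shifts by at most $\gamma\xi/(1-\gamma)$'' has no justification. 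The fix is exactly what the paper's \Cref{alg:compute-pi} does at \Cref{ln:vv-t}: index onward targets by both the recommended and the actually played joint actions and pass $\ww(\va^t,\bb^t;s^{t+1})$, so that off-path continuations implement the values referenced in \eqref{eq:const-ic}; note also that \eqref{eq:const-ic} only guarantees $\ww(\va,\bb;s')\in\calV(s')$ for \emph{all} $\va,\bb,s'$, which is what makes the recursion go through on off-path histories as well. With that correction, the rest of your argument (well-definedness by induction, truncation-and-limit to handle the infinite horizon, the one-shot-deviation upgrade to arbitrary deviation plans, and the $\xi=(1-\gamma)\min\{\epsilon,\delta/2\}$ arithmetic) aligns with the paper's proof.
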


The proof of \Cref{prp:eps-neigh-inducible} follows the same intuition as that of \Cref{prp:Phi-fixed-point} but requires a more sophisticated construction. 
Intuitively, we can still induce each $\vv \in \calV(s)$ by expanding it into onward values, and iteratively expanding the onward values in the same way.
If $\calV$ were equal to $\Phi(\calV)$ as in \Cref{prp:Phi-fixed-point}, we could always find onward values within $\calV$ itself. 
But now that we only have $\calV = \xPhi(\calV)$, we cannot rule out the case where $\Phi(\calV) \subsetneq \calV$. Hence, we may need onward values outside of $\calV$ to expand $\vv$.
To prevent the onward values from drifting away arbitrarily from $\calV$, we use a more controlled procedure that forces the values back into $\Phi(\calV)$ in every expansion step.

\let\oldnl\nl
\newcommand{\nonl}{\renewcommand{\nl}{\let\nl\oldnl}}

\begin{algorithm}[t]
\caption{Computing $\pi(\sigma; s)$ of a policy $\pi$ that induces $\vv$ approximately
\label{alg:compute-pi}}
\setstretch{1.2}

\SetKwInOut{Input}{input}
\Input{%
$\calV$ such that $\calV = \xPhi(\calV)$%
\tcp*{so $\Phi(\calV) \subseteq \calV \subseteq \neigh_\xi(\Phi(\calV))$ by \Cref{crl:approx-fixed}}\\ 
\nonl 
$\sigma = (s^1,\va^1, \bb^1;\dots;s^\ell,\bb^\ell, \va^\ell)$ and $s \in S$; \\ 
\nonl 
a vector $\vv \in \calV(s^1)$ to be induced.
}

\smallskip

$\vv^1 \leftarrow \vv$, and $s^{\ell+1} \leftarrow s$;

\For{$t=1,\dots,\ell+1$}
{

Find $\tilde{\vv}^t \in \neigh_\xi(\vv^t)$ such that $\tilde{\vv}^t = \bellman(s^t, \bar{\pi}, \ww)$ for some $(\bar{\pi}, \ww) \in \calF_{s^t}(\calV)$;
~~~~~~~~~~~~~~~~
\tcp{so $\tilde{\vv}^t \in \Phi(\calV)(s^t)$ by \Cref{eq:Phi}}
\label{ln:compute-vv}

\smallskip

\lIf{$t \le \ell$}{$\vv^{t+1} \leftarrow \ww(\va^t, \bb^t; s^{t+1})$}
\label{ln:vv-t}

}

\Return $\bar{\pi}^{\ell+1}$;
\end{algorithm}

More formally, this procedure is described in \Cref{alg:compute-pi} (which will later be integrated into our computation method).
In the algorithm, we force each $\vv^t$ back into $\Phi(\calV)(s^t)$ by actually expanding a value $\tilde{\vv}^t$ in the neighborhood of $\vv^t$ (see \Cref{fig:computing-pi}).
So long as $\calV \subseteq \neigh_\xi(\Phi(\calV))$, we can always find such a $\tilde{\vv}^t$, that can be expanded into onward values inside $\calV$.
Iteratively, the procedure then continues with $\calV$ being an invariant, similarly to the procedure we used for proving \Cref{prp:Phi-fixed-point}.

The error introduced by forcing the values back into $\Phi(\calV)$ accumulates over time but is bounded by $\xi/(1-\gamma) < \min\{\epsilon, \delta/2\}$, thanks to discounting.
Overall, the actual value induced differs from the original target $\vv$ by at most $\epsilon$, while the IC constraints are violated by at most $\delta$.
Formalizing the argument, we can prove \Cref{lmm:eps-neigh-inducible-alg}, and in turn \Cref{prp:eps-neigh-inducible} as its corollary.

\begin{restatable}{lemma}{lmmepsneighinduciblealg}
\label{lmm:eps-neigh-inducible-alg}
Fix any $\calV = \xPhi(\calV)$ and $\vv \in \calV(s_\init)$ in the input to \Cref{alg:compute-pi} and let $\pi$ be the policy where $\pi(\sigma; s)$ is equal to the output of the algorithm, for every $(\sigma; s) \in \Sigma \times S$ starting at $s_\init$.
Then, $\pi$ is a $\delta$-CE and $V^\pi(s_\init) \in \neigh_\epsilon(\vv)$.
\end{restatable}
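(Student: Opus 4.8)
The plan is to track, along the run of \Cref{alg:compute-pi}, how far the actually-induced value drifts from the target $\vv$, and to bound the incentive violation of the policy $\pi$ thus defined. The key observation is that the loop in \Cref{alg:compute-pi} is exactly the ``controlled expansion'' process: at step $t$ we replace $\vv^t$ by a nearby $\tilde\vv^t \in \neigh_\xi(\vv^t) \cap \Phi(\calV)(s^t)$ (which exists because $\calV = \xPhi(\calV)$ implies $\calV \subseteq \neigh_\xi(\Phi(\calV))$ by \Cref{crl:approx-fixed}), witnessed by $(\bar\pi^t, \ww^t) \in \calF_{s^t}(\calV)$, and then hand the onward value $\vv^{t+1} = \ww^t(\va^t,\bb^t;s^{t+1}) \in \calV(s^{t+1})$ to the next iteration. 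So the construction is well-defined for every $(\sigma;s)$, and $\pi(\sigma;s) = \bar\pi^{|\sigma|+1}$ is drawn from $\Delta(\bfA)$ as required.

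First I would establish the \emph{IC property}. Fix any $(\sigma;s,a)$ and let $t = |\sigma|+1$, so $\pi(\sigma;s) = \bar\pi^t$ and, by \Cref{ln:compute-vv}, $(\bar\pi^t,\ww^t)$ satisfies \Cref{eq:const-ic}. The subtlety is that \Cref{eq:const-ic} is stated in terms of the $w_i$ values carried in $\ww^t$, whereas the true $Q^\pi_i$ appearing in \Cref{def:delta-IC} is the value $\pi$ actually induces from the successor sequence $(\sigma;s,\va,\va;s')$. So I would show that for every $(\va,\bb;s')$ the discrepancy $|w^t_i(\va,\bb;s') - V^\pi_i(\sigma;s,\va,\bb;s')|$ is at most some uniform bound $\eta$; then substituting into \Cref{eq:const-ic} replaces $\delta = 0$ by a slack of $2\gamma\eta$ on each side of the IC inequality (the $\gamma$ from discounting, the factor $2$ from the two sides), so $\pi$ is a $(2\gamma\eta)$-CE. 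It therefore suffices to prove $\eta \le \delta/(2\gamma)$, for which $\eta \le \xi/(1-\gamma) \le (\delta/2)$ is more than enough since $\gamma<1$.

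Next I would establish the \emph{value bound and the uniform estimate $\eta$ simultaneously}, by a telescoping/discounting argument. Let $e^t = \|\vv^t - \tilde\vv^t\|_\infty \le \xi$ be the ``rounding'' error committed at step $t$, and note $\tilde\vv^t = \bellman(s^t,\bar\pi^t,\ww^t) = \E_{\va\sim\bar\pi^t}(\rr(s^t,\va) + \gamma\,\E_{s'} \ww^t(\va,\va;s'))$, while $V^\pi(\sigma';s^t)$ for the corresponding prefix $\sigma'$ equals $\E_{\va\sim\bar\pi^t}(\rr(s^t,\va)+\gamma\,\E_{s'} V^\pi(\sigma';s^t,\va,\va;s'))$, and the onward value fed forward is $\vv^{t+1}=\ww^t(\va^t,\bb^t;s^{t+1})$. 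Unrolling, the gap between the target $\vv = \vv^1$ and $V^\pi(s_\init)$ is bounded by $\sum_{t\ge1}\gamma^{t-1} e^t \le \xi\sum_{t\ge0}\gamma^t = \xi/(1-\gamma) \le \epsilon$, giving $V^\pi(s_\init)\in\neigh_\epsilon(\vv)$; the same geometric-series estimate, applied from an arbitrary node rather than the root, yields $\eta = \|\ww^t(\cdot;\cdot) - V^\pi(\text{successor})\|_\infty \le \xi/(1-\gamma) \le \delta/2$, closing the loop with the IC argument above. The one place needing care --- and the main obstacle --- is making the recursion honest: $V^\pi$ at a node is defined by $\pi$'s behaviour on \emph{all} continuations, and one must check that the onward value $\vv^{t+1}$ handed to the child is consistent with whatever value the child's own invocation of \Cref{alg:compute-pi} will reconstruct, regardless of which recommended/played pair $(\va^t,\bb^t)$ actually occurs. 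This holds because the algorithm's state going into step $t+1$ depends only on $(\sigma;s)$ through $\vv^{t+1}=\ww^t(\va^t,\bb^t;s^{t+1})$ and $\ww^t\in\calV$ componentwise, so the invariant ``the input vector to each invocation lies in $\calV(\cdot)$'' is preserved; once this is pinned down, the telescoping estimate is routine. I would package the per-node discrepancy bound as a clean induction on the tree of continuations (or, more cleanly, bound $\|V^\pi - \tilde\calV\|$ where $\tilde\calV$ is read off from the $\tilde\vv$'s) and then read off both conclusions.
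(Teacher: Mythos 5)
Your well-definedness argument and your bound on the value drift are sound and essentially coincide with the paper's (the per-step rounding error $\xi$ accumulates to $\xi/(1-\gamma)=\min\{\epsilon,\delta/2\}\le\epsilon$, giving $V^\pi(s_\init)\in\neigh_\epsilon(\vv)$). The genuine gap is in the incentive part. Definition~\ref{def:delta-IC} quantifies over \emph{arbitrary} deviation plans $\rho$, which may keep deviating at every future step, so the quantity on the deviation side is $Q_i^{\pi,\rho}$, whose continuation is $V_i^{\pi,\rho}$ under continued deviation --- not $V_i^{\pi}$ at the off-path successor. Your substitution of the bound $|w_i^t(\va,\bb;s')-V_i^{\pi}(\sigma;s,\va,\bb;s')|\le\eta$ into \Cref{eq:const-ic} therefore only establishes robustness against ``deviate once, then conform'' plans. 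Completing it by the standard approximate one-shot-deviation argument costs a factor $1/(1-\gamma)$: a per-node slack of $2\gamma\eta$ only bounds the gain of an arbitrary plan by $2\gamma\eta/(1-\gamma)\approx 2\gamma\min\{\epsilon,\delta/2\}/(1-\gamma)$, which exceeds $\delta$ when $\gamma$ is close to $1$. So the proof as sketched yields a strictly weaker conclusion than the lemma with the paper's choice of $\xi=(1-\gamma)\min\{\epsilon,\delta/2\}$.

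The paper avoids this accumulation by a different device: it introduces a \emph{subsidized} game in which each player additionally receives $\tilde{\rr}(\sigma;s)=\vv^{\ell+1}(\sigma;s)-\tilde{\vv}^{\ell+1}(\sigma;s)$ at every node, so that the promised onward values $\ww$ become the \emph{exact} continuation values; a backward induction over a truncated version (then letting the truncation horizon tend to infinity) shows $\pi$ is an exact CE of the subsidized game, against all deviation plans, and induces exactly $\vv$ there. Since the subsidy is bounded by $\xi$ at every step, the V- and Q-values in the original game differ from the subsidized ones by at most $\xi/(1-\gamma)\le\delta/2$ \emph{uniformly over all deviation plans}, so the exact IC transfers to $\delta$-IC without any $1/(1-\gamma)$ blow-up. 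To repair your argument you need this (or an equivalent ``promise-keeping game'') uniform comparison rather than the per-node substitution of $V^\pi$ for $\ww$.
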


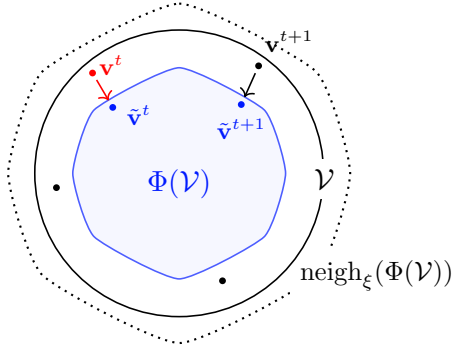
\begin{figure}
\centering
\begin{tikzpicture}
\tikzstyle{mydot} = [circle,fill,inner sep=0.9pt]
    \newcommand{\PA}{(-0.4, 0.3)}
    \newcommand{\PB}{( 0.0, 0.5)}
    \newcommand{\PC}{( 0.4, 0.3)}
    \newcommand{\PD}{( 0.5, 0.0)}
    \newcommand{\PE}{( 0.4,-0.3)}
    \newcommand{\PF}{( 0.0,-0.5)}
    \newcommand{\PG}{(-0.4,-0.3)}
    \newcommand{\PH}{(-0.5, 0.0)}
    \newcommand{\VA}{(-0.1, 0.2)}
    \newcommand{\VB}{( 0.0, 0.5)}
    \newcommand{\VC}{( 0.2, 0.7)}
    \newcommand{\VD}{( 0.5, 0.78)}
    \newcommand{\VE}{( 0.9, 0.6)}
    \newcommand{\VF}{( 1.05, 0.18)}
    \newcommand{\VG}{( 0.9,-0.2)}
    \newcommand{\VH}{( 0.5,-0.35)}
    \newcommand{\VI}{( 0.02,-0.18)}
    \newcommand{\vvt}{(-0.6, 0.7)}
    \newcommand{\tdvvt}{(-0.46, 0.46)}
    \newcommand{\vvtone}{(0.55, 0.75)}
    \newcommand{\tdvvtone}{(0.43,0.48)}
    \newcommand{\vvp}{(-0.85, -0.1)}
    \newcommand{\vvpp}{(0.3, -0.75)}

    \colorlet{vvtcolor}{red}
    \colorlet{onwardveccolor}{black}
    \colorlet{colorPhiVOutline}{myblue}
    \colorlet{colorPhiVFill}{myblue!5}

    \tikzstyle{onwardstyle} = [opacity=1,onwardveccolor]
    \tikzstyle{onwarddot} = [onwardstyle,mydot,semithick,fill=onwardveccolor,draw=none]

    \begin{scope}[scale around={4.5:(0.0,0.0)}]
        \draw [black, thick, dotted, fill=white] plot[smooth cycle, tension=0.3] coordinates
            {\PA \PB \PC \PD \PE \PF \PG \PH};
        \node[black,fill=white, xshift=8mm] at \PE {\small $\neigh_\xi(\Phi(\calV))$};
    \end{scope}

    \begin{scope}[scale around={1.9:(0.0,0.0)}]
        \draw[black,semithick](0,0) circle (1) node {};
        \node[black,xshift=2mm,yshift=3mm,fill=white] at \VG {$\calV$};
    \end{scope}

    \begin{scope}[scale around={2.8:(0.0,0.0)}]
        \draw [colorPhiVOutline!70, semithick, fill=colorPhiVFill!70] plot[smooth cycle, tension=0.3] coordinates
            {\PA \PB \PC \PD \PE \PF \PG \PH};
        \node[colorPhiVOutline] at (0, -0.05) {$\Phi(\calV)$};  
    \end{scope}  

    \begin{scope}[scale around={1.9:(0.0,0.0)}]
        \node[vvtcolor] at \vvt [mydot, label={[vvtcolor,yshift=1mm,xshift=-1mm]right:{\footnotesize$\vv^t$}}] {};
        \node[colorPhiVOutline] at \tdvvt [mydot, label={[colorPhiVOutline,yshift=-1mm,xshift=0mm]right:{\footnotesize$\tilde{\vv}^t$}}] {};
        \node[onwarddot] at \vvtone [label={[onwardstyle,xshift=4mm,yshift=0mm]above:{\footnotesize$\vv^{t+1}$}}] {};
        \node[onwarddot,colorPhiVOutline] at \tdvvtone [label={[colorPhiVOutline,xshift=0mm, yshift=0mm]below:{\footnotesize$\tilde{\vv}^{t+1}$}}] {};

        \node[circle,draw=none,inner sep=2pt] at \vvt (vvt) {};
        \node[circle,draw=none,inner sep=2pt] at \tdvvt (tdvvt) {};
        \node[circle,draw=none,inner sep=2pt] at \vvtone (vvtone) {};
        \node[circle,draw=none,inner sep=2pt] at \tdvvtone (tdvvtone) {};
        
        \draw [vvtcolor,semithick] (vvt) -- (tdvvt);
        \draw [onwardstyle,semithick] (vvtone) -- (tdvvtone);

        \node[onwarddot] at \vvp 
        {};
        \node[onwarddot] at \vvpp 
        {};
    \end{scope}
\end{tikzpicture}
\vspace{2mm}
\caption{\label{fig:computing-pi} 
In iteration $t$, a value $\tilde{\vv}^t$ in the $\xi$-neighborhood of the target value $\vv^t$ is expanded into a set of onward values $\ww$ (black dots).
In the next iteration, a new target value $\vv^{t+1}$ is chosen among the onward values according to the input sequence $(\sigma,s)$.}
\end{figure}

We can establish the following analogue of \Cref{thm:Phi-converge}, regarding the behavior of value-set iteration under $\xPhi$. 

\begin{restatable}{lemma}{prpalgiterationoutput}
\label{prp:alg-iteration-output}
\Cref{alg:value-set-iteration} terminates in at most $|S| \cdot (1/\xi+2)^{n+1}$ iterations and outputs a $\calV$ such that $\calV = \xPhi(\calV)$ and $\calV \supseteq \calV^\star$.
\end{restatable}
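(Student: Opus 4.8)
The plan is to analyze the two claims separately: (i) the iteration count, and (ii) the properties $\calV = \xPhi(\calV)$ and $\calV \supseteq \calV^\star$ of the returned value-set function. I would begin with a \emph{monotonicity/nestedness} argument for the sequence produced by Algorithm~\ref{alg:value-set-iteration}. Write $\calV^0(s) = \neigh_\xi(\calB)$ for all $s$, and $\calV^{k+1} = \xPhi(\calV^k)$. The key observation is that $\xPhi$ maps $\neigh_\xi(\calB)$ into itself: every point of $\Phi(\calV^0)(s)$ lies in $\calB$ (by the reward normalization, the Bellman value of any $(\bar\pi,\ww)$ with $\ww$-values in $\calB$ lies in $\calB$ — using $\rmin,\rmax = \pm(1-\gamma)/2$ and a geometric-series bound), hence $\neigh_\xi(\Phi(\calV^0)(s)) \cap G_\xi \subseteq \neigh_\xi(\calB)$, and taking the convex hull of a subset of the convex set $\neigh_\xi(\calB)$ stays inside it; so $\calV^1 \subseteq \calV^0$. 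Then by monotonicity of $\xPhi$ (which, as the text notes, holds just as for $\Phi$ by Lemma~\ref{lmm:Phi-monotonicity}-style reasoning), induction gives $\calV^0 \supseteq \calV^1 \supseteq \calV^2 \supseteq \cdots$.

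For the iteration bound I would exploit the grid structure. Each $\xPhi(\calV)(s)$ is, by construction, the convex hull of a subset of the finite grid $G_\xi \cap \neigh_\xi(\calB)$, which contains at most $(1/\xi + 2)^{n+1}$ points (each of the $n+1$ coordinates ranges over multiples of $\xi$ in $[-1/2-\xi,\,1/2+\xi]$, i.e. $1/\xi + 3$ choices — I would just bound this by $(1/\xi+2)^{n+1}$ or adjust the constant as the stated bound allows, being careful with the off-by-one). So each $\calV^k(s)$ for $k\ge 1$ is determined by which subset of this finite point set it is the hull of; moreover, because the sets are nested, the set $\calV^{k+1}(s) \cap G_\xi$ of grid points used can only shrink as $k$ grows. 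Hence at each iteration where $\calV^{k+1} \ne \calV^k$, at least one grid point (at some state) must be dropped from the generating set of some $\calV(s)$ — formally, if $\calV^{k+1}(s)\cap G_\xi = \calV^k(s)\cap G_\xi$ for all $s$ then $\xPhi(\calV^{k+1})=\xPhi(\calV^k)$ so the algorithm would have already stopped. Since there are $|S|$ states and at most $(1/\xi+2)^{n+1}$ grid points each, the process strictly decreases a nonnegative integer quantity bounded by $|S|\cdot(1/\xi+2)^{n+1}$, giving termination within that many iterations. Upon termination, $\calV = \xPhi(\calV)$ by the while-loop exit condition.

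For $\calV \supseteq \calV^\star$: I would show by induction that $\calV^k \supseteq \calV^\star$ for every $k$. The base case is $\calV^0 = \neigh_\xi(\calB) \supseteq \calB \supseteq \calV^\star$ (every inducible value lies in $\calB$). For the inductive step, suppose $\calV^k \supseteq \calV^\star$. Since $\calV^\star = \Phi(\calV^\star)$ (from Proposition~\ref{prp:V-star}) and $\Phi$ is monotone, $\calV^\star = \Phi(\calV^\star) \subseteq \Phi(\calV^k) \subseteq \xPhi(\calV^k) = \calV^{k+1}$, where the middle inclusion is the lower bound in Lemma~\ref{lmm:xPhi-Phi} (valid since $\calV^k(s)$ is convex — true for $k\ge 1$ by construction, and for $k=0$ since $\neigh_\xi(\calB)$ is a box). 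So $\calV^\star \subseteq \calV^k$ for all $k$, and in particular for the terminal $\calV$, which completes the proof.

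The main obstacle I anticipate is the iteration-count bound: turning "the nested sequence of convex-hull-of-grid-points sets must stabilize" into a clean monotone potential that strictly decreases at every non-terminal step. The subtlety is that $\calV^{k+1}(s)$ need not literally contain $\calV^{k+1}(s)\cap G_\xi$ as \emph{vertices} — a dropped grid point could still lie in the hull of the remaining ones — so I would instead track the family $(\calV^k(s)\cap G_\xi)_{s\in S}$ of grid-point \emph{membership} sets, argue these are nested and that $\xPhi$ depends on $\calV$ only through $\Phi(\calV)$ which in turn (for the relevant purpose of intersecting with $G_\xi$) is governed by these sets, and conclude that once they stabilize the iteration has converged. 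Getting this dependence exactly right — i.e. that $\calV^k\cap G_\xi = \calV^{k+1}\cap G_\xi$ componentwise implies $\xPhi(\calV^k)=\xPhi(\calV^{k+1})$ — is the one place needing genuine care rather than routine bookkeeping; everything else (the box-invariance estimate, monotonicity, the $\Phi$-vs-$\xPhi$ sandwich) is either already available in the excerpt or a short geometric-series computation.
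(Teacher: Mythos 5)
Your proposal follows essentially the same route as the paper's proof: nestedness from box-invariance plus monotonicity of $\xPhi$, termination by counting grid points via the membership sets $\calV^k(s)\cap G_\xi$ (the paper resolves the subtlety you flag exactly as you suggest, noting that each iterate is by construction the convex hull of the grid points it contains, so equal membership sets force equal iterates and hence termination), and $\calV\supseteq\calV^\star$ from the sandwich $\Phi(\calV)\subseteq\xPhi(\calV)$ together with monotonicity (the paper routes this through the $\Phi$-sequence and \Cref{lmm:nested-sets} rather than your direct induction using $\calV^\star=\Phi(\calV^\star)$, but the content is identical). One small slip to repair: since $\calV^0=\neigh_\xi(\calB)$, the onward values lie in $\neigh_\xi(\calB)$ rather than $\calB$, so the intermediate claim $\Phi(\calV^0)(s)\subseteq\calB$ is not justified (the Bellman values only land in $[-1/2-\gamma\xi,\,1/2+\gamma\xi]^{n+1}$); the inclusion you actually need, $\xPhi(\calV^0)\subseteq\neigh_\xi(\calB)$, still holds because $\gamma<1$ and $\xi$ divides $1/2$, so every grid point within $\ell_\infty$-distance $\xi$ of such a value has all coordinates in $[-1/2-\xi,\,1/2+\xi]$.
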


The inclusion $\calV \supseteq \calV^\star$ in the lemma can be verified by noting that $\calV^k \subseteq \xcalV^k$ holds for the sequences $(\calV^k)_{k=1}^\infty$ and $(\xcalV^k)_{k=1}^\infty$ generated under $\Phi$ and $\xPhi$, respectively.
Since each iteration removes at least one grid point from some $\calV(s)$, the number of iterations it takes to reach the termination condition is bounded by the number of grid points in the initial value set $\neigh_\xi(\calB)$, considering all the states.

\subsection{Computing an $(\epsilon,\delta)$-Optimal CE}
\label{sc:computing-put-together}

Hence, by running \Cref{alg:value-set-iteration}, we obtain a value-set function $\calV$ with the properties stated in \Cref{prp:alg-iteration-output}.
Since $\calV \supseteq \calV^\star$, the set $\calV(s_\init)$ contains an element whose value for the principal is as high as the best value in $\calV^\star(s_\init)$.
It then holds for any $\tilde{\vv}$ in the $\xi$-neighborhood of any $\vv \in \argmax_{\vv' \in \calV(s_\init)} v'_0$ that
\[
\tilde{v}_0 \ge \max_{\vv' \in \calV^\star(s_\init)} v'_0 - \epsilon.
\] 
Consequently, any $\delta$-CE inducing such a $\tilde{\vv}$ is $(\epsilon,\delta)$-optimal.
Indeed, now that $\calV = \xPhi(\calV)$, by \Cref{crl:approx-fixed}, $\calV$ satisfies the condition required by \Cref{alg:compute-pi}.
So, on input $\calV$ 
\Cref{alg:compute-pi} computes precisely a $\delta$-CE inducing $\tilde{\vv}$, as \Cref{lmm:eps-neigh-inducible-alg} states.

This leads to our approach to computing an $(\epsilon,\delta)$-optimal CE, summarized as follows:
\begin{itemize}
\item[1.] Run \Cref{alg:value-set-iteration} to obtain a fixed point $\calV = \xPhi(\calV)$.

\item[2.] 
Pick arbitrary $\vv \in \arg\max_{\vv' \in \calV(s_\init)} v'_0$.

\item[3.] Run \Cref{alg:compute-pi} with $\calV$ and $\vv$ to compute a policy $\pi$, which forms an $(\epsilon,\delta)$-optimal CE.
\end{itemize}

We analyze the time complexity of these procedures to conclude this section.

\paragraph{Time Complexity}
\Cref{prp:alg-iteration-output} already gives an upper bound on the number of iterations required by \Cref{alg:value-set-iteration}, and by design, \Cref{alg:compute-pi} terminates in $|\sigma|+1$ iterations.
Hence, for both algorithms, it remains to bound the time complexity of each iteration.
For \Cref{alg:value-set-iteration}, this amounts to the time it takes to compute $\xPhi(\calV)$; 
and for \Cref{alg:compute-pi}, it is about computing $(\bar{\pi},\ww)$ to implement some $\tilde{\vv}^t$ close to $\vv^t$ (\Cref{ln:compute-vv}).
We show that both problems reduce to the following decision problem and can be solved in polynomial time via a linear programming (LP) formulation. 
Roughly speaking, the LP solves for $\bar{\pi}$ and $\ww \in \calV$ satisfying \Cref{eq:const-bellman,eq:const-ic}.

\begin{restatable}{lemma}{lmmcomputeinducibility}
\label{lmm:compute-inducibility}
Suppose that we are given a state $s \in S$, a value $\vv \in \mathbb{R}^{n+1}$, and a value-set function $\calV$ where each $\calV(s')$, $s'\in S$, is a convex polytope in vertex representation involving at most $L$ vertices.
It can be decided in time $\poly\left(|S|, |\bfA|, L \right)$ whether $\vv \in \neigh_\xi(\Phi(\calV)(s))$.
Moreover, in the case where $\vv \in \neigh_\xi(\Phi(\calV)(s))$, one can compute a tuple 
$(\bar{\pi}, \ww) \in \calF_s(\calV)$ such that $\bellman(s, \bar{\pi}, \ww) \in \neigh_\xi(\vv)$
in time $\poly\left(|S|, |\bfA|, L \right)$.
\end{restatable}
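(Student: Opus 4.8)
The plan is to cast the decision problem as a linear feasibility problem over the variables $\bar{\pi} \in \Delta(\bfA)$ and the collection of onward values $\ww(\va,\bb;s') \in \mathbb{R}^{n+1}$, ranging over all $(\va,\bb;s') \in \bfA^2 \times S$. First I would write down the constraints: (i) $\bar{\pi}(\va) \ge 0$ and $\sum_{\va} \bar{\pi}(\va) = 1$; (ii) the IC constraints \Cref{eq:const-ic}, which are linear in $\bar{\pi}$ and $\ww$ jointly (a subtle point worth spelling out: the left- and right-hand sides of \Cref{eq:const-ic} are each \emph{bilinear} in $\bar{\pi}$ and $\ww$, but after substituting the product $\bar{\pi}(\va) \cdot w_i(\cdot)$ by a fresh variable one obtains a linear system — I will need to check that this linearization is faithful, i.e.\ that I can recover a valid $\ww$ from the lifted variables whenever $\bar{\pi}(\va) > 0$, and that the value of $w_i$ on atoms with $\bar\pi(\va)=0$ is irrelevant to both \Cref{eq:const-bellman} and \Cref{eq:const-ic}); (iii) the onward-value membership $\ww(\va,\bb;s') \in \calV(s')$, which, since $\calV(s')$ is given in vertex representation with at most $L$ vertices, is expressed by writing each onward value as a convex combination of the vertices using $L$ auxiliary nonnegative multipliers summing to one — again linear; and (iv) the Bellman equation \Cref{eq:const-bellman} together with the target constraint $\|\bellman(s,\bar\pi,\ww) - \vv\|_\infty \le \xi$, which splits into $2(n+1)$ linear inequalities. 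The whole system has $\poly(|S|,|\bfA|,L)$ variables and constraints since there are $|\bfA|^2 |S|$ onward-value slots, each carrying $n{+}1$ coordinates and $L$ multipliers, and $n |A|^2$ IC inequalities per state.

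Next I would invoke polynomial-time linear programming (e.g.\ the ellipsoid or interior-point method) to decide feasibility of this system; feasibility is by construction equivalent to the existence of $(\bar\pi,\ww) \in \calF_s(\calV)$ with $\bellman(s,\bar\pi,\ww) \in \neigh_\xi(\vv)$, which by the definition of $\Phi$ in \Cref{eq:Phi} and of $\neigh_\xi$ is exactly the statement $\vv \in \neigh_\xi(\Phi(\calV)(s))$. I should be a little careful about one direction: $\vv \in \neigh_\xi(\Phi(\calV)(s))$ means $\|\vv - \uu\|_\infty \le \xi$ for some $\uu \in \Phi(\calV)(s)$, i.e.\ some $\uu = \bellman(s,\bar\pi,\ww)$ with $(\bar\pi,\ww)\in\calF_s(\calV)$; this is precisely the system above, so the equivalence is immediate. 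For the ``moreover'' part, a feasible LP solution, once the lifted product variables are de-linearized (dividing by $\bar\pi(\va)$ where positive, and assigning an arbitrary vertex of $\calV(s')$ otherwise), directly yields the desired $(\bar\pi,\ww)$; any LP algorithm returns such a feasible point in polynomial time when one exists.

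The main obstacle I expect is the bilinearity of the IC and Bellman constraints in $(\bar\pi, \ww)$: these are not literally linear, so the crux of the proof is the substitution $y_i(\va,\bb;s') := \bar\pi(\va)\, w_i(\va,\bb;s')$ (or, for the convex-combination encoding, lifting the products $\bar\pi(\va)\lambda_{\va,\bb,s',\ell}$ of the policy probability with the $\ell$-th vertex multiplier). I must verify (a) that every genuine $(\bar\pi,\ww)\in\calF_s(\calV)$ induces a feasible point of the lifted LP, and (b) conversely that every feasible lifted point can be projected back to some $(\bar\pi,\ww)\in\calF_s(\calV)$ inducing the same Bellman value — this uses that when $\bar\pi(\va)=0$ the lifted variables $y_i(\va,\cdot;\cdot)$ are forced to $0$ (by nonnegativity and the constraint that their convex-combination multipliers also scale with $\bar\pi(\va)$), so $w_i(\va,\cdot;\cdot)$ can be reassigned freely to any point of $\calV(s')$ without affecting \Cref{eq:const-bellman} or \Cref{eq:const-ic}. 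A secondary, minor point is bookkeeping the sizes to confirm the $\poly(|S|,|\bfA|,L)$ bound, and noting that the bit-complexity of the coefficients (rewards, $\gamma$, $\xi$, vertex coordinates) is polynomial in the input, so the LP is solvable in genuinely polynomial time.
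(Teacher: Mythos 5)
Your proposal is correct and follows essentially the same route as the paper's proof: encode the onward values via convex-combination multipliers over the at most $L$ vertices of each $\calV(s')$, linearize the bilinear products $\bar{\pi}(\va)\cdot(\cdot)$ by introducing lifted variables constrained to be nonnegative and to sum to $\bar{\pi}(\va)$, verify the two-way correspondence (recovering $\ww$ by division when $\bar{\pi}(\va)>0$ and choosing it arbitrarily otherwise), and solve the resulting polynomial-size LP. The paper's proof is the same argument with the lifted variables denoted $\zz(\va,\bb,s')=\bar{\pi}(\va)\cdot\yy(\va,\bb,s')$, so no substantive difference remains.
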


Specifically, in \Cref{alg:value-set-iteration}, computing $\xPhi(\calV)(s)$ reduces to deciding, for each grid point $\vg \in G_\xi \cap \calB$, whether $\vg \in \neigh_\xi(\Phi(\calV)(s))$. 
The set of grid points inside $\neigh_\xi(\Phi(\calV)(s))$ gives a vertex representation of $\xPhi(\calV)(s)$.\footnote{Some of the points may not be vertices of the polytope, but so long as all the vertices of $\xPhi(\calV)(s)$ are included, the convex hull of the points gives $\xPhi(\calV)(s)$.}
As for \Cref{alg:compute-pi}, note that 
$\vv^t \in \neigh_{\xi}(\Phi(\calV)(s^t))$
because: $\vv^t \in \calV(s^t)$ by \Cref{ln:vv-t}, while $\calV \subseteq \neigh_\xi(\Phi(\calV))$ according to the input requirement.
Hence, we can invoke the second part of \Cref{lmm:compute-inducibility} to obtain $(\bar{\pi},\ww)$ that implements a desired $\tilde{\vv}^t$.

\begin{restatable}{theorem}{thmmainalg}
\label{thm:main-alg}
We can compute a fixed point $\calV$ of $\xPhi$, $\calV \supseteq \calV^\star$, in time $\poly\left(|S|, |\bfA|, (1/\xi)^{n+1} \right)$.
Given $\calV$, there exists an $(\epsilon,\delta)$-optimal CE $\pi$, such that for any given $(\sigma; s) \in \Sigma \times S$ we can compute $\pi(\sigma; s)$ in time $\poly\left(|S|, |\bfA|, (1/\xi)^{n+1},\, |\sigma| \right)$.
\end{restatable}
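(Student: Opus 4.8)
\textbf{Proof proposal for \Cref{thm:main-alg}.}

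The plan is to assemble the theorem from the pieces already established, carefully tracking how the convex-polytope representations stay small enough that each iteration runs in polynomial time. For the first claim, I would start from \Cref{prp:alg-iteration-output}, which already tells us that \Cref{alg:value-set-iteration} terminates after at most $|S| \cdot (1/\xi + 2)^{n+1}$ iterations and returns a $\calV$ with $\calV = \xPhi(\calV)$ and $\calV \supseteq \calV^\star$. The only thing left is the per-iteration cost. The key observation is that throughout the run, every $\calV(s)$ is (by construction of $\xPhi$) the convex hull of a subset of the grid $G_\xi \cap \calB$, so it is a convex polytope whose vertex representation uses at most $L = (1/\xi + 2)^{n+1}$ vertices — a bound that holds uniformly across iterations and does not blow up. Computing $\xPhi(\calV)(s)$ then reduces, as described in the text right after \Cref{lmm:compute-inducibility}, to testing each grid point $\vg \in G_\xi \cap \calB$ for membership in $\neigh_\xi(\Phi(\calV)(s))$; by the first part of \Cref{lmm:compute-inducibility} each such test costs $\poly(|S|, |\bfA|, L)$, there are $(1/\xi+2)^{n+1}$ grid points and $|S|$ states, and the surviving grid points form a vertex representation of $\xPhi(\calV)(s)$. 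Multiplying the number of iterations, the number of grid points and states per iteration, and the per-test cost gives the stated bound $\poly(|S|, |\bfA|, (1/\xi)^{n+1})$.

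For the second claim, given the fixed point $\calV$ I would pick $\vv \in \argmax_{\vv' \in \calV(s_\init)} v'_0$ (trivial from the vertex representation) and invoke \Cref{lmm:eps-neigh-inducible-alg}: the policy $\pi$ defined by letting $\pi(\sigma;s)$ be the output of \Cref{alg:compute-pi} on input $(\calV, (\sigma;s), \vv)$ is a $\delta$-CE with $V^\pi(s_\init) \in \neigh_\epsilon(\vv)$. Combined with $\calV \supseteq \calV^\star$, which forces $v_0 \ge \max_{\vv' \in \calV^\star(s_\init)} v'_0$, and the fact that the $\epsilon$-slack only costs an additional $\epsilon$, this gives $V_0^\pi(s_\init) \ge \max_{\pi' \in \calC_0} V_0^{\pi'}(s_\init) - \epsilon$, i.e. $\pi$ is $(\epsilon,\delta)$-optimal. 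For the running time of computing $\pi(\sigma;s)$: \Cref{alg:compute-pi} runs for $|\sigma|+1$ iterations, and in each iteration the only nontrivial step (\Cref{ln:compute-vv}) is finding $(\bar\pi,\ww) \in \calF_{s^t}(\calV)$ with $\bellman(s^t,\bar\pi,\ww) \in \neigh_\xi(\vv^t)$; the input invariant $\calV \subseteq \neigh_\xi(\Phi(\calV))$ together with $\vv^t \in \calV(s^t)$ guarantees $\vv^t \in \neigh_\xi(\Phi(\calV)(s^t))$, so the second part of \Cref{lmm:compute-inducibility} applies and costs $\poly(|S|,|\bfA|,L)$ with the same $L = (1/\xi+2)^{n+1}$. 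Multiplying by $|\sigma|+1$ iterations gives $\poly(|S|,|\bfA|,(1/\xi)^{n+1},|\sigma|)$.

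The main obstacle I anticipate is not any single deep argument but making the representation-size bookkeeping airtight: one must be sure that the vertex representation fed into \Cref{lmm:compute-inducibility} never exceeds $L = (1/\xi+2)^{n+1}$ at any iteration, including the initialization $\calV(s) \leftarrow \neigh_\xi(\calB)$ (which is a box, hence $2^{n+1}$ vertices, comfortably within the bound) and including the output of $\xPhi$ (whose vertices are among the grid points, hence at most $(1/\xi+2)^{n+1}$). A secondary point to be careful about is that $\xPhi$ really does preserve the hypotheses of \Cref{lmm:compute-inducibility} at every step — i.e. each $\calV(s)$ remains a convex polytope in vertex representation — which follows because $\xPhi(\calV)(s)$ is by definition a convex hull of finitely many points and \Cref{lmm:Phi-properties} plus the remarks after \Cref{alg:value-set-iteration} guarantee nonemptiness and closure are maintained. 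Once these invariants are stated cleanly, the theorem is just a product of the four quantities (iteration count, grid size, state count, LP cost) and the final bounds follow by arithmetic.
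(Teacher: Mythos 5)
Your overall route is the same as the paper's: \Cref{prp:alg-iteration-output} for the iteration count, the first part of \Cref{lmm:compute-inducibility} applied to each grid point of $G_\xi \cap \neigh_\xi(\calB)$ for the per-iteration cost, and then \Cref{crl:approx-fixed} plus \Cref{lmm:eps-neigh-inducible-alg} together with $\calV \supseteq \calV^\star$ for the $(\epsilon,\delta)$-optimality of the policy produced by \Cref{alg:compute-pi}. The vertex-representation bookkeeping you worry about ($L \le (1/\xi+2)^{n+1}$, preserved by $\xPhi$ and by the box initialization) is indeed fine and matches the paper.

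However, there is a genuine gap in your running-time argument for computing $\pi(\sigma;s)$: you multiply the per-iteration LP cost from \Cref{lmm:compute-inducibility} by $|\sigma|+1$, but that per-iteration cost is polynomial in the \emph{bit size of the LP input}, which includes the current target $\vv^t$. In \Cref{alg:compute-pi}, $\vv^{t+1}$ is read off from the onward values $\ww$ returned by the LP solved at step $t$, so its bit length can grow by a polynomial factor at every step; compounded over $t = 1,\dots,|\sigma|+1$ this can make the numbers (and hence each LP solve) exponentially large in $|\sigma|$, defeating the claimed $\poly(\dots,|\sigma|)$ bound. The paper handles exactly this issue with an explicit rounding procedure: it runs everything on a finer grid of precision $\xi' = \xi/2$, rounds each $\vv^t$ to a grid point $\vg \in G_{\xi'} \cap \neigh_{\xi'}(\vv^t) \cap \neigh_{\xi'}(\Phi(\calV)(s^t))$ (such a point exists because $\vv^t \in \neigh_{\xi'}(\Phi(\calV)(s^t))$), applies \Cref{lmm:compute-inducibility} to $\vg$, and uses the triangle inequality $\|\tilde{\vv}^t - \vv^t\|_\infty \le 2\xi' = \xi$ to recover the guarantee required at \Cref{ln:compute-vv}. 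Without this (or an equivalent bit-size control), your argument establishes correctness of $\pi$ but not the stated time complexity in $|\sigma|$; adding the rounding step would close the gap.
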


\section{Beyond Constant Number of Agents: $c$-Turn-Based Games}
\label{sc:beyond-constant}

The time complexity of our algorithm is exponential in $n$. What if $n$ is not a constant? Can we still find a polynomial time algorithm?
This question is meaningful only when the reward and transition functions are given in succinct representation; otherwise, when they are given in the matrix form---which explicitly enumerates the parameters for each $\va \in \bfA$---the input size is by itself already exponential in $n$.
However, it is known that under succinct representation, even in one-shot games, {\em optimal} CEs can be intractable \cite{papadimitriou2008computing}. 
Given this barrier, we restrict our attention to games where optimal CEs are tractable at least in the one-shot setting.
A typical case is turn-based games, where only one player acts at each state. 
We consider a more general variant of turn-based games, called {\em $c$-turn-based games}, which allows up to a constant number $c$ of players to act at each state. 

\begin{definition}[$c$-turn-based games]
A game is {\em $c$-turn-based} if there exist a constant $c$ and a set $I_s \subseteq \{0, 1,\dots,n\}$, $|I_s| \le c$, for every $s \in S$, such that: $\rr(s, \va) = \rr(s, \va')$ and $p(\cdot \given s, \va) = p(\cdot \given  s, \va')$ if $a_i = a'_i$ for all $i \in I_s$.
W.l.o.g., we assume that  $0 \in I_s$ for all $s \in S$ (so the principal always acts).
\end{definition}
 
Clearly, the size of the representation of a $c$-turn-based game grows only polynomially with $n$.
We next present an algorithm that runs in time polynomial in $n$ for such games.

\subsection{$\lambda$-Memory Meta-Game}

To describe the algorithm, we first convert the original game into a {\em $\lambda$-memory meta-game}, or {\em meta-game} for simplicity.
Let
\begin{align}
\label{eq:lambda}
\lambda = \log (\xi/4) / \log \gamma,
\end{align}
so that $(1+\xi) \cdot \gamma^\lambda < \xi/2$.
In the meta-game, each state---call it a {\em meta-state} to avoid confusion---encodes the current as well as the previous $\lambda$ states in the original game.

Formally, let $X = (S \cup \{*\})^{\lambda}$ be the meta-state space, where $*$ is a placeholder for the first $\lambda$ time steps, when there are fewer than $\lambda$ previous states.
Each meta-state $\xx \in X$ is a tuple $\xx = (x^{-\lambda}, x^{-\lambda+1}, \dots, x^{-1}, x^0)$, where $x^0 \in S$ is the current state as in the original game, and each $x^{-\ell} \in S$ represents the state $\ell$ time steps before. 

The meta-game starts at the initial meta-state $\xx_\init = (*,\dots,*,s_\init)$ and it shares the same action space $\bfA$ with the original game.
The rewards and transition probabilities in the meta-game, denoted $\tilde{\rr}$ and $\tilde{p}$, respectively, are defined as follows so that the meta-game is effectively equivalent to the original game, except that each meta-state also records $\lambda$ previous states:
\begin{itemize}
\item $\tilde{\rr}(\xx, \va) = \rr(x^0, \va)$ for all $\xx \in X$ and $\va \in \bfA$.

\item $\tilde{p}(\xx' \given \xx, \va) = p( {x'}^0 \given x^0, \va)$ for all $\va \in \bfA$, if ${x'}^{-\ell-1} = x^{-\ell}$ for all $\ell = 0,\dots, \lambda-1$; and $\tilde{p}(\xx' \given \xx, \va) = 0$, otherwise.
\end{itemize}
Since $\lambda$ is a constant, the size of the meta-game is polynomial in the size of the original game.

\subsection{Polynomial-Time Algorithm for $c$-Turn-Based Games}

Key to our approach is the following ``turn-based'' map $\tPhi$, which modifies $\xPhi$ using an extended neighborhood notion. 

\paragraph{Turn-Based Map}

For each $\xx \in X$, we define
\begin{align}
\label{eq:Phi-tb}
\tPhi(\calV)(\xx) \coloneqq
\conv\Big(
\neigh_{\xi/2}^{I_\xx} \big(\Phi(\calV)(\xx) \big)
\;\cap\; G_{\xi/2}
\Big),
\end{align}
where
\[
I_\xx \coloneqq \bigcup_{\ell=0}^\lambda I_{x^{-\ell}}
\]
consists of all the acting players in the $\lambda+1$ states in $\xx$; and $\neigh_{\xi/2}^{I_\xx}$ relaxes the neighborhood notion as follow:
\[
\neigh_{\xi/2}^{I_\xx} (\vv) \coloneqq 
\Big\{
\vv' \in \neigh_{\xi/2}(\calB) 
\;\Big|\;
|v'_i - v_i| \le \xi/2 \text{ for all } i \in I_\xx
\Big\}.
\]
Namely, it only restricts dimensions inside $I_\xx$ to the neighborhood of $\vv$ while relaxing those outside of $I_\xx$ to $\neigh_{\xi/2}(\calB)$ (which will be used as the initial point for value-set iteration).
Intuitively, we can assign arbitrary values to agents who have not been active for more than $\lambda$ time steps; the influence of the current step to them is small due to discounting. 
The additional errors introduced by this relaxation can be handled by using the smaller error bound $\xi/2$ in \Cref{{eq:Phi-tb}} (compared to $\xi$ in the definition of $\xPhi$).

It can be verified that $\tPhi$ behaves similarly to $\xPhi$ as in \Cref{lmm:Phi-monotonicity,lmm:xPhi-Phi}: it is monotonic and it is bounded between $\Phi$ and its $\xi/2$-neighborhood (under the new neighborhood definition).

\paragraph{Value-Set Iteration with $\tPhi$}

We now run value-set iteration (\Cref{alg:value-set-iteration}) with $\tPhi$ and extends our results in \Cref{sc:value-set-iteration}.

First, using the monotonicity of $\tPhi$, we can show that \Cref{alg:value-set-iteration} generates a sequence of nested sets converging to a fixed point of $\tPhi$ in a finite number of iterations, similarly to \Cref{prp:alg-iteration-output}.
Crucially, since each value set involves only a constant number of active dimensions (all the other dimensions require no updates), it can be represented by grid points in the space spanning only these effective dimensions. The number of grid points to be considered for each meta-state becomes a constant. Consequently, \Cref{alg:value-set-iteration} terminates in a polynomial number of iterations.

\begin{restatable}{lemma}{lmmtPhitermination}
\label{lmm:tPhi-termination}
Using $\tPhi$ and initializing $\calV(s)$ to $\neigh_{\xi/2}(\calB)$ for every $s \in S$, 
\Cref{alg:value-set-iteration} terminates in at most $|S|^{\lambda+1} \cdot (2/\xi+2)^{(\lambda+1) \cdot c}$ iterations and outputs $\calV$ such that $\calV = \tPhi(\calV)$ and  $\calV \supseteq \calV^\star$.
\end{restatable}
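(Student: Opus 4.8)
The plan is to mirror the argument for \Cref{prp:alg-iteration-output}, but to exploit that $\tPhi$ only ever updates a constant number of coordinates per meta-state so the effective grid is polynomially bounded. First I would establish the structural facts about $\tPhi$ that parallel \Cref{lmm:Phi-monotonicity,lmm:Phi-properties,lmm:xPhi-Phi}: namely that $\tPhi$ is monotonic (if $\calV' \subseteq \calV$ then $\tPhi(\calV') \subseteq \tPhi(\calV)$), that it preserves convexity, closedness, and nonemptiness, and that $\Phi(\calV) \subseteq \tPhi(\calV) \subseteq \neigh^{I_\xx}_{\xi/2}(\Phi(\calV))$ whenever $\calV$ is convex. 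These all follow from the corresponding properties of $\Phi$ together with the fact that $\neigh^{I_\xx}_{\xi/2}(\cdot) \cap G_{\xi/2}$ is an enlargement-then-discretization operation, exactly as in the $\xPhi$ case; the only new wrinkle is the relaxation of coordinates outside $I_\xx$, which only makes the sets larger and is harmless for monotonicity.

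Next I would argue termination. Initialize $\calV^0(\xx) = \neigh_{\xi/2}(\calB)$ for every meta-state $\xx$. Since $\xi/2$ divides $1/2$ (recall $1/2$ is assumed a multiple of $\xi$), one checks $\tPhi(\calV^0) \subseteq \calV^0$: indeed $\Phi(\calV^0)(\xx) \subseteq \calB$ because all rewards lie in $[\rmin,\rmax]$ and the onward values lie in $\neigh_{\xi/2}(\calB)$, and taking $\neigh_{\xi/2}$ of a subset of $\calB$ stays inside $\neigh_{\xi/2}(\calB)$, while intersecting with the grid and taking the convex hull does not escape it. By monotonicity, the sequence $\calV^{k+1} = \tPhi(\calV^k)$ is then nested: $\calV^0 \supseteq \calV^1 \supseteq \cdots$. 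The key observation is that in $\tPhi(\calV)(\xx)$ only the coordinates in $I_\xx$ are constrained to a genuine $\xi/2$-neighborhood; the remaining coordinates are always the full slab $\neigh_{\xi/2}(\calB)$ in those directions. Hence each $\calV^k(\xx)$, as a polytope, is a product of a $|I_\xx|$-dimensional grid polytope (living in $G_{\xi/2}$ restricted to coordinates $I_\xx$, of which there are at most $(2/\xi + 2)^{|I_\xx|} \le (2/\xi+2)^{(\lambda+1)c}$ grid points since $|I_\xx| \le (\lambda+1)c$) and a fixed interval in each remaining direction. So the ``state'' of the iteration at meta-state $\xx$ is determined by which of at most $(2/\xi+2)^{(\lambda+1)c}$ grid points survive. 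Each iteration that is not a fixed point must strictly shrink some $\calV^k(\xx)$, removing at least one grid point from the relevant slice; summing over all $|X| = |S \cup \{*\}|^\lambda \le |S|^{\lambda+1}$ meta-states (using $|S| \ge 1$ so $(|S|+1)^\lambda \le |S|^{\lambda+1}$ — or simply noting there are at most $|S|^{\lambda+1}$ reachable meta-states) bounds the number of iterations by $|S|^{\lambda+1} \cdot (2/\xi+2)^{(\lambda+1)c}$. The limit $\calV$ is a fixed point $\calV = \tPhi(\calV)$, and it is nonempty and convex/closed by the preserved properties.

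Finally, for the inclusion $\calV \supseteq \calV^\star$: let $(\tdcalV^k)$ be the $\xPhi$-sequence from \Cref{sc:value-set-iteration} lifted to the meta-game (which has the same inducible value sets since the meta-game is equivalent to the original game up to recording history). Since $\tPhi(\calW) \supseteq \xPhi(\calW)$ pointwise when the relaxed neighborhood is at least as large as the plain one — here we must be slightly careful because $\tPhi$ uses $\xi/2$ while $\xPhi$ uses $\xi$; I would instead compare against the $\xPhi$-variant run with parameter $\xi/2$, or directly argue $\Phi(\calV) \subseteq \tPhi(\calV)$ and run the same nested-set/intersection argument as in \Cref{lmm:nested-sets,thm:Phi-converge} to get $\calV \supseteq \bigcap_k \tPhi^k(\neigh_{\xi/2}(\calB)) \supseteq \calV^\star$. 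Concretely, since $\calV^\star = \Phi(\calV^\star)$ and $\calV^\star \subseteq \calB \subseteq \neigh_{\xi/2}(\calB) = \calV^0$, induction with monotonicity of $\tPhi$ and $\Phi(\calW) \subseteq \tPhi(\calW)$ gives $\calV^\star = \Phi^k(\calV^\star) \subseteq \tPhi^k(\calV^\star) \subseteq \tPhi^k(\calV^0) = \calV^k$ for all $k$, hence $\calV^\star \subseteq \calV$.

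\medskip

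\textbf{Main obstacle.} The delicate point is the termination count: one must argue precisely that $\tPhi(\calV)(\xx)$ genuinely has only $|I_\xx|$ ``active'' dimensions — that the relaxed coordinates are literally pinned to $\neigh_{\xi/2}(\calB)$ and never shrink — so that the per-meta-state grid is $(2/\xi+2)^{(\lambda+1)c}$ rather than $(2/\xi+2)^{n+1}$. This requires checking that the definition of $\neigh^{I_\xx}_{\xi/2}$ and the intersection with $G_{\xi/2}$ interact so that the convex hull is exactly a product polytope (grid polytope in $I_\xx$-coordinates $\times$ slab elsewhere), and that this product structure is preserved under iteration. The role of the constant $\lambda$ from \Cref{eq:lambda} — ensuring $(1+\xi)\gamma^\lambda < \xi/2$ so that the relaxation error is controlled — is what makes $|I_\xx|$ a constant in the first place, but that quantitative bound is really needed for the subsequent correctness lemmas (an $(\epsilon,\delta)$-inducibility analogue) rather than for the termination statement itself; here I would only need $|I_\xx| \le (\lambda+1)c$ and $|X| \le |S|^{\lambda+1}$.
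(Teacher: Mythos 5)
Your proposal is correct and follows essentially the same route as the paper's proof: the paper likewise reduces to the argument of \Cref{prp:alg-iteration-output} (monotonicity of $\tPhi$, mapping $\neigh_{\xi/2}(\calB)$ into itself, nested sets, and $\Phi \subseteq \tPhi$ for the inclusion $\calV \supseteq \calV^\star$), and obtains the iteration bound by exactly your ``main obstacle'' observation---that a grid point's membership in $\tPhi(\calV)(\xx)$ depends only on its coordinates in $I_\xx$, so each meta-state contributes at most $(2/\xi+2)^{(\lambda+1)c}$ effective grid points, one of which is removed per non-terminal iteration. The only blemish is your throwaway inequality $(|S|+1)^{\lambda} \le |S|^{\lambda+1}$ (false, e.g., for $|S|=2$, $\lambda=2$), but your fallback of counting reachable meta-states matches the paper's own accounting of $|X|$.
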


We can further replicate \Cref{prp:eps-neigh-inducible} to establish the $(\epsilon,\delta)$-induciblity of every value in the fixed point of $\tPhi$.

\begin{restatable}{lemma}{prpepsneighinducibleturnbased}
\label{prp:eps-neigh-inducible-turn-based}
If $\calV = \tPhi(\calV)$, then every $\vv \in \calV(\xx_\init)$ is $(\epsilon,\delta)$-inducible at state $s_\init$.
\end{restatable}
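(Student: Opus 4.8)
The plan is to mirror the proof of \Cref{prp:eps-neigh-inducible} (equivalently \Cref{lmm:eps-neigh-inducible-alg}), but carried out in the meta-game and with the relaxed neighborhood notion $\neigh^{I_\xx}_{\xi/2}$ accounted for. First I would establish the meta-game analogue of \Cref{crl:approx-fixed}: if $\calV = \tPhi(\calV)$, then $\Phi(\calV) \subseteq \calV \subseteq \neigh^{I_\xx}_{\xi/2}(\Phi(\calV))$ (interpreted state-wise), which follows from the remarks already made that $\tPhi$ is bounded between $\Phi$ and its $\xi/2$-neighborhood under the new neighborhood definition, together with convexity of $\tPhi(\calV)(\xx)$ by construction. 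Then I would run a version of \Cref{alg:compute-pi} on the meta-game: given an input history $\sigma$ (which in the meta-game determines the sequence of meta-states $\xx^1, \dots, \xx^{\ell+1}$, since each meta-state is a deterministic function of the last $\lambda+1$ original states) and a target $\vv^1 = \vv \in \calV(\xx_\init)$, at step $t$ find $\tilde{\vv}^t \in \neigh^{I_{\xx^t}}_{\xi/2}(\vv^t)$ with $\tilde{\vv}^t = \bellman(\xx^t, \bar\pi, \ww)$ for some $(\bar\pi, \ww) \in \calF_{\xx^t}(\calV)$, and set $\vv^{t+1} = \ww(\va^t, \bb^t; \xx^{t+1})$. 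As before this is well-defined because $\vv^t \in \calV(\xx^t) \subseteq \neigh^{I_{\xx^t}}_{\xi/2}(\Phi(\calV)(\xx^t))$, and $\ww$ takes values in $\calV$.

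The key new estimate is controlling the error $\|\tilde\vv^t - \vv^t\|$ on the coordinates $i \notin I_{\xx^t}$, since on those coordinates the relaxed neighborhood allows the discrepancy to be as large as the whole box $\neigh_{\xi/2}(\calB)$ rather than $\xi/2$. The point is that $i \notin I_{\xx^t}$ means agent $i$ has not acted in any of the states $x^{-\lambda}, \dots, x^0$ recorded in $\xx^t$, i.e.\ agent $i$ has been inactive for at least $\lambda$ consecutive steps ending at the current step. Since an inactive agent's immediate reward and the transitions do not depend on their recommendation, agent $i$'s incentive constraints at those steps are vacuous, and more importantly the contribution to agent $i$'s value from the last $\lambda$ steps of a sub-path is bounded: by the choice of $\lambda$ in \Cref{eq:lambda} we have $(1+\xi)\gamma^\lambda < \xi/2$, so a "wrong" target on coordinate $i$ at step $t$ gets discounted by at least $\gamma^\lambda$ before coordinate $i$ becomes active again (or it never becomes active, in which case it does not matter because agent $i$ never has a binding constraint and never contributes beyond bounded discounted tails). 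Concretely I would argue: whenever coordinate $i$ is \emph{active} at step $t$ (i.e.\ $i \in I_{\xx^t}$), the error on that coordinate is at most $\xi/2$ exactly as in the original proof; whenever coordinate $i$ is inactive, any error on it is "reset" within $\lambda$ steps and contributes at most $\gamma^\lambda \cdot (\text{box diameter}) \le (1+\xi)\gamma^\lambda < \xi/2$ to the discounted tail once we hit the next active step, so the total accumulated error on every coordinate telescopes to at most $(\xi/2)/(1-\gamma) + (\text{bounded lower-order terms}) < \min\{\epsilon,\delta/2\}$, using $\xi = (1-\gamma)\min\{\epsilon,\delta/2\}$. This yields both $V^\pi(s_\init) \in \neigh_\epsilon(\vv)$ and that the IC violations are at most $\delta$, i.e.\ $\pi$ is a $\delta$-CE, so $\vv$ is $(\epsilon,\delta)$-inducible at $s_\init$ (identifying the meta-game policy with an original-game policy, which is legitimate since the meta-game is equivalent to the original game and $\xx_\init$ corresponds to $s_\init$).

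I would organize this as: (i) the meta-game bracketing corollary for $\tPhi$; (ii) the meta-game version of \Cref{alg:compute-pi} and the claim that it is well-defined along any history; (iii) a per-coordinate error-accumulation lemma splitting the analysis into active vs.\ inactive coordinates, where the inactive case uses \Cref{eq:lambda}; (iv) assembling (iii) with the discounting bound $\xi/(1-\gamma) = \min\{\epsilon,\delta/2\}$ to conclude $V^\pi(s_\init)\in\neigh_\epsilon(\vv)$ and that $\pi\in\calC_\delta$; (v) transporting back from the meta-game to the original game. The main obstacle is step (iii): making precise that a coordinate which is inactive for $\lambda$ steps can be "mis-targeted" within the relaxed neighborhood without this error propagating, which requires tracking, for each coordinate separately, the most recent step at which it was active and showing the intervening discount factor $\gamma^\lambda$ kills the box-sized slack — and simultaneously checking that the IC constraints that \emph{are} enforced (for currently-active agents, with the tight $\xi/2$ neighborhood) still give the $\delta$-CE guarantee in every subgame, exactly as in the proof of \Cref{lmm:eps-neigh-inducible-alg} but now with the subgame structure seen through the meta-state window.
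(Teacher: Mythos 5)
Your proposal is correct and follows essentially the same route as the paper's proof: establish the meta-game analogue of \Cref{crl:approx-fixed} for $\tPhi$, run the meta-game version of \Cref{alg:compute-pi} with the relaxed neighborhood $\neigh_{\xi/2}^{I_\xx}$, and then redo the subsidized/truncated-game comparison of \Cref{lmm:eps-neigh-inducible-alg}, using \Cref{eq:lambda} to kill the box-sized slack on relaxed coordinates. The one point to state carefully is the direction of the discounting argument: what matters is that an error on coordinate $i$ at a step where $i\notin I_{\xx}$ occurs at least $\lambda$ steps \emph{after} $i$'s last activation (equivalently, as the paper phrases it, a player acting at step $t$ stays in $I_{\xx^\ell}$ for $\ell=t+1,\dots,t+\lambda$), so it is discounted by $\gamma^\lambda$ relative to every evaluation point where $i$'s value or IC constraint is assessed---rather than being discounted ``before $i$ becomes active again.''
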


Specifically, to prove \Cref{prp:eps-neigh-inducible-turn-based}, we consider the policy $\pi$ computed by \Cref{alg:compute-pi}, where we use the new neighborhood notion $\neigh_{\xi/2}^{I_\xx}$ in place of $\neigh_\xi$. 
Since $\neigh_{\xi/2}^{I_\xx}$ relaxes dimensions outside of $I_\xx$, the value $\tilde{\vv}^t$ found in each iteration may differ from $\vv^t$ by up to $1+\xi$ in these dimensions (where $1+\xi$ is the length of $\neigh_{\xi/2}(\calB)$). 
Our key argument is that, agents corresponding to the relaxed dimensions must have not been active for more than $\lambda$ time steps.
Hence, the value difference introduced by the relaxation to the time steps in which these agents are active is bounded by $(1+\xi) \cdot \gamma^\lambda < \xi/2$.

The above results eventually lead to the following theorem for $c$-turn-based games.
When $c$ is a constant, this gives an algorithm that computes an $(\epsilon,\delta)$-optimal CE in time polynomial in $n$.

\begin{restatable}{theorem}{thmmainalgcturn}
\label{thm:main-alg-c-turn}
When the game is $c$-turn-based, we can compute a fixed point $\calV$ of $\tPhi$ in time $\poly\left(|S|^{\lambda+1},\, |A|^{c+1},\, (1/\xi)^{(\lambda+1) \cdot c},\, n \right)$.
Given $\calV$, there exists an $(\epsilon,\delta)$-optimal CE $\pi$,  such that for any given $(\sigma,s) \in \Sigma \times S$ we can compute $\pi(\sigma,s)$ in time $\poly\left(|S|^{\lambda+1},\, |A|^{c+1},\, (1/\xi)^{(\lambda+1) \cdot c},\, |\sigma|, \, n \right)$.
\end{restatable}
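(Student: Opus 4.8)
The plan is to mirror the development of Section~\ref{sc:value-set-iteration}, but carried out on the $\lambda$-memory meta-game with $\tPhi$ in place of $\xPhi$, and then translate the meta-game guarantees back to the original game. First I would invoke \Cref{lmm:tPhi-termination}: running \Cref{alg:value-set-iteration} with $\tPhi$ on the meta-game terminates in at most $|S|^{\lambda+1}\cdot(2/\xi+2)^{(\lambda+1)c}$ iterations and outputs a $\calV$ with $\calV=\tPhi(\calV)$ and $\calV\supseteq\calV^\star$. The crucial observation for the running time is that, because $\tPhi$ only ever constrains the dimensions in $I_\xx$ (a set of size at most $(\lambda+1)c$) and leaves all other coordinates pinned to $\neigh_{\xi/2}(\calB)$, each value set $\calV(\xx)$ is determined by its projection onto these $\le(\lambda+1)c$ active dimensions. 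Hence each $\tPhi(\calV)(\xx)$ is the convex hull of a set of grid points living in a space of constant dimension, and so has a vertex representation with $L=\poly((1/\xi)^{(\lambda+1)c})$ vertices. Each iteration then reduces — exactly as in the proof of \Cref{thm:main-alg}, via \Cref{lmm:compute-inducibility} applied to the meta-game (whose state space has size $|X|\le |S|^{\lambda+1}$ and whose effective action space at any meta-state has size $|A|^{|I_\xx|}\le|A|^{(\lambda+1)c}$ by $c$-turn-basedness) — to solving $\poly(|X|,|A|^{(\lambda+1)c},L)$ linear programs, one per candidate grid point. Multiplying the per-iteration cost by the iteration bound of \Cref{lmm:tPhi-termination} yields the claimed $\poly(|S|^{\lambda+1},|A|^{c+1},(1/\xi)^{(\lambda+1)c},n)$ bound; the lone $n$ dependence enters only through writing down $\tilde\rr$ and $\tilde p$ for the $n+1$ players.

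For the second half, I would pick $\vv\in\arg\max_{\vv'\in\calV(\xx_\init)}v'_0$; since $\calV\supseteq\calV^\star$ and the meta-game is reward-equivalent to the original game started at $s_\init$, this $v_0$ is at least the optimal CE value minus the $\xi/2$-rounding slack. Then run the variant of \Cref{alg:compute-pi} that uses $\neigh_{\xi/2}^{I_\xx}$ in place of $\neigh_\xi$, as described in the paragraph preceding \Cref{prp:eps-neigh-inducible-turn-based}. By \Cref{prp:eps-neigh-inducible-turn-based}, the resulting policy $\pi$ is $(\epsilon,\delta)$-inducible at $s_\init$ with induced principal value within $\epsilon$ of the optimum, i.e.\ $\pi$ is an $(\epsilon,\delta)$-optimal CE. For the per-query time bound, \Cref{alg:compute-pi} runs for $|\sigma|+1$ iterations, and each iteration must (i) locate $\tilde\vv^t\in\neigh_{\xi/2}^{I_{\xx^t}}(\vv^t)$ with $\tilde\vv^t=\bellman(\xx^t,\bar\pi,\ww)$ for some $(\bar\pi,\ww)\in\calF_{\xx^t}(\calV)$, and (ii) read off the next target from $\ww$. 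Step~(i) is again an instance of (the constructive half of) \Cref{lmm:compute-inducibility} on the meta-game, solvable in $\poly(|X|,|A|^{(\lambda+1)c},L)$ time; here one uses that $\vv^t\in\calV(\xx^t)\subseteq\neigh_{\xi/2}^{I_{\xx^t}}(\Phi(\calV)(\xx^t))$, which holds because $\calV=\tPhi(\calV)$ forces $\calV$ into that relaxed neighborhood of $\Phi(\calV)$ (the analogue of \Cref{crl:approx-fixed} for $\tPhi$). One minor bookkeeping point is that the input sequence $\sigma$ is given in the original game, so the algorithm must reconstruct the meta-state trajectory $\xx^1,\dots,\xx^{\ell+1}$ by sliding a window of length $\lambda$ over $\sigma$ (padding with $*$ for the first $\lambda$ steps); this is clearly polynomial.

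The main obstacle I anticipate is making the constant-dimension argument fully rigorous: one must check carefully that (a) the ``pinned'' coordinates never influence feasibility of the LP in \Cref{lmm:compute-inducibility} in a way that blows up the vertex count, i.e.\ that the IC constraints \Cref{eq:const-ic} for agent $i\notin I_\xx$ are either vacuous or already enforced at the onward states (this is exactly the place where the identity $I_\xx=\bigcup_{\ell=0}^\lambda I_{x^{-\ell}}$ and the $(1+\xi)\gamma^\lambda<\xi/2$ choice of $\lambda$ do the work), and (b) that the sequence of nested sets produced under $\tPhi$ really does sandwich the sequence under $\Phi$, so that the fixed point satisfies $\calV\supseteq\calV^\star$ — the argument being identical in spirit to the remark after \Cref{prp:alg-iteration-output}, but needing the monotonicity and $\Phi\subseteq\tPhi\subseteq\neigh_{\xi/2}(\Phi)$ facts asserted (but not proved in the excerpt) for $\tPhi$. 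Everything else is a routine transcription of the constant-$n$ argument with $|X|$ in place of $|S|$ and $|A|^{(\lambda+1)c}$ in place of $|\bfA|$.
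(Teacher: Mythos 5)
Your proposal is correct and follows essentially the same route as the paper: run \Cref{alg:value-set-iteration} with $\tPhi$ on the $\lambda$-memory meta-game, cite \Cref{lmm:tPhi-termination} for the iteration bound and $\calV \supseteq \calV^\star$, reduce each iteration (and each step of the adapted \Cref{alg:compute-pi}) to the LP of \Cref{lmm:compute-inducibility} with $|X| = |S|^{\lambda+1}$ meta-states and $\poly\big((1/\xi)^{(\lambda+1)c}\big)$ grid points per active-dimension projection, and invoke \Cref{prp:eps-neigh-inducible-turn-based} for $(\epsilon,\delta)$-optimality of the returned policy. The only (harmless) inaccuracies are attributing an effective per-meta-state action space of $|A|^{|I_\xx|} \le |A|^{(\lambda+1)c}$ to $c$-turn-basedness---the acting players at $\xx$ are only those in $I_{x^0}$, giving the paper's $|A|^{c+1}$ bound, though your looser bound is still polynomial since $\lambda$ is a constant---and the unnecessary ``$\xi/2$-rounding slack'' when comparing $\max_{\vv'\in\calV(\xx_\init)} v'_0$ to the optimum, which is conservative given $\calV \supseteq \calV^\star$.
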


\section{Conclusion and Discussions}
\label{sec:conclusion}

We presented algorithms for computing optimal CEs in infinite-horizon multi-player stochastic games. 
Our algorithms achieve $(\epsilon,\delta)$-optimality, for any $\epsilon$ and $\delta$. 
We also established matching inapproximability results that indicate the optimality of our approach.
For the general model, the algorithms run in time polynomial in $(\epsilon \delta (1-\gamma))^{-(n+1)}$.
In the special case of $c$-turn-based games, we reduce the time complexity to a polynomial in $n$.

While our primary focus is on computing optimal CEs, another widely considered and easier task is finding just one CE. 
Our algorithms readily solve this problem, in time exponential in the number $n$ of agents. 
An interesting open question is therefore whether this exponential dependence can be avoided in succinctly represented games. 
In the one-shot setting, this is known to be possible for many succinctly represented games \cite{papadimitriou2008computing}.

Regarding {\em space} complexity, we can easily adapt the value-set iteration algorithm to remove the exponential dependence on $n$: instead of tracking the entire value sets, we maintain only a single point in each value set. Due to the monotonicity property of the operator, the iteration process will generate a sequence of points $\vv^0, \vv^1, \dots$, with each $\vv^k$ contained in the corresponding value set $\calV^k$ generated by value-set iteration. Consequently, $\vv^k$ will eventually enter $\mathcal{V}^\star$ as $k$ increases, hence forming a CE. 
The space complexity of this method is polynomial in $n$ since only one point is maintained throughout.

However, it remains open whether the {\em time} complexity of this adapted value iteration approach can also be bounded from above by a polynomial in $n$. 
Our current analysis only yields an exponential upper bound. It is also possible that this problem is complete in some complexity class whose relation with P remains unknown (e.g., PPAD). Notably, computing a {\em stationary} CE has been shown to be PPAD-hard \cite{daskalakis2023complexity}. In our case, the CE to be computed need not be stationary, so this known PPAD-hardness does not imply the same for computing a CE.

\bibliographystyle{ACM-Reference-Format}

\clearpage

\appendix

\setboolean{isrestating}{true}

\section{Omitted Proofs}

\subsection{Proofs in \Cref{sc:inapprox-rational}}

\thmirrational*

\begin{proof}

We describe an instance for $\delta = 0$ where the agents' rewards are not all in the normalized range $[-\frac{1-\gamma}{2}, \frac{1-\gamma}{2}]$. 
At the end of this proof, we will describe how this instance can be modified so that it applies to any $\delta \in [0,1)$ while all rewards are in the normalized range.

Let there be two agents and let the players' action set be $A = \{a,b, \ain, \aout\}$.
The states and the transition model are shown in \Cref{fig:irrational}. 
All transitions are deterministic. Specifically:
\begin{itemize}
\item Each state $s_i$, $i\in \{1,2\}$, is controlled solely by agent~$i$---the transition and reward functions are invariant with respect to the other players' actions.

\item Each state $s'_i$, $i\in \{1,2\}$, is controlled jointly by the principal and agent~$i$ (but not the other agent). 
Hence, from $s'_i$ the game enters a subgame between the principal and agent~$i$.
In \Cref{fig:irrational}, we use $aa$, $bb$, $ab$, and $ba$ to denote the joint actions of the principal and agent $i$ in these subgames. 

\item All the other states are controlled solely by the principal.
\end{itemize}

We define the rewards in a way such that the principal strictly prefers both agents to select $\ain$.
To achieve this, the policy needs to reduce agent $i$'s value at $s'_i$ as much as possible.
The subgame is then structured so that {\em the minimum inducible value of agent $i$ is an irrational number.} 
Eventually, to attract both agents to select $\ain$, the principal needs to set the agents' values at $s_3$ to irrational numbers, in order to match the minimum inducible values.

More specifically, the rewards are given as follows. (Recall that $\rmax = \frac{1-\gamma}{2}$ and $\rmin = - \frac{1- \gamma}{2}$.)
\begin{itemize}
\item 
For the principal (player $0$), 
we let all rewards be $\rmin$ except the following:
\begin{align*}
&
r_0(s_1, \ain) \;=\; r_0(s_2, \ain) \;=\; \rmax; \\
&
r_0(s_3, \cdot) \;=\; r_0(s'_3, \cdot) \;=\; r_0(s''_3, \cdot) \;=\; \rmax.
\end{align*}
This ensures that the principal gets: a high value $\frac{1}{1-\gamma} \cdot \rmax = 1/2$ if both agents select $\ain$ with probability $1$; and a low value $\frac{1}{1-\gamma} \cdot \rmin = - 1/2$, otherwise.

\item
For each agent $i \in \{1,2\}$, let all rewards be zero except the following: 
\begin{align*}
& 
r_1(s_3, a) = 1/\gamma^2, \qquad\;
r_1(s'_1, aa) = 1, \qquad \;\;\; 
r_1(s'_1, bb) = 2; \\
&
r_2(s_3, a) = - 1/\gamma, \qquad
r_2(s'_2, aa) = -1, \qquad
r_2(s'_2, bb) = -2.
\end{align*}
\end{itemize}

The rewards at $s'_i$ result in the {\em irrational} minimum inducible values stated in \Cref{lmm:min-spi}.
Additionally, the rewards at $s_3$ make the values of agents~$1$ and $2$ at $s_3$ negatively correlated: 
\[
V_1^\pi(s_3) = \pi(a \given s_3) / \gamma^2
\quad\text{and}\quad
V_1^\pi(s_3) = - \pi(a \given s_3) / \gamma.
\]
As a result, $\pi(a \given s_3) = \sqrt{11/3} - 1 \approx 0.915$ is the only situation in which the Q-value of $\ain$ matches that of $\aout$ for both agents.
Any value lower (respectively, larger) than this would disincentivize agent~$1$ (respectively, agent~$2$) from choosing $\ain$, in which case the principal only obtains the low value $-1/2$.

\begin{restatable}{lemma}{lmmminspi}
\label{lmm:min-spi}
When $\gamma = 1/2$, the minimum inducible value of agent~$1$ at $s'_1$ is $\sqrt{11/3} - 1$. 
The minimum inducible value of agent~$2$ at $s'_2$ is $-\sqrt{11/3} + 1$.
\end{restatable}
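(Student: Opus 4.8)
The plan is to analyze the subgame rooted at $s'_1$ in isolation, using the fixed-point characterization of inducible values (\Cref{prp:V-star}); the statement about $s'_2$ will then follow from a sign-flipped version of the same argument. First I would record the relevant structure: $s''_1$ is absorbing with zero stage reward for both agents, so $\calV^\star(s''_1)$ fixes agent~$1$'s continuation value there to $0$; from $s'_1$ the only successors are $s'_1$ itself (via the matching joint actions $aa$ and $bb$) and $s''_1$ (via the mismatching joint actions $ab$ and $ba$), and by the construction the auxiliary actions $\ain,\aout$ play no role at $s'_1$; and since $s'_1$ is invariant to agent~$2$'s action and the principal has no incentive constraint, it suffices to track agent~$1$'s coordinate. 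Set $v^\star \coloneqq \min\{\,v_1 : \vv\in\calV^\star(s'_1)\,\}$; this minimum is attained because $\calV^\star(s'_1)$ is a nonempty compact set (\Cref{lmm:Phi-properties}, \Cref{thm:Phi-converge}), and $v^\star\ge 0$ since all of agent~$1$'s rewards are nonnegative.

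For the lower bound, I would take an arbitrary $\vv\in\calV^\star(s'_1)$ and invoke \Cref{prp:V-star}: there are $\bar{\pi}\in\Delta(\bfA)$ supported on $\{aa,bb,ab,ba\}$ and onward values with $w_1(aa,aa;s'_1),w_1(bb,bb;s'_1)\ge v^\star$ and $w_1(\cdot,\cdot;s''_1)=0$, satisfying the incentive constraints \Cref{eq:const-ic}. Writing $X=\bar{\pi}(aa)\,(1+\gamma\,w_1(aa,aa;s'_1))$ and $Y=\bar{\pi}(bb)\,(2+\gamma\,w_1(bb,bb;s'_1))$, the Bellman constraint gives $v_1=X+Y$. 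The incentive constraint for a recommendation of $a$ against deviating to $b$ reads $X\ge\bar{\pi}(ba)\,(2+\gamma\,w_1(ba,bb;s'_1))\ge\bar{\pi}(ba)\,(2+\gamma v^\star)$, hence $\bar{\pi}(ba)\le X/(2+\gamma v^\star)$; symmetrically (recommendation $b$ against $a$) $\bar{\pi}(ab)\le Y/(1+\gamma v^\star)$; and $w_1(aa,aa;s'_1),w_1(bb,bb;s'_1)\ge v^\star$ give $\bar{\pi}(aa)\le X/(1+\gamma v^\star)$ and $\bar{\pi}(bb)\le Y/(2+\gamma v^\star)$. Summing the four probabilities to $1$ yields $1\le(X+Y)\bigl(\tfrac{1}{1+\gamma v^\star}+\tfrac{1}{2+\gamma v^\star}\bigr)$, i.e.\ $v_1=X+Y\ge\frac{(1+\gamma v^\star)(2+\gamma v^\star)}{3+2\gamma v^\star}$. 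Taking the infimum over $\vv$ gives $v^\star\ge\frac{(1+\gamma v^\star)(2+\gamma v^\star)}{3+2\gamma v^\star}$; with $\gamma=1/2$ this simplifies to $3(v^\star)^2+6v^\star-8\ge 0$, which combined with $v^\star\ge 0$ forces $v^\star\ge\sqrt{11/3}-1$.

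For the matching upper bound I would exhibit the explicit stationary CE that plays at $s'_1$ the joint action $(a,a)$ with probability $\frac{v^\circ}{1+\gamma v^\circ}$ and $(b,a)$ with probability $\frac{v^\circ}{2+\gamma v^\circ}$, where $v^\circ=\sqrt{11/3}-1$. The identity $v^\circ(3+2\gamma v^\circ)=(1+\gamma v^\circ)(2+\gamma v^\circ)$, i.e.\ $3(v^\circ)^2+6v^\circ-8=0$ at $\gamma=1/2$, makes these two probabilities sum to $1$; agent~$1$ is always recommended $a$, and both its follow-value and its deviate-to-$b$ value equal $v^\circ$, so its incentive constraint holds, while agent~$2$'s constraints at $s'_1$ hold trivially. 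The value this CE induces for agent~$1$ at $s'_1$ is $v^\circ$, so $v^\star\le\sqrt{11/3}-1$, and therefore $v^\star=\sqrt{11/3}-1$. For $s'_2$ I would repeat the computation with agent~$2$'s stage rewards $-1,-2,0,0$ in place of $1,2,0,0$ and ``maximize'' throughout; tracking signs, with $\mu\coloneqq-\min\{\,v_2:\vv\in\calV^\star(s'_2)\,\}$ the argument yields $\mu\le\frac{(1+\gamma\mu)(2+\gamma\mu)}{3+2\gamma\mu}$ from an arbitrary inducible point and the reverse inequality from an explicit CE playing $(b,b)$ and $(a,b)$, so $\mu=\sqrt{11/3}-1$ and the minimum inducible value of agent~$2$ at $s'_2$ equals $-\sqrt{11/3}+1$.

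The main obstacle is the self-referential nature of the estimate: $v^\star$ appears on both sides through the onward-value constraint, so there is no closed-form recursion to solve outright. I handle this by deriving only the one-sided inequality $v^\star\ge g(v^\star)$ from an arbitrary inducible point, where $g(v)=\frac{(1+\gamma v)(2+\gamma v)}{3+2\gamma v}$ is strictly increasing, and pairing it with the explicit construction giving $v^\star\le\sqrt{11/3}-1$; since $g(v)-v$ changes sign exactly once on $[0,\infty)$, at $\sqrt{11/3}-1$, the two one-sided bounds pin $v^\star$ down. A secondary technical point is to verify that the auxiliary actions $\ain,\aout$ at $s'_1$ (and $s'_2$) cannot be exploited to drive an agent's value below $\sqrt{11/3}-1$, which is immediate from how those states' transition functions are defined in the construction.
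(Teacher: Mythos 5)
Your proof is correct, but it takes a genuinely different route from the paper's. The paper treats the subgame at $s'_i$ as a two-player zero-sum game between the principal and agent~$i$, asserts that the minimum inducible value equals the agent's maximin value (and that correlation signals do not help), and then computes that maximin value in closed form by minimizing the agent's best-response value $f(x)=\max\{v(x,0),v(x,1)\}$ over the principal's stationary mixture $x$, obtaining the indifference point $x=(7-\sqrt{33})/2$ with value $\sqrt{11/3}-1$. You instead work directly with the inducible-value characterization (\Cref{prp:V-star}): from the IC and onward-value constraints you extract the four bounds $\bar{\pi}(aa)\le X/(1+\gamma v^\star)$, $\bar{\pi}(ba)\le X/(2+\gamma v^\star)$, $\bar{\pi}(bb)\le Y/(2+\gamma v^\star)$, $\bar{\pi}(ab)\le Y/(1+\gamma v^\star)$, sum them to get the self-referential inequality $v^\star\ge(1+\gamma v^\star)(2+\gamma v^\star)/(3+2\gamma v^\star)$, and match it with an explicit stationary CE attaining the positive root of $3v^2+6v-8=0$ (your mixture $p_{aa}=v^\circ/(1+\gamma v^\circ)\approx 0.628$ is in fact exactly the paper's minimax $x$, so the achievability half is the same construction verified through CE conditions rather than through the zero-sum lens). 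What your route buys is that it handles arbitrary history-dependent, correlated policies without having to justify the paper's unproven reductions (``min inducible $=$ maximin'' and ``correlation does not help''), since the onward-value constraint with the extremal value $v^\star$ does that work; what it costs is more bookkeeping, and it needs the small observations you supply (nonemptiness/closedness of $\calV^\star(s'_1)$ via \Cref{lmm:Phi-properties} and \Cref{thm:Phi-converge}, and $v^\star\ge 0$), plus the sign-flipped rerun for agent~$2$, which your inequality $\mu\le g(\mu)$ together with the $(b,b)/(a,b)$ construction does deliver. The one point you wave at---that $\ain,\aout$ cannot be used at $s'_i$ to push the value lower---is an underspecification inherited from the paper's own construction (its proof silently restricts to the $2\times 2$ matrices), so it is not a gap you introduced, but if you want the argument airtight you should state explicitly that at $s'_i$ only $a,b$ are available (or that other actions are reward- and transition-equivalent to one of them), since otherwise the step where the four probabilities sum to $1$ would fail.
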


\begin{proof}
To analyze the minimum inducible values, we view the subgame starting at $s_i$ as a zero-sum game between the principal and agent~$i$, given by the following symmetric payoff matrices.
\begin{table}[h]
\renewcommand{\arraystretch}{1.5}
\centering
\begin{tabular}{ccc}
                         & $a$                      & $b$                      \\ \clineB{2-3}{2.5} 
\multicolumn{1}{l}{$a$} & \multicolumn{1}{V{2.5}c|}{$~~1~$} & \multicolumn{1}{cV{2.5}}{$~~0~$} \\ \cline{2-3} 
\multicolumn{1}{l}{$b$} & \multicolumn{1}{V{2.5}c|}{$~~0~$} & \multicolumn{1}{cV{2.5}}{$~~2~$} \\ \clineB{2-3}{2.5} 
                         & \multicolumn{2}{c}{$r_1(s'_1, \cdot)$}               
\end{tabular}
\hspace{2cm}
\begin{tabular}{ccc}
                         & $a$                      & $b$                      \\ \clineB{2-3}{2.5} 
\multicolumn{1}{l}{$a$} & \multicolumn{1}{V{2.5}c|}{$-1~$} & \multicolumn{1}{cV{2.5}}{$~0~$} \\ \cline{2-3} 
\multicolumn{1}{l}{$b$} & \multicolumn{1}{V{2.5}c|}{$0$} & \multicolumn{1}{cV{2.5}}{$-2$} \\ \clineB{2-3}{2.5} 
                         & \multicolumn{2}{c}{$r_2(s'_2, \cdot)$}               
\end{tabular}
    \label{tab:my_label}
\end{table}

The agents' minimum inducible values are equal to their {\em maximin values} in these games.
In particular, it suffices to analyze agent~$1$'s maximin value. 
The value of agent~$2$ follows readily by noting that in the zero-sum game given by $r_2(s'_2, \cdot)$, agent~$2$' essentially plays the role of agent~$1$'s opponent as in the zero-sum game given by $r_1(s'_1, \cdot)$.  
Hence, agent~$2$'s maximin value is the opposite of that of agent~$1$.

In zero-sum games, correlation signals do not help to improve the value of any player, so we can assume that the principal only selects her own actions, by using a policy $\pi(\cdot \given s'_1) \in \Delta(\{a,b\})$.
Let $x = \pi(a \given s'_i)$ and suppose that the agent responds by playing action $a$ with probability $y$.
In this case, the agent's value is
\begin{align*}
v(x, y) = 
& \; xy \cdot (1 + \gamma \cdot v(x, y) ) \;\, + \;  
x(1- y) \cdot (0 + \gamma \cdot 0 ) \;  + \\
& \;  (1-x)y \cdot (0 + \gamma \cdot 0 ) \;  +\; 
(1 - x) (1- y) \cdot (2 + \gamma \cdot v(x, y)),
\end{align*}
where we used the fact that the value at state $s''_1$ is $0$.
When $\gamma = 1/2$, This gives
\[
v(x, y) = \frac{xy + 2(1-x)(1-y)}{1 - xy/2 - (1-x)(1-y)/2}.
\]
It can be verified that $v(x,y)$ is monotonic w.r.t. $y$ for any $x \in [0,1]$. Hence, the value of the agent's best response, as a function of $x$, is $f(x) = \max \{ v(x, 0), v(x,1)\}$.
The minimum value of $f$ is obtained at the point where $v(x, 0) = v(x,1)$. Solving this equation for $x\in[0,1]$ gives $x=(7 - \sqrt{33})/2$ and $v(x, 0) = \sqrt{11/3} - 1$.
\end{proof}

It remains to show how the game instance can be modified so that it applies to any $\delta \in [0,1)$ while all rewards fall in the normalized range $[\rmin, \rmax]$.
We proceed as follows.

Denote by $r_i$ be the reward function of agent $i$ in the original instance and by $r'_i$ the modified reward function. 
Moreover, let $\alpha_1 > 0$ and $\alpha_2 < 0$ be two numbers sufficiently close to zero, such that:
\[
|\alpha_i \cdot r| \le  ( 1 - \delta ) \cdot (\rmax - \rmin)
\]
for all reward values $r$ in the original instance.

For agent~$1$, since the original rewards are all nonnegative, to ensure that the new rewards fall within $[\rmin,\rmax]$, we let:
\begin{itemize}
\item $r'_1 = \alpha_1 \cdot r_1 + \rmin + \delta \cdot (\rmax - \rmin)$ for all rewards in the subgame starting at $s'_1$, as well as the reward for $(s_1, \aout)$, which triggers this subgame; and

\item $r'_1 = \alpha_1 \cdot  r_1 + \rmin$ for all other rewards.
\end{itemize}
Similarly, for agent~$2$, the original rewards are all nonpositive, so we let:
\begin{itemize}
\item $r'_2 = \alpha_2 \cdot  r_2 + \rmax$ for all rewards in the subgame starting at $s'_2$, as well as the reward for $(s_2, \aout)$, which triggers this subgame; and

\item $r'_2 = \alpha_2 \cdot r_2 + \rmax - \delta \cdot (\rmax - \rmin)$ for all other rewards.
\end{itemize}

Hence, the new rewards are all within $[\rmin, \rmax]$.
Additionally, the terms $\delta \cdot (\rmax - \rmin)$ and $-\delta \cdot (\rmax - \rmin)$ create a cumulative difference of $\sum_{t=0}^{\infty} \gamma^t \cdot \delta \cdot (\rmax - \rmin) = \delta$ between the values at $s_i$ and $s_3$ for each agent $i$.
This gap cancels out the $\delta$ slack in $\delta$-CE.
Based on the minimum inducible values in \Cref{lmm:min-spi} scaled by $c_i$, we can verify that the principal can only get the high value $1/2$ by using the same policy in our analysis of the original instance, which involves irrational probabilities.
\end{proof}

\thminapproximability*

\newcommand{\betagapsat}{{\sc GAP 3-SAT}}
\newcommand{\termin}{\perp}

\begin{proof}
The proof is via a reduction from the {\betagapsat} problem. 
A {\betagapsat} instance is given by a 3CNF formula $\varphi$.
It is a yes-instance if $\varphi$ is satisfiable, and a no-instance if no assignment of the boolean variables in $\varphi$ satisfies more than a $\beta$ fraction of the clauses in $\varphi$.
It is know via the PCP theorem that there exists a constant $\beta \in (0,1)$ such that {\betagapsat} is NP-hard \cite{haastad2001some}.

\begin{figure}[t]
\centering
\tikzstyle{cstate}=[state, draw=myblue,fill=myblue!5, text=myblue]
\tikzstyle{xstate}=[state, draw=black,fill=mywhite]
\tikzstyle{minoredge}=[-,thick, dotted]

\begin{tikzpicture}[baseline={(current bounding box.center)}, node distance=25mm, scale=0.85, transform shape]
\tikzstyle{every node}=[text width=6mm, inner sep=1pt, align=center, font=\small] 

\node[state, initial, initial where=above] (sinit) {$s_\init$};
\node[cstate, below of=sinit, xshift=-4mm, yshift=5mm] (C2) {$C_2$};
\node[cstate, left of=C2, xshift=-4mm] (C1) {$C_1$};
\node[right of=C2, xshift=-3mm] (dots1) {$\dots$};
\node[cstate, right of=dots1, xshift=-3mm] (Ck) {$C_k$};
\node[myTermState, below left of=C1, xshift=-5mm, yshift=0mm] (t0) {};
\node[xstate, below of=C1, xshift=0mm] (x1) {$x_1$};
\node[xstate, right of=x1] (x2) {$x_2$};
\node[xstate, right of=x2] (x3) {$x_3$};
\node[xstate, right of=x3] (x4) {$x_4$};
\node[myTermState, below left of=x1, xshift=8mm] (t1) {};
\node[myTermState, below right of=x1, xshift=-8mm] (t2) {};
\draw 
(sinit) edge[dashed] (C1)
(sinit) edge[dashed] (C2)
(sinit) edge[dashed] (Ck)
(C1) edge[] node[labelOnEdge,text width=15mm,pos=0.4, xshift=-8mm, yshift=1mm]{$(\aout,\aout)$} (t0)
(C1) edge[] node[labelOnEdge,text width=12mm,pos=0.55]{$(a_1,*)$} (x1)
(C1) edge[] node[labelOnEdge,text width=12mm,pos=0.55]{$(a_2,*)$} (x2)
(C1) edge[] node[labelOnEdge,text width=12mm,pos=0.55]{$(\aout,a_3)$} (x4)
(C2) edge[minoredge] ($(C2)!0.2!(x1)$)
(C2) edge[minoredge] ($(C2)!0.3!(x2)$)
(C2) edge[minoredge] ($(C2)!0.25!(x3)$)
(C2) edge[minoredge] ($(C2)!0.2!(x4)$)
(Ck) edge[minoredge] ($(Ck)!0.2!(x2)$)
(Ck) edge[minoredge] ($(Ck)!0.3!(x3)$)
(Ck) edge[minoredge] ($(Ck)!0.3!(x4)$)
(x1) edge[] node[labelOnEdge,pos=0.5]{$b_1$} (t1)
(x1) edge[] node[labelOnEdge,pos=0.5]{$b_2$} (t2)
(x2) edge[minoredge] ($(x2)+(-0.4,-0.7)$)
(x2) edge[minoredge] ($(x2)+(0.4,-0.7)$)
(x3) edge[minoredge] ($(x3)+(-0.4,-0.7)$)
(x3) edge[minoredge] ($(x3)+(0.4,-0.7)$)
(x4) edge[minoredge] ($(x4)+(-0.4,-0.7)$)
(x4) edge[minoredge] ($(x4)+(0.4,-0.7)$)
;

\end{tikzpicture}
\vspace{3mm}
\caption{Reduction from {\betagapsat}. In this example, there are four variables $x_1,\dots, x_4$ in $\varphi$ and $C_1 = (x_1 \vee v_2 \vee \neg x_4)$ (the other clauses are omitted).
Hence, in the game instance, $C_1$ transitions to $x_1$ so long as agent~$1$ plays $a_1$ (irrespective of the actions played by agent~$2$ and the principal).
It transitions to $x_2$ if agent~$1$ plays $a_2$.
It transitions to $x_4$ if agent~$1$ plays $\aout$ and agent~$2$ plays $a_3$ (as $\neg x_4$ is the {\em third} literal of $C_1$).
The subtrees rooted at $x_2,\dots,x_4$ share the same structure as the one rooted at $x_1$.
}
\label{fig:reduction}
\end{figure}
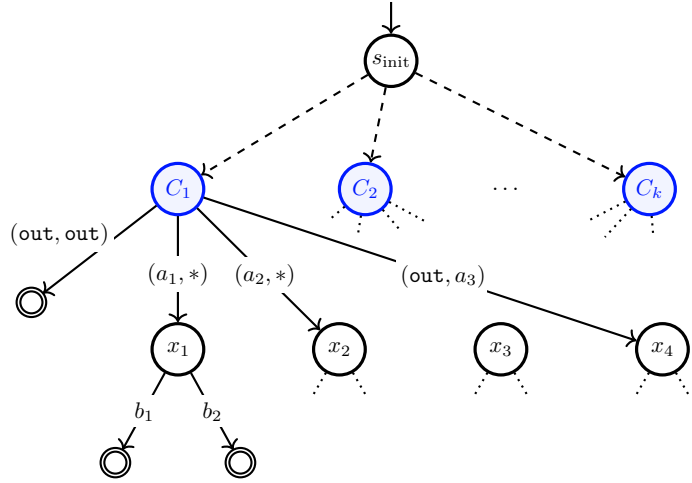

Consider a {\betagapsat} instance $\varphi$.
Let $x_1, \dots, x_m$ be the set of boolean variables in $\varphi$ and $C_1, \dots, C_k$ be the clauses. 
Each clause $C_j$ is a conjunction of three literals, each literal is either $x_i$ or its negation $\neg x_i$ for some $i \in \{1,\dots, m\}$.

We reduce $\varphi$ to a game with two agents and horizon length $3$.
As illustrated in \Cref{fig:reduction}, 
the state space of the game is $\{s_\init, s_\termin\} \cup \{c_1,\dots, c_k\} \cup \{x_1, \dots, x_m\}$.
Each agent has four actions $a_1, a_2, a_3$, and $\aout$, and the principal has two actions $b_1$ and $b_2$. 
\begin{itemize}
\item 
$s_\init$ is the starting state. Irrespective of the joint action performed, $s_\init$ transitions to each $C_j$ with probability $1/k$ and generates a reward of $0$.

\item Each state $C_j$ corresponds to the clause $C_j$ in $\varphi$.
W.l.o.g., we let each action $a_\ell$, $\ell \in \{1,2,3\}$, be available for agent~$1$ only if the $\ell$-th literal in $C_j$ is a positive literal; and let it be available for agent~$2$ only if the $\ell$-th literal in $C_j$ is a negative literal. The action $\aout$ is always available for both agents.

Irrespective of the principal's action, each $C_j$ transitions to:
    \begin{itemize}
    \item 
    $x_i$ with probability $1$, if $x_i$ is the $\ell$-th literal in $C_j$ and agent~$1$ plays $a_\ell$ (irrespective of agent~$2$'s action);
    \item 
    $x_i$ with probability $1$, if $\neg x_i$ is the $\ell$-th literal in $C_j$, agent~$1$ plays $\aout$, and agent~$2$ plays $a_\ell$;
    \item 
    $s_\termin$ with probability $1$, if both agents play $\aout$.
    \end{itemize}
Here, when both agents play $\aout$, each of them receives a reward of $\rmax$ and the principal receives $0$; otherwise, every players receives $0$.
(Recall that $\rmax = \frac{1-\gamma}{2}$ and $\rmin = - \frac{1- \gamma}{2}$.)

\item 
Each state $x_i$ corresponds to the variable $x_i$ in $\varphi$.
The principal effectively controls this state; all rewards depend only on the principal's action:
    \begin{itemize}
    \item For each $i\in \{1,2\}$, when $b_i$ is played, agent~$i$ gets $\rmax$ and the other agent gets $\rmin$.
    \end{itemize}
In all these cases, the principal gets a reward of $\rmax$, and the state transitions to $s_\termin$.

\item The game terminates at $s_\termin$.
\end{itemize}

Let $\delta = \gamma \cdot \rmax/2$.
We show that if the {\betagapsat} instance is a yes-instance, then there exists a stationary (exact) CE inducing value $\gamma^2$ for the principal.
Otherwise, no stationary $\delta$-CE induces value higher than $\gamma^2 \beta$. Hence, taking $\epsilon = \gamma^2 \beta$ gives the stated result.
Note that it is essential that the yes-instance corresponds to an exact CE because of the resource augmentation in the approximation criterion considered. 

Intuitively, the principal can freely adjust the agents' values at each $x_i$ within the range $[\rmin,\rmax]$, so long as the values sum to zero.
This associates actions $b_1$ and $b_2$ with the $true$ and $false$ assignments to $x_i$ in the SAT instance:
\begin{itemize}
\item Playing $b_1$ with probability $1$ at $x_i$ yields $\rmax$ for agent~$1$. This is sufficiently high to incentivize the agent to play $a_\ell$ at every clause where $x_i$ appears as the $\ell$-th literal, just as letting $x_i = true$ makes all these clauses true in the SAT instance.

\item Similarly, playing $b_2$ with probability $1$ incentivizes agent~$2$ to play $a_\ell$ at clauses where $\neg x_i$ appears as the $\ell$-th literal, just as letting $x_i = false$ makes all these clauses true in the SAT instance.
\end{itemize}
More precisely, in the case of $\delta$-CEs, any reward as high as $\rmax - \delta/\gamma$ is sufficient to provide the desired incentive. 
The parameter $\delta = \gamma \cdot \rmax/2$ further ensures that {\em only one} of the two agents can be incentivized at each $x_i$, just as $x_i$ can be assigned only one value in the SAT instance.

Since the principal obtains zero reward everywhere except at the $x_i$ states, her overall value is proportional to the number of clause states where at least one agent opts for an action other than $\aout$, so that the game proceeds to a variable state.
This establishes a connection between computing a maximum value CE and finding a truth assignment satisfying the maximum number of clauses in $\varphi$.
More formally, let us analyze the cases where $\varphi$ is a yes-instance and a no-instance, respectively.

If $\varphi$ is a yes-instance, there exists a satisfiable assignment $f: \{x_1,\dots,x_m\} \to \{true, false\}$.
One can verify that playing $b_1$ at every $x_i$ such that $f(x_i) = true$, and $b_2$ at every $x_i$ such that $f(x_i) = false$ disincentivizes at least one agent from playing $\aout$. 
Hence, with probability $1$, the game will reach some state $x_i$, at which the principal receives a reward of $1$. The discounted sum of the principal's reward is $\gamma^2$ in this case.

Conversely, suppose that $\varphi$ is a no-instance and consider an arbitrary stationary $\delta$-CE $\pi$ in the corresponding game instance.
The stationarity of $\pi$ means that it employs a fixed distribution at each $x_i$, independent of the history.
Since $\varphi$ is a no-instance, no assignment satisfies more than $\beta k$ clauses.
This implies that, in the game instance, for any choice of action distributions at the $x_i$'s, there are more than $\beta k$ clause states where it is {\em not} $\delta$-optimal for any agent to play any action other than $\aout$.  
As a result, under $\pi$, the game terminates at $s_\termin$ with probability at least $1-\beta$, without going through any $x_i$. Hence, $\pi$ induces a value of at most $\gamma^2 \beta$. This completes the proof.
\end{proof}

\subsection{Proofs in \Cref{sc:value-set-iteration}}

\prpVstar*

\begin{proof}
To prove the ``only if'' direction, suppose that $\vv \in \calV^\star(s)$ and we demonstrate the existence of $\bar{\pi}$ and $\ww$ satisfying the stated conditions.

By \Cref{def:inducible-value-set}, $\vv \in \calV^\star(s)$ means that it can be induced by some CE $\pi$, such that $\vv = V^\pi(s)$.
It can be verified that letting $\bar{\pi} = \pi(s)$ and $\ww(\va, \va; s') = V^{\pi}(s, \va, \va; s')$ for all $\va \in \bfA$ and $s' \in S$ gives the stated conditions.

Specifically,
$\vv = \bellman(s, \bar{\pi}, \ww)$ follows readily from \Cref{eq:Q,eq:V}.
Then, \Cref{eq:const-ic} follows from \Cref{eq:IC-Q}, considering deviation plans in which agent $i$ deviates to action $b$ in the first time step and then follows $\pi$ throughout.
Finally, \Cref{eq:const-onward} holds because the subsequent execution of $\pi$ given each sequence $(s,\va,\bb;s')$ remains a CE in the subgame starting at $s'$, and $\ww(\va,\bb;s')$ is precisely the value it induces in the subgame. Hence, $\ww(\va,\bb;s') \in \calV^\star(s')$ according to the definition of $\calV^\star$.

Now consider the ``if'' direction.
Suppose that $\vv = \bellman(s, \bar{\pi}, \ww)$, and  $(\bar{\pi}, \ww)$ satisfies \Cref{eq:const-ic,eq:const-onward}.
The fact that $\ww(\va,\bb;s') \in \calV^\star(s')$ for all $\va, \bb, s'$ implies that each $\ww(\va,\bb;s')$ can be induced by some CE, say $\pi'_{\va,\bb,s'} : \Sigma \times S \to \Delta(\bfA)$.
We construct the following policy $\pi$, which concatenates these CEs with $\bar{\pi}$: 
\begin{itemize}
\item 
let $\pi(s) = \bar{\pi}(s)$; and

\item
let 
$\pi(s, \va,\bb;\sigma; s') = \pi'_{\va,\bb,s^1}(\sigma;s')$ for all $\va,\bb,s'$, and $\sigma = (s^1,\va^1,\bb^1;\dots;s^\ell, \va^\ell, \bb^\ell) \in \Sigma$.
\end{itemize}
(The other unspecified distributions in $\pi$ can be arbitrary---they are outputs of $\pi$ on input sequences starting at some $s'\neq s$.)
It is then follows from \Cref{eq:const-ic} that $\pi$ forms a CE and from $\vv = \bellman(s, \bar{\pi}, \ww)$ that $\vv = V^\pi(s)$. Hence, $\vv \in \calV^\star(s)$.
\end{proof}

\lmmPhiproperties* 

\begin{proof}
We prove the properties below.

\paragraph{Convexity}
Consider arbitrary $\vv, \vv' \in \Phi(\calV)(s)$.
Suppose that 
$\vv = \bellman(s, \bar{\pi}, \ww)$ and
$\vv' = \bellman(s, \bar{\pi}', \ww')$,
where $(\bar{\pi}, \ww), (\bar{\pi}', \ww') \in \calF_s(\calV)$.
Consider executing one of $(\bar{\pi},\ww)$ and $(\bar{\pi}',\ww')$ uniformly at random, the effect of which is equivalent to the tuple $(\bar{\pi}'',\ww'')$ where:
\begin{itemize}
\item 
$\bar{\pi}''(\va) = (\bar{\pi}(\va) + \bar{\pi}'(\va))/2$ for every $\va \in \bfA$; and

\item 
$\ww''(\va, \bb; s') = \frac{\bar{\pi}(\va)}{\bar{\pi}(\va) + \bar{\pi}'(\va)} \cdot \ww(\va, \bb; s') + \frac{\bar{\pi}'(\va)}{\bar{\pi}(\va) + \bar{\pi}'(\va)} \cdot \ww'(\va, \bb; s')$ for every $(\va, \bb; s') \in \bfA^2 \times S$.
\end{itemize}
By linearity, we get that $\vv'' \coloneqq (\vv + \vv')/2 =  \bellman(s, \bar{\pi}'', \ww'')$ and \Cref{eq:const-ic} holds.
Moreover, $\bar{\pi}'' \in \Delta(\bfA)$ and $\ww''(\cdot,\cdot;s') \in \calV(s')$ for all $s'$ due to the convexity of $\Delta(\bfA)$ and $\calV(s')$.

Consequently, $(\bar{\pi}'',\ww'') \in \calF_s(\calV)$ and $\vv'' \in \Phi(\calV)$. Since $\vv$ and $\vv'$ are chosen arbitrarily, $\Phi(\calV)(s)$ is convex.

\paragraph{Closedness}
To see the closedness of $\Phi(\calV)(s)$.
Consider any convergent sequence $\vv^1, \vv^2, \dots \in \Phi(\calV)(s)$, and let $\vv$ be the limit point. The task is to argue that $\vv \in \Phi(\calV)(s)$.

Since $\vv^k \in \Phi(\calV)(s)$, by definition, 
$\vv^k = \bellman(s, \bar{\pi}^k, \ww^k)$ for some $(\bar{\pi}^k, \ww^k) \in \calF_s(\calV)$.
Consider the sequence $\{(\vv^k, \bar{\pi}^k, \ww^k)\}_{k=1}^\infty$. It is bounded, so according to the Bolzano–Weierstrass theorem it must have a convergent subsequence, that converges to some $(\vv, \bar{\pi}, \ww)$. 
Note that $\calF_s(\calV)$ is closed if $\calV(s')$ is closed for every $s'$.
Hence, it must be that $(\bar{\pi}, \ww) \in \calF_s(\calV)$.
Moreover, by continuity, now that $\vv^k = \bellman(s, \bar{\pi}^k, \ww^k)$ for every $k$, this equation must hold at the limit point too, so $\vv = \bellman(s, \bar{\pi}, \ww)$.
By definition, this means that $\vv \in \Phi(\calV)(s)$.

\paragraph{Nonemptiness}
Suppose that we pick the onward values $\ww$ in a way such that they are invariant with respect to the actions recommended and played, i.e., $\ww(\va,\bb; s')$ depends only on $s'$ and we can write it as $\ww(s')$. 
Then, any $\bar{\pi} \in \Delta(\bfA)$ that satisfies \Cref{eq:const-ic} is effectively a CE of the one-shot game where the utility generated by each joint action $\va \in \bfA$ is $u_i(\bb) = r_i(s,\va) + \gamma \mathbb{E}_{s'\sim p(\cdot \given s, \va)} \ww(s')$ for each agent $i$.
A CE always exists in every one-shot game, and we have $\bellman(s, \bar{\pi}, \ww) \in \Phi(\calV)(s)$, so $\Phi(\calV)(s)$ is nonempty.
\end{proof}

\lmmPhifixedpoint*

\begin{proof}
The lemma follows from the same proof for the ``achievability'' property shown by \citet{murray2007finding} (see Lemma~3 in their work).
For a complete argument, note that \Cref{prp:eps-neigh-inducible} is a more general version of this lemma.
Choosing $\epsilon = \delta = 0$ there gives stated result. 
\end{proof}

\lmmPhimonotonicity*

\begin{proof}
This can be verified by noting that expanding the value sets in the input to $\Phi$ effectively relaxes the onward value constraint in \Cref{eq:const-onward}.
\end{proof}

\lmmnestedsets*

\begin{proof}
Recall that all rewards in $\rr$ are bounded in $[\rmin, \rmax] = [-\frac{1-\gamma}{2},\, \frac{1-\gamma}{2}]$.
Consider \Cref{eq:const-bellman}. 
If the onward vectors are chosen from $\calB = [-\frac{1}{2}, \frac{1}{2}]^{n+1}$, we can easily bound $\vv = \bellman(s, \bar{\pi}, \ww)$ within the space $[\rmin - \gamma/2,\, \rmax + \gamma/2]^{n+1} = \calB$.
This means that $\Phi(\calB) \subseteq \calB$.

Now that $\calV^0 = \calB$, it holds automatically that $\calV^1 \subseteq \calV^0$ and $\calV^\star \subseteq \calV^0$.
By monotonicity (\Cref{lmm:Phi-monotonicity}), applying $\Phi$ to the two sides of $\calV^\star \subseteq \calV^0$ then gives
\begin{align*}
\calV^\star = \Phi(\calV^\star) \subseteq \Phi(\calV^0) = \calV^1.
\end{align*}
Hence, $\calV^\star \subseteq \calV^1 \subseteq \calV^0$.
By repeatedly applying $\Phi$ to this equation and using the monotonicity $\Phi$, we get that $\calV^\star \subseteq \calV^{k+1} \subseteq \calV^k$ for all $k \in \mathbb{N}$, as desired.
\end{proof}

\thmPhiconverge*

\begin{proof}
Given \Cref{lmm:nested-sets}, it is sufficient to argue that 
$\calV^\infty \subseteq \calV^\star$ in order to prove the convergence of the sequence to $\calV^\star$, where 
\[
\calV^\infty \coloneqq \bigcap_{k=1}^\infty \calV^k.
\]
We show this next.

Consider arbitrary $s\in S$ and $\vv \in \calV^\infty(s)$, which means $\vv \in \calV^k(s)$ for all $k \in \mathbb{N}$. 
Since $\calV^{k+1} = \Phi(\calV^k)$, we also have $\vv \in \Phi(\calV^k)(s)$. 
By definition, this means that, for every $k \in \mathbb{N}$, 
\[
\vv = \bellman(s,\bar{\pi}^k, \ww^k)
\]
for some $(\bar{\pi}^k, \ww^k) \in \calF_s(\calV^k)$.
Consider the sequence $\{(\bar{\pi}^k, \ww^k)\}_{k=K}^\infty$.
Since $\bar{\pi}^k$ and $\ww^k$ are both bounded, by the Bolzano–Weierstrass theorem,
there must be a convergent subsequence $\{(\bar{\pi}^{k_\ell}, \ww^{k_\ell})\}_{\ell=1}^\infty$. Let $(\bar{\pi}, \ww)$ be the limit of this subsequence.
We claim that the following holds.

\begin{claim}
\label{clm:1}
$(\bar{\pi}, \ww) \in \calF_s(\calV^\infty)$.
\end{claim}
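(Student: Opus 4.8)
The plan is to prove Claim~\ref{clm:1} by verifying the two defining conditions of $\calF_s(\calV^\infty)$ for the limit tuple $(\bar\pi,\ww)$: namely, that $(\bar\pi,\ww)$ satisfies the IC constraint~\eqref{eq:const-ic}, and that $\ww(\va,\bb;s') \in \calV^\infty(s')$ for all $\va,\bb,s'$. (That $\bar\pi \in \Delta(\bfA)$ is automatic, since $\Delta(\bfA)$ is closed and each $\bar\pi^{k_\ell} \in \Delta(\bfA)$.) The IC constraint is the easy part: each $(\bar\pi^{k_\ell},\ww^{k_\ell})$ satisfies~\eqref{eq:const-ic}, which is a finite system of (weak) polynomial inequalities in the entries of $\bar\pi^{k_\ell}$ and $\ww^{k_\ell}$ (the left- and right-hand sides are continuous — indeed multilinear — in these variables, with $r_i$, $\gamma$, and $p$ fixed). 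Passing to the limit preserves weak inequalities, so $(\bar\pi,\ww)$ satisfies~\eqref{eq:const-ic} as well.

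The substantive point is the onward-value containment. Fix $\va,\bb,s'$ and any $K \in \mathbb{N}$. For every index $k_\ell \ge K$ we have $(\bar\pi^{k_\ell},\ww^{k_\ell}) \in \calF_s(\calV^{k_\ell})$, hence $\ww^{k_\ell}(\va,\bb;s') \in \calV^{k_\ell}(s')$; and since the sets are nested decreasing by \Cref{lmm:nested-sets}, $\calV^{k_\ell}(s') \subseteq \calV^{K}(s')$. So the tail $\{\ww^{k_\ell}(\va,\bb;s')\}_{\ell \,:\, k_\ell \ge K}$ lies entirely in $\calV^{K}(s')$, which is closed (this is exactly where we invoke \Cref{lmm:Phi-properties}: $\calV^{K} = \Phi(\calV^{K-1})$ is closed for $K \ge 1$, and $\calV^0 = \calB$ is closed too). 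A convergent sequence whose tail lies in a closed set has its limit in that closed set, so $\ww(\va,\bb;s') \in \calV^{K}(s')$. Since $K$ was arbitrary, $\ww(\va,\bb;s') \in \bigcap_{K} \calV^{K}(s') = \calV^\infty(s')$. This holds for every $\va,\bb,s'$, which is precisely the second condition, establishing Claim~\ref{clm:1}.

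Once Claim~\ref{clm:1} is in hand, the rest of the proof of \Cref{thm:Phi-converge} should go as follows. By continuity of $\bellman(s,\cdot,\cdot)$ (again multilinear in $\bar\pi$ and $\ww$), passing to the limit in $\vv = \bellman(s,\bar\pi^{k_\ell},\ww^{k_\ell})$ gives $\vv = \bellman(s,\bar\pi,\ww)$; combined with $(\bar\pi,\ww) \in \calF_s(\calV^\infty)$ and the definition~\eqref{eq:Phi} of $\Phi$, this yields $\vv \in \Phi(\calV^\infty)(s)$. Thus $\calV^\infty \subseteq \Phi(\calV^\infty)$. Conversely, monotonicity (\Cref{lmm:Phi-monotonicity}) applied to $\calV^\infty \subseteq \calV^k$ gives $\Phi(\calV^\infty) \subseteq \Phi(\calV^k) = \calV^{k+1}$ for every $k$, so $\Phi(\calV^\infty) \subseteq \bigcap_{k} \calV^{k+1} = \calV^\infty$. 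Hence $\calV^\infty = \Phi(\calV^\infty)$ is a fixed point, and by \Cref{prp:Phi-fixed-point} any fixed point is contained in $\calV^\star$; together with $\calV^\star \subseteq \calV^\infty$ from \Cref{lmm:nested-sets} this gives $\calV^\infty = \calV^\star$, which is the claimed convergence. Nonemptiness of $\calV^\star$ follows because each $\calV^k(s)$ is nonempty and closed and bounded (\Cref{lmm:Phi-properties} plus \Cref{lmm:nested-sets}), so the nested intersection $\calV^\infty(s) = \calV^\star(s)$ is nonempty by the finite-intersection / Cantor property for compact sets.

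The main obstacle is the onward-value step, and specifically the need to be careful about the order of quantifiers: one cannot directly say "$\ww(\va,\bb;s') \in \calV^\infty(s')$ because $\ww^{k_\ell}(\va,\bb;s') \in \calV^{k_\ell}(s')$ and $\calV^{k_\ell} \to \calV^\infty$," since the sets $\calV^{k_\ell}(s')$ are shrinking along the sequence. The correct argument, as above, freezes an arbitrary level $K$, uses the nestedness to trap the tail of the sequence inside the \emph{fixed} closed set $\calV^{K}(s')$, concludes the limit lies in $\calV^{K}(s')$, and only then intersects over $K$. This is also the one place where the closure-preservation property of $\Phi$ is genuinely essential — without it, the limit could escape every $\calV^{K}(s')$.
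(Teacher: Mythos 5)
Your proof is correct and follows essentially the same route as the paper's: continuity/closedness arguments for $\bar\pi\in\Delta(\bfA)$ and the IC constraint \eqref{eq:const-ic}, and the onward-value containment obtained by trapping the tail of the subsequence in a fixed closed level set $\calV^{K}(s')$ (closedness via \Cref{lmm:Phi-properties} and nestedness via \Cref{lmm:nested-sets}) before intersecting over $K$. The only cosmetic difference is that the paper phrases this last step as a proof by contradiction, whereas you argue it directly for every $K$; the substance is identical.
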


\begin{proof}
By definition, we need to prove that $\bar{\pi} \in \Delta(\bfA)$, $\ww(\va, \bb; s') \in \calV^\infty(s')$ for all $\va$, $\bb$, $s'$, and $(\bar{\pi}, \ww)$ satisfies \Cref{eq:const-ic}. 

First, $\bar{\pi} \in \Delta(\bfA)$ because $\Delta(\bfA)$ is closed and $\bar{\pi}$ is the limit of $\{\bar{\pi}^{k_\ell}\}_{\ell=1}^\infty$, which is contained in $\Delta(\bfA)$. 
Indeed, each $\calV^k(s')$ is also closed, as implied by \Cref{lmm:Phi-properties} and the fact that every set in $\calV^0$ is closed.
Given this, we can argue that $\ww(\va, \bb; s') \in \calV^\infty(s')$ for any $\va$, $\bb$, $s'$.

Specifically, if this were not true, then by definition $\ww(\va, \bb; s') \notin \calV^K(s')$ for some $K$. 
On the other hand, for $L$ sufficiently large, we must have $\ww^{k_\ell}(\va, \bb; s') \in \calV^{k_\ell}(s') \subseteq \calV^{K}(s')$ for all $\ell \ge L$. 
Now that $\calV^K(s')$ is closed, the subsequence $\{\ww^{k_\ell}(\va,\bb;s')\}_{\ell=L}^\infty$ must converge to a point inside $\calV^K(s')$, so we get that $\ww(\va,\bb;s') \in \calV^{K}(s')$, which is a contradiction.

Finally, for every $k \in \mathbb{N}$, $(\bar{\pi}^k, \ww^k)$ satisfies \Cref{eq:const-ic} because $(\bar{\pi}^k, \ww^k) \in \calF_s(\calV^k)$.
By continuity of the two sides of \Cref{eq:const-ic}, the constraint must hold for the limit $(\bar{\pi}, \ww)$, too. 
\end{proof}

Now, since $\vv = \bellman(s, \bar{\pi}^k, \ww^k)$ for every $k$, by continuity, we have $\vv = \bellman(s, \bar{\pi}, \ww)$, too.
Combining this with \Cref{clm:1} then gives $\vv \in \Phi(\calV^\infty)(s)$.
As this holds for every $s$ and every $\vv \in \calV^\infty(s)$, it follows that
$\calV^\infty = \Phi(\calV^\infty)$, so $\calV^\infty$ is a fixed point of $\Phi$.
Recall that $\calV^\star$ is the greatest fixed point of $\Phi$ (\Cref{prp:Phi-fixed-point}), which means $\calV^\infty \subseteq \calV^\star$, as desired.

Finally, $\calV^\infty \neq \emptyset$ follows from the fact that every $\calV^k(s)$, $s \in S$, is closed and nonempty: every sequence $\vv^1,\vv^2,\dots$ such that $\vv^k \in \calV^k(s)$ must have a convergent subsequence, and similarly to an argument we used above, the limit point of this subsequence must be inside $\calV^\infty(s)$.
\end{proof}

\lmmxPhiPhi*

\begin{proof}
We first prove $\xPhi(\calV) \subseteq \neigh_\xi(\Phi(\calV))$.
By construction, $\xPhi(\calV)(s)$ is the convex hull of a subset of $\neigh_\xi(\Phi(\calV)(s))$.
Now that $\calV(s)$ is convex for all $s$, by \Cref{lmm:Phi-properties}, $\Phi(\calV)(s)$ and, consequently, $\neigh_\xi(\Phi(\calV)(s))$ are also convex.
A convex set contains the convex hull of every subset of itself, so we have $\xPhi(\calV) \subseteq \neigh_\xi(\Phi(\calV))$.

To see that $\Phi(\calV) \subseteq \xPhi(\calV)$, note that the grid cell containing any $\vv \in \Phi(\calV)(s)$ is contained entirely in $\neigh_\xi(\Phi(\calV)(s))$.
Hence, all the vertices of the cell are included in $\xPhi(\calV)(s)$. Consequently, $\vv \in \xPhi(\calV)(s)$.
\end{proof}

\thmepsneighinducible*

\begin{proof}
This follows readily by \Cref{lmm:eps-neigh-inducible-alg}: the policy $\pi$ computed by \Cref{alg:compute-pi} is a $\delta$-CE such that $V^\pi(s_\init) \in \neigh_\epsilon(\vv)$. Hence, $\vv$ is $(\epsilon,\delta)$-inducible at $s_\init$.
\end{proof}

\lmmepsneighinduciblealg*

\begin{proof}
First, let us verify that $\pi$ is well-defined, i.e., for any given sequence $(\sigma;s)$ starting at $s_\init$, \Cref{alg:compute-pi} successfully outputs a distribution. 
The key is to argue that, in every iteration $t$, there exists $(\bar{\pi}, \ww) \in \calF_{s^t}(\calV)$ that implements some $\tilde{\vv}^t \in \neigh_\xi(\vv^t)$ at \Cref{ln:compute-vv}. Equivalently, this means that $\neigh_\xi(\vv^t) \cap \Phi(\calV)(s^t) \neq \emptyset$ since, by definition, $\Phi(\calV)(s^t)$ consists of all such implementable values.

Consider first the case where $t=1$.
According to the input requirement, $\vv^1 = \vv \in \calV(s^1)$ and $\calV(s^1) \subseteq \neigh_\xi(\Phi(\calV)(s^1))$. 
Hence, $\vv^1$ lies in the $\xi$-neighborhood of $\Phi(\calV)(s^1)$, which also means 
$\neigh_\xi(\vv^1) \cap \Phi(\calV)(s^1) \neq \emptyset$, as desired.
The fact that $(\bar{\pi},\ww) \in \calF_{s^1}(\calV)$ further ensures that 
$\vv^2 = \ww(\va^1,\bb^1;s^2) \in \calV(s^2)$, so the same argument applies to $t=2$.
By induction, we get that $\neigh_\xi(\vv^t) \cap \Phi(\calV)(s^t) \neq \emptyset$ for all $t$.
Hence, $\pi$ is well-defined.
\smallskip

We next show that $\pi$ is a $\delta$-CE and $V^\pi(s_\init) \in \neigh_\epsilon(\vv)$ to complete the proof.
For better clarity, let us define a function $\uu: \Sigma \times S \to \mathbb{R}^{n+1}$ such that, for every sequence $(\sigma;s) \in \Sigma \times S$, $|\sigma| = \ell$,
\begin{align}
\label{eq:uu}
\uu(\sigma;s) = \vv^{\ell+1}(\sigma;s),
\end{align}
where $\vv^{\ell+1}(\sigma;s)$ denotes the value of $\vv^{\ell+1}$ in \Cref{alg:compute-pi}, given input $(\sigma;s)$.
Similarly, we define a reward scheme $\tilde{\rr}: \Sigma \times S \to \mathbb{R}^{n+1}$,
such that
\begin{align}
\label{eq:td-rr}
\tilde{\rr}(\sigma; s) = \vv^{\ell+1}(\sigma;s) - \tilde{\vv}^{\ell+1}(\sigma;s).
\end{align}

Consider executing $\pi$ in the following games.
We will compare the V- and Q-values generated in these games to argue that: in the original game no agent benefits by more than $\delta$ by deviating from $\pi$, while $\pi$ induces some $\tilde{\vv}$ close to $\vv$.
\begin{itemize}
\item[1.] 
{\bf Subsidized game $\Gsub$}: During the execution of $\pi$, each player is subsidized by $\tilde{\rr}$ according to the interaction history. 
Every time a new state $s$ is generated following history $\sigma$, every player~$i$ receives a reward $\tilde{r}_i(\sigma;s)$, in addition to their original rewards given by $\rr$. 

\item[2.] 
{\bf Truncated game $\Gtrun$}: The game is subsidized as above. Additionally, we terminate it at time step $t+1$. 
In this final time step, regardless of the joint action performed, every player~$i$ receives only a reward $u_i(\sigma;s^{t+1})$ based on the history $(\sigma; s^{t+1})$.

\item[3.] 
{\bf The Original game $\calG$} (where neither $\tilde{\rr}$ nor $\uu$ is applied).  
\end{itemize}

\paragraph{Subsidized vs. Truncated}
We first compare $\Gsub$ and $\Gtrun$ and show that $\pi$ forms a CE in both games and induces $\vv$.
In what follows, we denote by $V^{\pi,\tilde{\rr}}$ and $\overline{V}^{\pi,\tilde{\rr}}$ the V-values induced by $\pi$ in $\Gsub$ and $\Gtrun$, respectively. The same notation applies to the Q-values.

For $\Gtrun$, this can be verified by induction.
\begin{itemize}
\item 
As the base case, $\pi$ is automatically IC at time step $t+1$, where $\Gtrun$ terminates: 
for any sequence $(\sigma;s,a)$, $|\sigma|=t$, \Cref{eq:IC-Q} holds.
Moreover,
\begin{align}
\label{eq:V-trunc}
\overline{V}^{\pi,\tilde{\rr}}(\sigma;s) = \uu(\sigma; s).
\end{align}

\item
Consider a generic $\ell \le t$.
Suppose that the above conditions hold for time step $\ell + 1$. We show that they must hold for step $\ell$, too.
Specifically, consider arbitrary $(\sigma;s)$, $|\sigma|=\ell$.
\Cref{eq:V-trunc} can be verified by establishing the following: 
\begin{align*}
&\overline{V}^{\pi,\tilde{\rr}}(\sigma; s) \\
&= 
\E_{\va \sim \pi(\sigma;s)} \Big( \rr(s, \va) + \tilde{\rr}(\sigma;s) + \gamma \E_{s' \sim p(\cdot \given s, \va)} \overline{V}^{\pi,\tilde{\rr}}(\sigma;s,\va,\va;s') \Big) 
& \text{\small (by \Cref{eq:V,eq:Q})}\\
&= 
\E_{\va \sim \pi(\sigma;s)} \Big( \rr(s, \va) + \tilde{\rr}(\sigma;s) + \gamma \E_{s' \sim p(\cdot \given s, \va)} \uu(\sigma;s,\va,\va;s') \Big) & \text{\small (by \Cref{eq:V-trunc} for $\ell+1$)} \\
&= 
\E_{\va \sim \pi(\sigma;s)} \Big( \rr(s, \va) + \tilde{\rr}(\sigma;s) + \gamma \E_{s' \sim p(\cdot \given s, \va)} \vv^{\ell+2}(\sigma;s,\va,\va;s') \Big) & \text{\small (by \Cref{eq:uu})}\\
&= 
\E_{\va \sim \bar{\pi}} \Big( \rr(s, \va) + \gamma \E_{s' \sim p(\cdot \given s, \va)} \ww(\va,\va;s') \Big) + \tilde{\rr}(\sigma;s) & \text{\small (by \Cref{ln:vv-t}, Alg.~\ref{alg:compute-pi})} \\
&=
\bellman(s, \bar{\pi}, \ww) + \tilde{\rr}(\sigma;s)  \\
&= 
\tilde{\vv}^{\ell+1}(\sigma;s) +  \tilde{\rr}(\sigma;s) & \text{\small (by \Cref{ln:compute-vv}, Alg.~\ref{alg:compute-pi})} \\
&= \uu(\sigma; s), & \text{\small (by \Cref{eq:td-rr,eq:uu})}
\end{align*}
where $\bar{\pi}$ and $\ww$ denote values of these variables at the $(\ell+1)$-th iteration of \Cref{alg:compute-pi}, given input $(\sigma;s)$.
Note that the values of these variables are the same for the input sequences $(\sigma;s)$ and $(\sigma;s,\va,\va;s')$, when evaluated at the same iteration.

Additionally, \Cref{eq:IC-Q} can be verified by expanding $\overline{Q}_i^{\pi,\tilde{\rr}}(\sigma;s,\va)$ and $\overline{Q}_i^{\pi,\rho,\tilde{\rr}}(\sigma;s,\va)$ in a similar way, and noting that $(\bar{\pi}, \ww) \in \calF_{s^{\ell+1}}(\calV)$ implies the IC condition in \Cref{eq:const-ic}.
Similarly to our argument in the proof of \Cref{prp:V-star}, this IC condition implies \Cref{eq:IC-Q}.  
\end{itemize}
Hence, by induction, \Cref{eq:IC-Q} holds throughout, $\pi$ is a CE in $\Gtrun$, and $\overline{V}^{\pi,\tilde{\rr}}(s_\init) =  \uu(s_\init) = \vv^1(s_\init) = \vv$.

Now we turn to $\Gsub$.
The differences between the following values can be bounded by $\gamma^t$, for any $s,\va$:
\begin{align*}
\left\| V^{\pi,\rho, \tilde{\rr}}(s_\init) - \overline{V}^{\pi,\rho, \tilde{\rr}}(s_\init) \right\|_\infty \le \gamma^t  
\quad\text{ and }\quad
\left\| Q^{\pi,\rho, \tilde{\rr}}(s_\init, \va) - \overline{Q}^{\pi,\rho, \tilde{\rr}}(s_\init, \va) \right\|_\infty \le \gamma^t.
\end{align*}
Indeed, in the first $t$ time steps, rewards generated along every sequence are the same in both games.
As for rewards in the subsequent steps, even if they differ arbitrarily in $\Gsub$ and $\Gtrun$, the discounted sum of the differences is at most $\sum_{\ell = t+1}^\infty \gamma^{\ell-1} \cdot \rmax \le \gamma^t$. Hence, the above bounds follow.

Besides $s_\init$, the same analysis also applies to any other sequences $(\sigma,s)$, $|\sigma|= \ell$, whereby we get: 
\begin{align*}
\left\| V^{\pi,\rho, \tilde{\rr}}(\sigma; s) - \overline{V}^{\pi,\rho, \tilde{\rr}}(\sigma; s) \right\|_\infty \le \gamma^{t+1-\ell}  
\quad\text{ and }\quad
\left\| Q^{\pi,\rho, \tilde{\rr}}(\sigma; s, \va) - \overline{Q}^{\pi,\rho, \tilde{\rr}}(\sigma; s, \va) \right\|_\infty \le \gamma^{t+1-\ell}.
\end{align*}
Taking $t \to \infty$, we get that the V- and Q-values in $\Gsub$ are the same for every possible sequence as those in $\Gtrun$.
Since $\pi$ forms a CE in $\Gtrun$, this means that it must be a CE in $\Gsub$, too.
In particular, $V^{\pi, \tilde{\rr}}(s_\init) = \overline{V}^{\pi, \tilde{\rr}}(s_\init) = \vv$.

\paragraph{Subsidized vs. Original}

Next, we compare $\Gsub$ and the original game $\calG$ to bound the benefit of deviating from $\pi$ in the latter.
We can establish the following bounds for any $\sigma,s,\va$, by noting that removing the subsidies in $\Gsub$ results in a difference of at most $\max_{(\sigma,s) \in \Sigma\times S}|\tilde{\rr}(\sigma,s)|$ in every time step:
\begin{align}
\label{eq:G-Gsub-values}
\left\| V^{\pi,\rho}(\sigma;s) - V^{\pi,\rho, \tilde{\rr}}(\sigma;s) \right\|_\infty \le \epsilon
\quad\text{ and }\quad
\left\| Q^{\pi,\rho}(\sigma;s, \va) - Q^{\pi,\rho, \tilde{\rr}}(\sigma; s, \va) \right\|_\infty \le \delta/2,
\end{align}
where we used the fact that 
$\max_{(\sigma,s) \in \Sigma\times S}|\tilde{\rr}(\sigma,s)| \le \xi$ (since in \Cref{alg:compute-pi} every $\tilde{\vv}^t$ is in the $\xi$-neighborhood of $\vv^t$) and 
$\sum_{t=1}^\infty \gamma^{t} \cdot \xi \le \xi/(1-\gamma) = \min\{\epsilon, \delta/2\}$.

As demonstrated, $\pi$ forms a CE in $\Gsub$. It then follows that in the original game the benefit from deviating from $\pi$ is at most $\delta$ (twice the upper bound of the Q-value difference above).
Moreover, in the original game,
\[
\left\| V^{\pi}(s_\init) - \vv \right\|_\infty \le 
\left\| V^{\pi, \tilde{\rr}}(s_\init) - \vv \right\|_\infty + \epsilon = \epsilon,
\]
where we used $V^{\pi, \tilde{\rr}}(s_\init) = \vv$, as demonstrated above.

In summary, $\pi$ forms a CE in the original game $\calG$ and $V^\pi(s_\init) \in \neigh_\epsilon(\vv)$. 
This completes the proof.
\end{proof}

\prpalgiterationoutput* 

\begin{proof}
We first bound the number of iterations.
Consider the sequence of value-set functions $\xcalV^0, \xcalV^1, \dots$ generated by \Cref{alg:value-set-iteration}, and let $G^k(s)$ be the set of grid points inside $\xcalV^k(s)$.
Observe that $\xPhi$ has the same monotonicity property as $\Phi$, as stated in \Cref{lmm:Phi-monotonicity}. 
Hence, by a similar argument as that in the proof of \Cref{thm:Phi-converge}, $\xPhi$ maps $\neigh_\xi(\calB)$ to a subset, and initializing $\xcalV^0$ to $\neigh_\xi(\calB)$ ensures $\xcalV^{k+1} \subseteq \xcalV^k$ for all $k \in \mathbb{N}$. 
Consequently, $G^{k+1} \subseteq G^k$. 

By construction, each $\xcalV^k(s)$ is the convex hull of $G^k(s)$. So, $\xcalV^{k+1} = \xcalV^k$ if and only if $G^{k+1} = G^k$.
As the algorithm continues, at least one grid point must be removed from $G^k(s)$ in each iteration, for some $s \in S$. The algorithm then terminates in at most $|S| \cdot (1/\xi)^{n+1}$ iterations, before all grid points are removed from the $G^k(s)$'s.

It remains to prove the stated properties of $\calV$.
First, $\calV = \xPhi(\calV)$ is evident from the termination condition.
To see that $\calV \supseteq \calV^\star$, recall that $\Phi(\calV) \subseteq \xPhi(\calV)$ by \Cref{lmm:xPhi-Phi}.
This implies that $\calV^k \subseteq \xcalV^k$ for all $k \in \mathbb{N}$
(where $\calV^k$ is the $k$-th value-set function generated by the algorithm under $\Phi$).
Further recall that $\calV^k \supseteq \calV^\star$ by \Cref{lmm:nested-sets}, so $\xcalV^k \supseteq \calV^\star$ for all $k$, and  $\calV \supseteq \calV^\star$ then follows.
\end{proof}

\lmmcomputeinducibility*

\begin{proof}
Note that $\vv \in \neigh_\beta(\Phi(\calV)(s))$ can be stated as: there exists $\tilde{\vv} \in \neigh_\beta(\vv) \cap \Phi(\calV)(s)$.
We formulate this decision problem as a constraint satisfiability problem: deciding whether there exist $\tilde{\vv} \in \mathbb{R}^{n+1}$, $\bar{\pi} \in \Delta(\bfA)$, and $\yy(\va,\bb,s') \in \mathbb{R}^{n+1}$ for $(\va, \bb, s') \in \bfA^2 \times S$ such that the constraints listed below hold. Here, each $\yy(\va,\bb,s')$ corresponds to the onward value $\ww(\va,\bb,s)$, expressed as a convex combination of the vertices of $\calV(s')$, i.e., $\ww(\va,\bb,s') = M(s') \cdot \yy(\va,\bb,s')$, where $M(s')$ is a matrix whose columns correspond to the vertices of $\calV(s')$ and we view $\yy(\va,\bb,s')$ as a column vector.
\begin{itemize}
\item Neighborhood constraint, ensuring $\tilde{\vv} \in \neigh_\beta(\vv)$:
\[
\vv - \beta \le \tilde{\vv} \le \vv + \beta.
\]

\item Bellman constraint, as \Cref{eq:const-bellman} but with $\tilde{\vv}$ being the value implemented:
\begin{align*}
\tilde{\vv} =  \sum_{\va \in \bfA} \bar{\pi}(\va) \cdot \left( \rr(s, \va) + \gamma \sum_{s' \in S} p(s' \given s, \va) \cdot \yy(\va, \va, s') \right).
\end{align*}

\item IC constraint, as \Cref{eq:const-ic}, for every $a,b \in A$ and $i = 1, \dots, n$: 
\begin{align*}
&\sum_{\va \in \bfA: a_i = a} \bar{\pi}( \va) \cdot \left( r_i\left(s, \va \right) + \gamma \sum_{s' \in S} p\left(s' \given s, \va \right) \cdot 
M_i(s') \cdot \yy \left(\va, \va, s' \right) \right) \ge \nonumber\hspace{-30mm}\\
&
\qquad \sum_{\va \in \bfA: a_i = a} \bar{\pi}( \va) \cdot \left( r_i\left(s, \va \oplus_i b \right) + \gamma \sum_{s' \in S} p\left(s' \given s, \va \oplus_i b \right) \cdot M_i(s') \cdot \yy \left(\va, \va \oplus_i b, s' \right) \right),
\end{align*}
where $M_i(s')$ denotes the $i$-th row of $M_i(s')$ (so $M_i(s') \cdot \yy (\va, \va \oplus_i b, s') = w_i (\va, \va \oplus_i b, s')$).

\item Onward value constraint, as \Cref{eq:const-onward}.
Now that the onward values are encoded as $\yy$, $\ww(\va,\bb,s') \in \calV(s')$ is equivalent to saying that $\yy(\va, \bb, s')$ is a valid distribution over the columns of $M_i(s')$.
Hence, the following constraints are imposed, for every $\va,\bb \in \bfA$ and $s' \in S$:
\begin{align}
\label{eq:onward-y}
\mathbf{1} \cdot \yy(\va, \bb, s') = 1
\quad\text{and}\quad 
\yy(\va, \bb, s') \ge \mathbf{0}.
\end{align}
\end{itemize}

Since both $\bar{\pi}$ and $\yy$ are variables, the above constraints involve quadratic terms $\bar{\pi}(\va) \cdot \yy(\va, \bb, s')$. Nevertheless, these terms can be eliminated by using a standard approach in previous works \cite{macdermed2011quick,gan2023sequentialv2}.
Specifically, we introduce an auxiliary variable $\zz(\va, \bb, s')$ to replace each $\bar{\pi}(\va) \cdot \yy(\va, \bb, s')$. 
Additionally, we impose the following constraints on $\zz$ to replace \Cref{eq:onward-y}:
\begin{align}
\label{eq:const-auxiliary}
\mathbf{1} \cdot \zz(\va, \bb, s') = \bar{\pi}(\va)
\quad\text{and}\quad
\zz(\va, \bb, s') \ge \mathbf{0}.
\end{align}
Note that the $\yy$ variables do not appear in the new formulation, as they always appear in the quadratic terms in the original. 

Clearly, for any feasible solution $(\tilde{\vv}, \bar{\pi}, \yy)$ to the original formulation, $(\tilde{\vv}, \bar{\pi}, \zz)$, where $\zz(\va, \bb, s') = \bar{\pi}(\va) \cdot \yy(\va, \bb, s')$ for every $\va,\bb,s'$, constitutes a feasible solution to the new formulation.
Conversely, for any feasible solution $(\tilde{\vv}, \bar{\pi}, \zz)$ to the new formulation, $(\tilde{\vv}, \bar{\pi}, \yy)$ constitutes a feasible solution to the original formulation, where for each $\va, \bb, s'$:
$\yy(\va, \bb, s') = \zz(\va, \bb, s') / \bar{\pi}(\va)$ if $\bar{\pi}(\va) > 0$; and
$\yy(\va, \bb, s')$ is an arbitrary distribution if $\bar{\pi}(\va) = 0$.

Hence, the new formulation preserves the satisfiability of the original one, as well as the values of $\tilde{\vv}$ in the feasible solutions.
Moreover, the new formulation involves only linear constraints, so it can be solved as an LP in polynomial time.
When the constraints are satisfiable, we obtain a feasible solution including: a distribution $\bar{\pi}$ and onward values $\ww = M \cdot \yy$ such that $(\bar{\pi}, \ww) \in \calF_s(\calV)$;
and a vector $\tilde{\vv} = \bellman(s, \bar{\pi}, \ww) \in \neigh_\beta(\vv)$.

The size of the LP is bounded by $\poly(|S|, |\bfA|, L)$, given that the vertex representation of $\calV$ involves at most $L$ points. The time complexity then follows.
\end{proof}

\thmmainalg*

\begin{proof}
We analyze the time complexity of the two problems.

\paragraph{Computing a Fixed Point}
By \Cref{prp:alg-iteration-output}, \Cref{alg:value-set-iteration} terminates in $(1/\xi+2)^{n+1}$ iterations and outputs a desired $\calV$.
In each iteration, we compute $\xPhi(\calV)(s)$, which reduces to deciding, for each grid point $\vg \in G_\xi \cap \neigh_\xi(\calB)$, whether $\vg \in \neigh_\xi(\Phi(\calV)(s))$. The first part of \Cref{lmm:compute-inducibility} then applies, and the set of grid points inside $\neigh_\xi(\Phi(\calV)(s))$ gives a vertex representation of $\xPhi(\calV)(s)$.
There are $(1/\xi+2)^{n+1}$ grid points, so we solve the decision problem $(1/\xi+2)^{n+1}$ times, each taking time $\poly\left(|S|, |\bfA|, (1/\xi)^{n+1}\right)$ by \Cref{lmm:compute-inducibility}.
(The ``$+2$'' factor is absorbed by the $\poly(\cdot)$ operator.)
Overall, the time complexity of \Cref{alg:value-set-iteration} is bounded by $\poly\left(|S|, |\bfA|, (1/\xi)^{n+1}\right)$.

\paragraph{Computing $\pi(\sigma;s)$}
Given $\calV$ in vertex representation, such that
$\calV = \xPhi(\calV)$ and $\calV \supseteq \calV^\star$,
we first identify an optimal point in $\calV(s_\init)$ for the principal:
$\vv^* \in \argmax_{\vv \in \calV(s_\init)} v_0$.
This can be done by solving an LP now that we have the vertex representation of $\calV(s_\init)$.
The time it takes to solve the LP is bounded by a polynomial in the size of the LP; that is, $\poly\left((1/\xi)^{n+1}\right)$ given that the number of vertices in the vertex representation of $\calV(s_\init)$ is in the order of $(1/\xi)^{n+1}$.

Since $\calV \supseteq \calV^\star$, we have
$\tilde{v}_0 \ge \max_{\vv \in \calV^\star(s_\init)} v_0 - \epsilon$ for every $\tilde{\vv}$ in the $\xi$-neighborhood of $\vv^*$. 
So, by definition (\Cref{def:eps-delta-optimal-policy}), any $\delta$-CE inducing such $\tilde{\vv}$ is $(\epsilon,\delta)$-optimal.
Indeed, now that $\calV = \xPhi(\calV)$, we have $\Phi(\calV) \subseteq \calV \subseteq \neigh_\xi(\Phi(\calV))$ by \Cref{crl:approx-fixed}, which fulfills the requirement of \Cref{alg:compute-pi}.
So, by \Cref{lmm:eps-neigh-inducible-alg}, \Cref{alg:compute-pi} computes exactly such a $\delta$-CE.
It remains to analyze the time complexity of \Cref{alg:compute-pi}.

\paragraph{Time Complexity of \Cref{alg:compute-pi}}

By design, \Cref{alg:compute-pi} terminates in $|\sigma|$ iterations. 
The time it takes to find $\bar{\pi}$, $\ww$, and $\tilde{\vv}^t$ in each iteration follows from the second part of \Cref{lmm:compute-inducibility}. 
Specifically, we have 
\[
\vv^t \in \neigh_{\xi}(\Phi(\calV)(s^t))
\]
because $\vv^t \in \calV(s^t)$ (by \Cref{ln:vv-t}) while the algorithm requires $\calV \subseteq \neigh_\xi(\Phi(\calV))$.
So by \Cref{lmm:compute-inducibility}, we can find the desired $\bar{\pi}$, $\ww$, and $\tilde{\vv}^t$ in the stated amount of time.

A subtle representation issue requires some attention here: the bit size of $\tilde{\vv}^t$---which depends on $\tilde{\vv}^{t-1}$---may grow exponentially with $t$, in which case the time complexity stated in \Cref{lmm:compute-inducibility} will involve additional exponential terms.
This issue can be addressed by a rounding procedure, along with a finer grid with precision $\xi' = \xi/2$.

Specifically, replacing $\xi$ with $\xi'$ everywhere in our algorithms, we get that $\vv^t \in \neigh_{\xi'}(\Phi(\calV)(s^t))$ in each iteration.
In this case, there exists a hypercube in $\mathbb{R}^{n+1}$ that has side length $\xi'$ and contains both $\vv^t$ and a point in $\neigh_{\xi'}(\Phi(\calV)(s^t))$.
Any such hypercube must also contain a grid point $\vg \in G_{\xi'}$, which means that 
\[
\vg \in \neigh_{\xi'}(\vv^t) \cap \neigh_{\xi'}(\Phi(\calV)(s^t)).
\]
Rounding $\vv^t$ to this grid point ensures that its bit size does not grow with $t$.
Moreover: 
\begin{itemize}
\item 
$\vg \in \neigh_{\xi'}(\Phi(\calV)(s^t))$ ensures that we can find $(\bar{\pi}, \ww) \in \calF_{s^t}(\calV)$ such that $\tilde{\vv}^t \in \bellman(s^t, \bar{\pi}, \ww)$ for some $\tilde{\vv}^t \in \neigh_{\xi'}(\vg)$, which follows by a direct application of \Cref{lmm:compute-inducibility}. 

\item 
Given this, $\vg \in \neigh_{\xi'}(\vv^t)$ then ensures that
\[
\left\| \tilde{\vv}^t - \vv^t \right\|_\infty \;\le\;  
\left\| \tilde{\vv}^t - \vg \right\|_\infty + \left\| \vg - \vv^t \right\|_\infty \;\le\; 
2 \xi' \;=\; 
\xi.
\]
So $\tilde{\vv}^t \in \neigh_{\xi}(\vv^t)$, as desired at \Cref{ln:compute-vv} of \Cref{alg:compute-pi}.
\end{itemize}
The time complexity, after introducing this rounding procedure, remains $\poly\left(|S|, |\bfA|, (1/\xi)^{n+1},\, |\sigma| \right)$ since the additional coefficient $2$ of $1/\xi$ is absorbed by the $\poly(\cdot)$ operator.
\end{proof}

\subsection{Proofs in \Cref{sc:beyond-constant}}

\lmmtPhitermination*

\begin{proof}
The proof is similar to that of \Cref{prp:alg-iteration-output}.
The fact that $\calV= \tPhi(\calV)$ and $\calV \subseteq \calV^\star$ follows the same argument there. We bound the number of iterations next.

$\tPhi$ satisfies the same monotonicity property as that in \Cref{lmm:Phi-monotonicity}.
Moreover, it always maps $\neigh_{\xi/2}(\calB)$ to its subset based on the same argument in the proof of \Cref{prp:alg-iteration-output}.
Hence, by initializing the value sets to $\neigh_{\xi/2}(\calB)$, \Cref{alg:value-set-iteration} generates a sequence of value-set functions, each contained in the previous.
Each value-set function corresponds to the set of grid points it contains. 
By construction, if $\vg$ is contained in a value set $\tPhi(\calV)(\xx)$, then so are all the grid points $\vg' \in \neigh_{\xi/2}(\calB)$ such that $g_i = g'_i$ for all $i \in I_\xx$. This means that we can further simplify the representation by considering only the dimensions in $I_\xx$. 

Within the space spanning these dimensions, there are $(2/\xi+2)^{|I_\xx|} \le (2/\xi+2)^{(\lambda+1) \cdot c}$ grid points in total in $\neigh_{\xi/2}(\calB)$, where $I_\xx$ contains at most $c$ unique acting players in each of the $\lambda+1$ time steps involved.
Moreover, there are $|X| = |S|^{\lambda + 1}$ meta-states, which amounts to $|S|^{\lambda+1} \cdot (2/\xi+2)^{(\lambda+1) \cdot c}$ grid points in the simplified representation of a value-set function.
As the algorithm proceeds, at least one grid point must be removed from the representation in each iteration. The stated result then follows.
\end{proof}

\prpepsneighinducibleturnbased*

\begin{proof}
Similarly to the proof of \Cref{lmm:eps-neigh-inducible-alg}, we iteratively expand $\vv$ into onward vectors in the approach described in \Cref{alg:compute-pi}, and we show that this procedure defines a policy that induces $\vv$ approximately.

More specifically, let us first adapt \Cref{alg:compute-pi} to the meta-game: 
\begin{itemize}
\item We replace $S$ with the meta-state space $X$, and replace every state $s^t$ in the algorithm with a meta-state $\xx^t = (x_t^{-\lambda}, \dots, x_t^{-1}, x_t^{0})$.

\item We replace $\xi$ with $\xi' = \xi/2$, as in the definition of $\tPhi$.

\item We require $\calV = \tPhi(\calV)$
and use the new neighborhood notion, aiming to find some $\tilde{\vv}^t \in \neigh_{\xi'}^{I_{\xx^t}}(\vv^t)$ at $\xx^t$ at \Cref{ln:compute-vv}, where $\xx^t$ is the $t$-th meta-state in $\sigma$.
\end{itemize}

Indeed, we can prove that if $\calV = \tPhi(\calV)$, then $\calV$ satisfies $\Phi(\calV)(\xx) \subseteq \calV(\xx) \subseteq \neigh_{\xi'}^{I_\xx} \Big(\Phi(\vv)(\xx) \Big)$ for every $\xx \in X$ (similarly to \Cref{crl:approx-fixed}). In turn, \Cref{alg:compute-pi} produces a distribution for every input sequence $(\sigma;\xx)$, via a similar argument to the one used in the proof of \Cref{lmm:eps-neigh-inducible-alg}.
This gives a well-defined policy $\pi$.

It then remains to demonstrate that $\pi$ induces a value close to $\vv$.
We replicate the arguments in the proof of \Cref{lmm:eps-neigh-inducible-alg} where we compare the values in the original game as well as two variants of it: the subsidized game $\Gsub$ and truncated game $\Gtrun$.
By the same argument, $\pi$ induces exactly $\vv$ in both $\Gsub$ and $\Gtrun$.

To compare the values in $\Gsub$ and the original game, note that in the meta-game we cannot bound all the subsidies $\tilde{\rr}$ by $\xi'$ as in some dimensions, $\tilde{\vv}^t$ and $\vv^t$ may differ significantly due to the new neighborhood notion.
However, observe that by construction, for each iteration $t$, every player~$i \in I_{x_t^0}$---who acts in the current state $x_t^0$---will remain in $I_{\xx^\ell}$ for all $\ell = t+1, \dots, t+\lambda$. 
Within these time steps, we can bound the subsidies by $\xi'$ for every $i \in I_{x_t^0}$, while for the time steps outside of this range, we use the trivial bound $1+\xi$, which is the maximum $L_\infty$ distance between any two points in $\neigh_{\xi'}(\calB)$.
(It must be that $\calV \subseteq \neigh_{\xi'}(\calB)$, since otherwise $\calV = \tPhi(\calV)$ is not possible.)

This gives the following bound:
\begin{align*}
\| Q^{\pi,\rho}(\sigma;s, \va) - Q^{\pi,\rho, \tilde{\rr}}(\sigma; s, \va) \|_\infty 
&\le \sum_{\ell = t+1}^{t+\lambda} \gamma^{\ell-t} \cdot \xi' + \sum_{\ell = t+\lambda + 1}^{\infty} \gamma^{\ell-t} \cdot (1+\xi) \\
&\le \xi'/(1- \gamma) + (1+\xi) \cdot \gamma^\lambda / (1 - \gamma) \\
&< 2\xi'/(1- \gamma) \\
&\le \delta/2,
\end{align*}
where we used the fact that $(1+\xi) \cdot \gamma^\lambda < \xi'$ (\Cref{eq:lambda}).
Similarly, we can also bound the difference between the V-values by $\epsilon$. 
As a result, $\pi$ is a $\delta$-CE and it induces some $\tilde{\vv}$ at $s_\init = x_\init^0$ such that $|\tilde{v}_0 - v_0| \le \epsilon$.
This completes the proof.
\end{proof}

\thmmainalgcturn*

\begin{proof}
This can be proven by replicating the arguments in \Cref{thm:main-alg}.
Notably, the following upper bounds are used in the analysis:
the size $|X| = |S|^{\lambda+1}$ of the meta-state space; 
the size $|A|^{c+1}$ of the action space of the acting players (including $c$ agents and the principal); and
the number $(2/\xi+2)^{(\lambda+1)\cdot c}$ of grid points to be considered, which is due to the upper bound $(\lambda+1)\cdot c$ on $I_\xx$ for each $\xx \in X$.
In the stated time complexity, the term $2/\xi+2$ is further simplified to $1/\xi$ given the $\poly(\cdot)$ operator.
\end{proof}

\end{document}